\documentclass[letterpaper,11pt]{article}

\usepackage{dsfont}
\usepackage{appendix}
\usepackage{lipsum}
\usepackage[colorlinks]{hyperref}
\pdfstringdefDisableCommands{%
  \def\alpha{alpha}%
  \def\({}%
  \def\){}%
  \def\texttt#1{<#1>}%
}
\usepackage[auth-lg]{authblk}
\usepackage{tcolorbox}
\usepackage{kvoptions}
\usepackage{xparse}
\usepackage{color}
\usepackage{etoolbox}
\usepackage{paralist}
\usepackage[framemethod=TikZ, xcolor={RGB,table}]{mdframed}
\usepackage{mathtools}
\usepackage{todonotes}
\usepackage{xcolor}  
\usepackage{url}
\usepackage{amsmath}

\usepackage{fancyhdr}
\usepackage{colonequals}
\usepackage{yfonts}
\usepackage{graphicx}
\usepackage{amstext,amssymb,amsmath,amsthm}
\usepackage[nice]{nicefrac}

\usepackage{mathrsfs}
\usepackage{caption}
\usepackage{centernot}
\usepackage{tikz}
\usetikzlibrary{arrows}
\usepackage{microtype}
\usepackage{algorithm,algpseudocode}
\usepackage[flushmargin,norule]{footmisc}
\usepackage[letterpaper,margin=1in]{geometry}
\usepackage{dsfont}
\usepackage{tcolorbox}

\usetikzlibrary{automata,positioning}

\usepackage{thmtools}
\usepackage{thm-restate}
\usepackage{cleveref}
\usepackage{enumitem}
 \usepackage{optidef}

\tcbuselibrary{theorems}

\captionsetup[figure]{labelfont=bf}

\hypersetup{linkcolor=blue,citecolor=blue,urlcolor=blue}

\pagestyle{fancy}
\rhead{}
\chead{}
\lhead{}
\tcbuselibrary{listings,breakable}
\allowdisplaybreaks

\linespread{1.1}

\newcommand{\last}[2]{\textbf{last}^{#1}(#2)}

\DeclareMathOperator*{\Exp}{\ensuremath{{\mathbb{E}}}}
\DeclareMathOperator*{\Prob}{\ensuremath{\textnormal{Pr}}}
\renewcommand{\Pr}{\Prob}

\newcommand{\prob}[1]{\Pr\bracket{{#1}}}
\newcommand{\card}[1]{\left|#1\right|}
\newcommand{\paren}[1]{\left ( #1 \right )}
\newcommand{\bracket}[1]{\left [ #1 \right ]}
\newcommand{\expect}[1]{\Exp\bracket{#1}}

\newcommand{\set}[1]{\ensuremath{\left\{ #1 \right\}}}

\newcommand{\eps}{\varepsilon}

\theoremstyle{definition}
\newtheorem{definition2}{Definition}
\theoremstyle{definition}
\newtheorem{example2}[definition2]{Example}
\theoremstyle{definition}

\theoremstyle{definition}
\newtheorem{theorem2}[definition2]{Theorem}
\theoremstyle{definition}
\newtheorem{lemma2}[definition2]{Lemma}
\theoremstyle{definition}
\newtheorem{corollary2}[definition2]{Corollary}
\theoremstyle{definition}
\newtheorem{observation2}[definition2]{Observation}
\theoremstyle{definition}
\newtheorem{claim2}[definition2]{Claim}
\theoremstyle{definition}
\newtheorem{note2}[definition2]{Note}
\theoremstyle{definition}
\newtheorem{remark2}[definition2]{Remark}
\theoremstyle{definition}
\newtheorem{research2}[definition2]{Research Direction}
\theoremstyle{definition}
\newtheorem{conjecture2}[definition2]{Conjecture}
\theoremstyle{definition}
\newtheorem{proposition}[definition2]{Proposition}
\theoremstyle{definition}
\usepackage{enumitem}
\setenumerate[0]{label=(\alph*)} 

\newenvironment{tbox}{\begin{tcolorbox}[
		enlarge top by=3pt,
		enlarge bottom by=3pt,
		boxsep=0pt,
		left=4pt,
		right=4pt,
		top=10pt,
		arc=0pt,
		boxrule=1pt,toprule=1pt,
		colback=blue!2,
		colframe=blue,
		]
	}
{\end{tcolorbox}}

\newenvironment{tbox2}{\begin{tcolorbox}[
		enlarge top by=3pt,
		enlarge bottom by=3pt,
		breakable,
		boxsep=0pt,
		left=4pt,
		right=4pt,
		top=10pt,
		arc=0pt,
		boxrule=1pt,toprule=1pt,
		colback=blue!2,
		colframe=blue,
		]
	}
	{\end{tcolorbox}}

\newtheorem{mdalg}{Algorithm}

\newcommand{\erase}[1]{}

\newcommand{\gamestate}{\mathcal{S}}

\newcommand{\timesteps}{T}
\newcommand{\game}{\Phi}
\newcommand{\final}{m}
\newcommand{\phirandom}{\phi^R}
\newcommand{\phidense}{\phi^D}

\newcommand{\obl}{{\bf obl}}
\definecolor{celadon}{rgb}{0.67, 0.88, 0.69}

\setlength{\parskip}{3pt}
\title{Randomized Greedy Online Edge Coloring Succeeds for Dense and Randomly-Ordered Graphs}

\author[1]{Aditi Dudeja\thanks{aditi.dudeja@plus.ac.at. Supported by Austrian
Science Fund (FWF): P 32863-N and ERC grant agreement 947702.}}
\affil{University of Salzburg}
\author[2]{Rashmika Goswami \thanks{rg894@scarletmail.rutgers.edu.}}
\author[2]{Michael Saks \thanks{saks@math.rutgers.edu.}}
\affil{Rutgers University}

\date{}
\begin{document}
\begin{titlepage}
\maketitle
\begin{abstract}
     Vizing's theorem states that any graph of maximum degree $\Delta$ can be properly edge colored with at most $\Delta+1$ colors. In the online setting, it has been a matter of interest to find an algorithm that can properly edge color any graph on $n$ vertices with maximum degree $\Delta = \omega(\log n)$ using at most $(1+o(1))\Delta$ colors. Here we study the na\"{i}ve random greedy algorithm, which simply chooses a legal color uniformly at random for each edge upon arrival. We show that this algorithm can $(1+\epsilon)\Delta$-color the graph for arbitrary $\epsilon$ in two contexts: first, if the edges arrive in a uniformly random order, and second, if the edges arrive in an adversarial order but the graph is sufficiently dense, i.e., $n = O(\Delta)$. Prior to this work, the random greedy algorithm was only known to succeed in trees.
     
     Our second result is applicable even when the adversary is \emph{adaptive}, and therefore implies the existence of a deterministic edge coloring algorithm which $(1+\epsilon)\Delta$ edge colors a dense graph. Prior to this, the best known deterministic algorithm for this problem was the simple greedy algorithm which utilized $2\Delta-1$ colors.
\end{abstract}
 \thispagestyle{empty}
\newpage
\end{titlepage}

\tableofcontents
\newpage

\section{Introduction}

The edge coloring problems for graphs is to assign colors to the edges of a given graph so that any two edges meeting at a vertex are assigned different colors. Trivially, the number of colors needed is at least the maximum vertex degree $\Delta$. A theorem of Vizing's states that every graph can be properly edge colored using $\Delta+1$ colors. Vizing's proof is constructive, and gives an algorithm to $\Delta+1$ color a graph with $n$ vertices and $m$ edges in $O(mn)$ time (see \cite{MisraG92}). 

Our focus is on the edge coloring problem in the online setting, which was first introduced by Bar-Noy, Motwani and Naor \cite{Bar-NoyMN92}. Starting from the empty graph on vertex set $V$, edges of the graph are revealed one at a time, and the algorithm must irrevocably assign colors to the edges as they arrive. The simplest online edge coloring algorithms are greedy algorithms, which color each arriving edge by some previously used color whenever it is possible to do so. Since every arriving edge touches at most $2\Delta-2$ existing edges, any greedy algorithm uses at most $2\Delta-1$ colors. In their paper, \cite{Bar-NoyMN92} showed that for $\Delta=O(\log n)$, no deterministic online algorithm can guarantee better than $2\Delta-1$ coloring and thus greedy algorithms are optimal in this case. They also showed that for $\Delta=O(\sqrt{\log n})$, no randomized online algorithm can achieve a better than $2\Delta-1$ coloring. Furthermore, when $\Delta=\sqrt{n}$, the results of \cite{CohenPW19,CohenW18} combine to show that no randomized online coloring algorithm can color using $\Delta+o(\sqrt{\Delta})$ colors. Thus, most of the focus has been designing algorithms for $\Delta=\omega(\log n)$ using $(1+o(1))\Delta$ colors.  There has been steady progress on this problem \cite{AggarwalMSZ03,BhattacharyaGW21,SaberiW21,KulkarniLSST22,BlikstadSVW23}, culminating in a recent work of \cite{BlikstadSVW24}, which presents a randomized algorithm that edge colors an online graph using $\Delta+o(\Delta)$ colors. 

In this paper, we investigate the \emph{randomized greedy algorithm}  which is a natural variation of the greedy algorithm. Given a set $\Gamma$ of colors, the randomized greedy algorithm $\mathcal{A}$ on input the online graph $G$ chooses the color of each arriving edge uniformly at random from the currently allowed colors for that edge, and leaves the edge uncolored if no colors are allowed. The algorithm is said to succeed if every edge is colored. The example in Figure~\ref{figure:tree} shows that the algorithm will fail if $|\Gamma|=\Delta+o(\sqrt{\Delta})$ (because the set of colors not
used by the first $\Delta-1$ edges is likely to be disjoint from the set
of colors not used by the second $\Delta-1$ edges.)
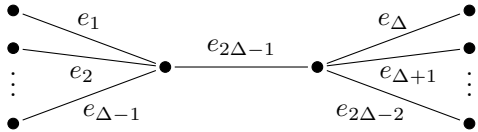
\begin{figure}[!ht]
\begin{center}\begin{tikzpicture}
    \filldraw (-2,.75) circle(2pt) node(w1)[]{};
    \filldraw (-2,.25) circle(2pt) node(w2)[]{};
    \filldraw (-2,-.75) circle(2pt) node(w3)[]{};
    \filldraw (0,0) circle(2pt) node(u)[]{};
    \filldraw (2,0) circle(2pt) node(v)[]{};
    \filldraw (4,.75) circle(2pt) node(x1)[]{};
    \filldraw (4,.25) circle(2pt) node(x2)[]{};
    \filldraw (4,-.75) circle(2pt) node(x3)[]{};
    \draw (-2,-.125) node()[]{$\vdots$};
    \draw (4,-.125) node()[]{$\vdots$};
    \draw (w1)--(u) node()[above, midway]{$e_1$};
    \draw (w2)--(u) node()[below, midway,xshift = -1mm]{$e_2$};
    \draw (w3)--(u) node()[below, midway, xshift = 3mm]{$e_{\Delta-1}$};
    \draw (u)--(v) node()[above, midway]{$e_{2\Delta-1}$};
    \draw (x1)--(v) node()[above, midway]{$e_{\Delta}$};
    \draw (x2)--(v) node()[below, midway,xshift = 2mm]{$e_{\Delta+1}$};
    \draw (x3)--(v) node()[below, midway,xshift = -3mm]{$e_{2\Delta-2}$};
\end{tikzpicture}\end{center}
\caption{A simple example to illustrate that if $\card{\Gamma}=\Delta+o(\sqrt{\Delta})$, then $\mathcal{A}$ likely fails.}
\label{figure:tree}
\end{figure}

In their paper, \cite{Bar-NoyMN92} conjectured that given any online graph $G$ with $\Delta=\omega(\log n)$ the randomized greedy algorithm $\mathcal{A}$ succeeds in $\Delta+o(\Delta)$ edge coloring $G$ with high probability. This algorithm is simple and natural, but
subsequent researchers (e.g. \cite{KulkarniLSST22,SaberiW21,amouzandeh2023sigact,BahmaniMM10})   have noted that it seems difficult to analyze.  Prior to the present paper, the  analysis was only done for the special case of trees (see \Cref{analysistrees} below). Our first result gives an analysis of this algorithm for the case of random-order arrivals. In this setting, an adversary can pick a worst-case graph but the edges of the graph arrive in a random order.  (Recall that a graph with no multiple edges
is said to be \emph{simple}.)

\begin{theorem2}[Informal version of \Cref{thm:randomordermain}]({\bf Random order case})
    Let $\epsilon>0$ be a constant. The algorithm $\mathcal{A}$, when given any simple graph $G$ of maximum degree $\Delta=\omega(\log n)$, whose edges are presented in a uniformly random order, edge colors $G$ with $(1+\epsilon)\Delta$ colors with high probability. 
\end{theorem2}

Our second result applies to the setting where the adversary is \emph{adaptive}. This means that the adversary does not have to decide the graph \emph{a priori}. They can instead decide the subsequent edges of graph based on the coloring of the prior edges. We call such a graph \emph{adaptively chosen}. 

\begin{theorem2}[Informal version of \Cref{thm:main}]\label{thm:denseinformal}({\bf Dense case})
Let $\epsilon>0$, $M \geq 1$ be any constants. Suppose $G$ is an \emph{adaptively chosen} simple graph with maximum degree $\Delta$ and $n \leq M\Delta$. Then, $\mathcal{A}$ succeeds in $(1+\epsilon)\Delta$-edge coloring $G$ with high probability. 
\end{theorem2}
To clarify the setting of \Cref{thm:denseinformal}, we note that  prior results (as far as we know) assumed an \emph{oblivious adversary} that chooses the graph $G$ and the edge-arrival order but must fix \emph{both} of these before their online algorithm receives any input. In contrast, an \emph{adaptive} adversary builds the input as $\mathcal{A}$ runs and may choose each edge depending on
the coloring of the graph so far.

For the case of online algorithms, \cite{Ben-DavidBKTW94} established a connection between randomized algorithms against an adaptive adversary and deterministic algorithms. Their result states that for a given online problem, if there is an $\alpha$-competitive randomized algorithm against an adaptive adversary, then there exists a $\alpha$-competitive deterministic algorithm. Exploiting this connection, we get the following corollary:

\begin{corollary2}
\label{cor:deterministic}
Let $\epsilon > 0, M  \geq 1$ be any constants.  For $\Delta$ sufficiently large and $n\leq M \Delta$ there is a deterministic online algorithm
that given 
any \emph{adaptively chosen} simple graph with $n$ vertices and maximum degree $\Delta$
will produce a valid coloring using $(1+\epsilon)\Delta$ colors.
\end{corollary2}

Since the above-mentioned previous algorithms for online edge-coloring were applicable against an \emph{oblivious adversary}, they did not have any implications to the deterministic setting. An independent and concurrent result, \cite{Blikstad24}, presents a deterministic online algorithm for bipartite edge-coloring under one-sided vertex arrivals that achieves competitive ratio $\frac{e}{e-1}+o(1)$ for all $\Delta = \omega(\log n)$. As far as we know, the result of \cite{Blikstad24} and \Cref{cor:deterministic} are the first results for a general class of graphs that give a deterministic online algorithm using fewer than  $2\Delta-1$ colors.

\paragraph{Related Work On Edge Coloring.}
The online edge coloring problem is closely related to the online matching problem. The relationship in one direction is quite clear: every color class in a proper edge coloring of a graph $G$ is a matching. In the other direction, \cite{SaberiW21} showed in order to solve the online edge coloring problem, it is sufficient to solve the online matching problem. More formally, their result states that if one can come up with an online matching algorithm that matches every edge with probability $1/\alpha\Delta$, then this can be turned into a $(\alpha+o(1))$-competitive online edge coloring algorithm. Several recent papers \cite{CohenPW19,SaberiW21,BlikstadSVW23,BlikstadSVW24,KulkarniLSST22}, with the exception of \cite{BhattacharyaGW21} have exploited this connection, and designed online matching algorithms which match each edge with some suitable probability. In contrast, in this paper, our focus is on the randomized greedy algorithm and we make some progress towards analysing this natural random process. Since the result of \cite{BhattacharyaGW21} is also in the random order setting but analyses a different algorithm, we discuss that further in \Cref{sec:ourapproach}.

\paragraph{Edge Coloring in Other Computational Models.}
The edge coloring problem has been considered in numerous settings. In the standard (offline) computational model, it is NP-hard to distinguish whether a graph is $\Delta$ or $\Delta+1$ colorable \cite{Holyer81a}, so attention has focused on algorithms that color in at most $\Delta+1$ colors.
As mentioned before, Vizing's proof 
gives such an  algorithm that runs in $O(mn)$ time. This was  improved to $O(m\cdot \min\set{\Delta\cdot \log n, \sqrt{n\cdot \log n})}$  by \cite{Arj82,GabowNKLT} and then to $O(m\cdot \min\set{\Delta\cdot \log n, \sqrt{n})}$ by \cite{Sinnamon19}. These bounds were improved in a series of works \cite{BhattacharyaCCSZ24,Assadi24,BhattacharyaCSZ24}, culminating in \cite{AssadiBBCSZ24} which gives an $O(m\log \Delta)$ time algorithm for $\Delta+1$ edge coloring a graph. There has also been a considerable effort to study edge coloring in other computational models: such as the dynamic model \cite{BarenboimM17,BhattacharyaCHN18,DuanHZ19,Christiansen23,BhattacharyaCPS24,Christiansen24}, the distributed model \cite{PanconesiR01,ElkinK24,FischerGK17,GhaffariMU18,BalliuBKO22,ChangHLPU20,Bernshteyn22,Christiansen23,Davies23}, and the streaming model \cite{BehnezhadDHKS19,BehnezhadS23,ChechikDZ23,GhoshS23}.

\subsection{Our Approach}
\label{sec:ourapproach}
We start with describing $\mathcal{A}$ in more detail. Fix the color set $\Gamma$.  
The edges of $G=(V,E)$ arrive in some order $e_1,e_2,...$ and so on. 
At all points during the algorithm,
for each vertex $v$, the \emph{free set at $v$} is the set of colors
not yet used to color an edge incident on $v$.  The free set of $v$ when
edge $e_i$ arrives is denoted $F_{i-1}(v)$. When edge $e_i = (u,v)$ arrives, algorithm $\mathcal{A}$ uniformly chooses a color $c$ from $F_{i-1}(u) \cap F_{i-1}(v)$ and colors $e_i$ with that color. If $F_{i-1}(u) \cap F_{i-1}(v) = \emptyset$ when $e_i$ arrives then $e_i$ is left uncolored and the algorithm continues.  In this case the algorithm is said to \emph{fail on $e_i$}.

As mentioned in the introduction, this algorithm was previously analyzed
for trees. Since the tree case gives intuition for our proof strategy for general graphs, we give a proof of this theorem for completeness.

 \begin{theorem2}\label{analysistrees}\cite{FederMP}
 Suppose the online graph $G$ is a tree and $\Delta=\omega(\log n)$. If $\card{\Gamma}=\Delta+2\sqrt{\Delta\cdot \log n}$, then $\mathcal{A}$ succeeds with probability at least $1-1/n$.
\end{theorem2}

\begin{proof}

We prove an upper bound on  the probability
that the algorithm fails to color the $i$th edge $e_i=(u,v)$ conditioned on not having
failed previously.   Let $d_{i-1}(u)$, respectively $d_{i-1}(v)$, be the degree of $u$ and $v$ prior
to the arrival of $e$ and let $k(u)=(1+\epsilon)\Delta-d_{i-1}(u)$ and $k(v)=(1+\epsilon)\Delta-d_{i-1}(v)$. Conditioned on the algorithm having succeeded thus far, $\card{F_{i-1}(u)}=k(u)$ and $\card{F_{i-1}(v)}=k(v)$.  By symmetry $F_{i-1}(u)$ and $F_{i-1}(v)$ are uniformly random from $\binom{\Gamma}{k(u)}$, $\binom{\Gamma}{k(v)}$, respectively.  Since $u$ and $v$ are in separate components of the tree prior to the arrival of $e_i$, $F_{i-1}(u)$ and $F_{i-1}(v)$ are independently sampled from $\Gamma$, even if we condition on successful coloring prior to $e_i$. Therefore, conditioned on  successful coloring  prior to $e_i$, 
\begin{align*}
\prob{\text{$e_i$ not colored}}&\le \prob{F_{i-1}(u)\cap F_{i-1}(v)=\emptyset}\\
& =\sum_{S \in \binom{\Gamma}{k(u)}}\prob{S\cap F_{i-1}(v)=\emptyset\mid F_{i-1}(u)=S}\cdot \prob{F_{i-1}(u)= S}&\\
& \overset{(\ref{independence})}{=} \sum_{S \in \binom{\Gamma}{k(u)}}\prob{S \cap F_{i-1}(v) = \emptyset}\cdot \prob{F_{i-1}(u)= S}\\
&\leq \sum_{S \in \binom{\Gamma}{k(u)}}\paren{1-\frac{k(u)}{\card{\Gamma}}}^{k(v)}\cdot \prob{F_{i-1}(u)= S}\\
&\leq\exp\paren{-\frac{k(u)k(v)}{|\Gamma|}}\\
&\overset{(\ref{kukv})}{\leq} \exp\paren{-\frac{\epsilon^2\cdot \Delta}{(1+\epsilon)}}\\
&\overset{(\ref{epsvalue})}{=} O(n^{-3})
\end{align*}
where
  \begin{inparaenum}[$(i)$]
    \item\label{independence}follows from independence of $F_{i-1}(u)$ and $F_{i-1}(v)$,
    \item\label{kukv}follows from $k(u)\geq \epsilon\Delta$ and $k(v)\geq \epsilon\Delta$, and 
    \item\label{epsvalue}is implied by the fact that $\epsilon\geq 2\sqrt{\frac{\log n}{\Delta}}$.
  \end{inparaenum}
Taking a union bound over all edges $e_i$, $\mathcal{A}$ 
fails with probability $O(n^{-1})$.
\end{proof}

Taking inspiration from the proof for trees, we aim to show that for any edge $(u,v)$, the sets $F_{i-1}(u)$ and $F_{i-1}(v)$, though not independent, are close to independent.  It is difficult to estimate the dependencies of the sets $F_{i-1}(u) \cap F_{i-1}(v)$ directly. Instead we 
aim to show that for {\em any} fixed  subset $S$ of colors, and for any time step $i$, 
the set $F_{i-1}(v)$ ``looks  random'' with respect to $S$: 
\begin{equation}\label{eqn:independenceS}
|S \cap F_{i-1}(v)| \approx \frac{|S||F_{i-1}(v)|}{(1+\epsilon)\Delta}.
\end{equation}
The right hand side is the expected size of $|S \cap F_{i-1}(v)|$
if $F_{i-1}(v)$ was a uniformly random subset of $\Gamma$ of size $|F_{i-1}(v)|$.
It is fairly obvious that \Cref{eqn:independenceS} cannot hold for every set $S$,  vertex $v$, and
time step $i$ (for example, $S=\Gamma-F_{i-1}(v)$).
We will show that for each $S$, and for all but constantly many vertices $v$, \Cref{eqn:independenceS} holds for all time steps $i$ (up to some small error).

The intuition for our proof is as follows.  The set  $F_i(v)$ evolves over time as colors are removed one by one. If, at each step $j$, the color removed was chosen uniformly from $F_{j-1}(v)$, then at any time $i$, the set $F_i(v)$ would be a uniform set from $\Gamma$ and would likely satisfy \Cref{eqn:independenceS}. In fact, since we are only interested in the {\em size} of $F_i(v) \cap S$ appearing to be random, and not the set itself, in order to satisfy \Cref{eqn:independenceS} with high probability, it is enough that the probability the color chosen at step $j$ is in $S$ is the same as it would be if the color was chosen uniformly from $F_{j-1}(v)$. More explicitly, we would like to have
\begin{align}\label{eqn:intersectionchange}
\frac{|S \cap F_{j-1}(v) \cap F_{j-1}(w)|}{|F_{j-1}(v) \cap F_{j-1}(w)|} \approx \frac{|S \cap F_{j-1}(v)|}{|F_{j-1}(v)|},
\end{align}
where the left hand side is the probability that the color selected for $e_j=(v,w_j)$ belongs to $S$. 
Note that if we set $S_j'= S \cap F_{j-1}(v)$ and $S_j'' = F_{j-1}(v)$, and in addition we knew that \Cref{eqn:independenceS} was satisfied for vertex $w_j$ and each of the color sets $S'$ and $S''$,  then \Cref{eqn:intersectionchange} would hold (with a marginal increase in error.)  This idea forms the basis of an inductive proof of our main lemma, \Cref{lem:mainlemma}.

As mentioned,  we will show that with high probability, for each set $S$,  the number of vertices $v$ for which
\Cref{eqn:independenceS} fails is bounded by a constant. Ideally, we would like to use this to show that for all but constantly many $e_j = (v,w_j)$ adjacent to $v$, \Cref{eqn:intersectionchange} holds. However, note that the set whose intersection we are trying to control in \Cref{eqn:intersectionchange} is a moving target: for each neighbor $w_j$ of $v$, we would like its intersections with the free set $S_j^{''} = F_{j-1}(v)$ and $S_j^{'}= S \cap F_{j-1}(v)$  to be what we expect. Unfortunately, this does not exclude the possibility that for every $j$, $w_j$ is one of the exceptional vertices for which \Cref{eqn:independenceS} fails for $S_j^{'}$ or $S_j^{''}$. This would prevent us from reasoning that exceptions to \Cref{eqn:intersectionchange} are only constantly many neighbors of $v$. 

To get around this we note that for fixed $v$, as $j$ varies the sets $F_j(v)$ are closely related - $F_j(v)$ will differ from $F_{j+t}(v)$ by at most $t$ colors. So for each vertex $v$, 
we will partition the time steps into a (large) constant number of \emph{$v$-phases} so that
each $v$-phase contains only a small fraction of the edges incident on $v$. Then for each $v$-phase $r$
we will approximate $F_{j-1}(v)$ for all time steps $j$ of that $v$-phase by the set $A_{j-1}(v)$
which is defined to be the free set of $v$ at the beginning of the $v$-phase that contains $j$. Since $F_{j-1}(v)$ will not differ too much from $A_{j-1}(v)$, we will argue that if we replace $F_{j-1}(v)$ with $A_{j-1}(v)$ in \Cref{eqn:intersectionchange} and the new equation holds for all but constantly many $e_j = (v,w_j)$, then the original equation must also hold for all but constantly many neighbors $w_j$. This allows us to carry out the reasoning of the previous paragraph. That is, now we have constantly many possibilities for the $S'_j=A_{j-1}(v)\cap S$ and $S''_j=A_{j-1}(v)$, and for each of these possibilities, constantly many vertices do not satisfy \Cref{eqn:independenceS}. Thus, for all except constantly many neighbours $w_j$ of $v$, we can hope for \Cref{eqn:intersectionchange} to hold.

To this end, our analysis will consider a modified version of the algorithm which we denote by $\mathcal{A}'$.
This modified version produces the same distribution over colorings as $\mathcal{A}$ but
it has the advantage that it explicitly reflects the partition of time steps into $v$-phases for each $v$,
and the approximation of $F_{j-1}(v)$ by $A_{j-1}(v)$
described above.
The modified algorithm works as follows.  When edge $e_j=(u,v)$ arrives
we first sample a color uniformly from $A_{j-1}(u) \cap A_{j-1}(v)$ (rather than
from $F_{j-1}(u) \cap F_{j-1}(v)$).  If the selected color is valid (which means
that it belongs to $F_{j-1}(u) \cap F_{j-1}(v)$) then we use it, otherwise we discard it
and then sample uniformly from $F_{j-1}(u) \cap F_{j-1}(v)$. We will have the following chain of comparisons: first, that the number of times $\mathcal{A}'$ initially selects an invalid color and resamples is small (a constant fraction of times), and thus the colors chosen for $\mathcal{A}$ are close to the colors initially chosen by $\mathcal{A}'$. Second, the inductive assumption that \Cref{eqn:intersectionchange} holds for all but constantly many edges $e_j = (v,w_j)$ adjacent to $v$, and thus the colors initially chosen by $\mathcal{A}'$ hit any given set $S$ about as often as a uniform choice from $F_{j-1}(v)$ would. Finally, we show that the uniform choice from $F_{j-1}(v)$ would likely satisfy \Cref{eqn:independenceS}. \Cref{prop:deltatomartingales} quantifies the error added in each step of this chain in terms of martingale sums defined in \Cref{section:differencesequences}.

\paragraph{Comparison to Prior Work.}
Despite the random greedy algorithm being the most natural candidate to consider for this problem, we are not aware of any prior analysis its behavior. Here we discuss the unique challenges of analyzing this algorithm in comparison to prior work in this area, and the ideas we use to address these challenges.

Prior work has focused on designing and analyzing other algorithms.
The fundamental difference between the random greedy algorithm and previously studied algorithms for online edge coloring is the total dependence of each step on all previous steps of the algorithm. Although at first glance, our algorithm $\mathcal{A}'$ may appear similar to previous algorithms such as that of \cite{BhattacharyaGW21} or \cite{KulkarniLSST22} in its use of phases, the phases in these works play a fundamentally different role in their analyses than they do here. 

One key difference is that these previous algorithms initially split the color set into palettes, and the algorithm specifies which palette to select the color from for each edge.
In \cite{BhattacharyaGW21}, each phase represents a window during which the algorithm enforces that edges make independent choices about their preliminary colors. Each edge participating in a given phase picks a preliminary color from the palette. These choices are made independently. If the color chosen by an edge is unique in its neighborhood, then this is made permanent. Otherwise, an edge is considered to have failed. All failed edges are colored using a separate palette. On the other hand, in \cite{KulkarniLSST22}, phases are used to reduce an instance of online edge coloring to independent instances of online edge coloring on trees (which is considerably simpler). Different palettes are used for different phases.

In contrast, under the simple greedy strategy of algorithm $\mathcal{A}$, dependencies between different choices of edges are inherent and pervasive.  We introduce algorithm $\mathcal{A}'$  (which produces the exact
same distribution on colorings as $\mathcal{A})$ as an analytical tool for
analyzing $\mathcal{A}$.  The phase structure of $\mathcal{A}'$ does not eliminate dependencies within or between phases but does allow us to bound their impact.

Algorithm $\mathcal{A}'$ is restricted to using the same palette throughout the course of the algorithm; 
the role of a phase is purely analytical and is used to allow us to carry out the inductive argument described before.
Although the initial color for every edge in the phase is chosen independently from the palette, once a collision occurs, the final colors used are dependent on the previous choices made during the phase. It is not too hard to see that collisions end up occurring a constant fraction of the time. Therefore, even within a phase a large fraction of our choices are dependent. Thus, it is no longer clear that the colors removed from a palette in a phase appear consistent with an independent coloring during that phase. These dependencies make it difficult to analyze the algorithm using standard tools. 

In order to address this, we devise a metric, $\delta$, to measure how ``independent" color sets appear with respect to each other. We then decompose this metric into two main terms: one which reflects the natural variance in independent choices of color sets, and one which accounts for the build up of error due to the restrictions of the algorithm. We use martingale tail bounds to control the behavior of the first term,  These bounds apply in the worst case setting provided that $\Delta=\omega(\log n)$,  despite the
dependencies between steps. To control the second term we use an inductive argument, and it is here that
we require some additional assumptions, either random order (Proposition \ref{prop:boundonepsilonrandom}) or
adverserial order but sufficiently dense (Proposition \ref{prop:boundonepsilondense}).
These ideas provide a framework for understanding the behavior of a deeply dependent process and comprise the main technical contribution of this paper.

\paragraph{Paper Outline.} The paper is be outlined as follows. In \Cref{sec:prelims}, we state give notation and state our results formally. Additionally, we give more intuition about our proof in \Cref{sec:proofidea}. In \Cref{section:errorbounds}, we give crucial properties of the phases we described in \Cref{sec:ourapproach}. Our main tool in making the argument in \Cref{sec:ourapproach} will be martingales, and in \Cref{sec:martingales} we will prove a novel martingale concentration lemma (\Cref{lem:generalconcentration}) that will be crucial to our analysis. In \Cref{section:wellbehavedcolorings} we describe some important properties of colorings induced by our algorithm $\mathcal{A}$ and in \Cref{section:mainlemma} we prove our main lemma. In \Cref{sec:FurtherWork}, we give some extensions: \Cref{subsection:improvement}, we show strengthened versions of our main theorems (\Cref{thm:randomordermain} and \Cref{thm:main}). \Cref{subsection:densecase} discusses the degree assumption in \Cref{thm:main}. It also presents some starting points for improving the degree assumption. Finally, in \Cref{subsection:StaticAlgo} we also give an implementation of the randomized greedy algorithm in the static setting that has a runtime of $m\cdot 2^{O(1/\epsilon^2)}$ with high probability.

\section{Preliminaries}
\label{sec:prelims}

\subsection{Online Coloring}
Online coloring can be described as a two player game between \emph{Builder} and \emph{Colorer}.  The game $\game(n,\Delta,\epsilon)$ is parameterized by the number of vertices $n$, degree
bound $\Delta$ and $\epsilon>0$.   The game starts with the empty graph on $n$ vertices and a color
set $\Gamma$ of size
$\lceil(1+\epsilon)\Delta\rceil$. The game lasts for $m=\lfloor\Delta n/2\rfloor$ steps. In each time step, Builder 
selects an edge to add to the graph, subject to the restriction that
all vertex degrees remain below $\Delta$. Colorer then assigns a color to the edge
from $\Gamma$ so that the overall coloring remains valid.  If this is impossible (i.e.
any color choice will invalidate the coloring) the edge is left uncolored. 
Colorer wins
if every edge  is successfully colored. For convenience, we also allow Builder to add {\em null edges}; such edges are not adjacent to any vertex in the graph and may be assigned any color without affecting whether or not the coloring is valid. This has the advantage that we can 
fix the  number of steps to $m$, and if Builder gets stuck (is unable to add
an edge without violating the degree bound) then he can add null edges for the remaining steps.

The \emph{state} of the game after $i$ steps, denoted $\mathcal{S}_i$, consists of the list of the first
$i$ edges chosen by Builder and the coloring chosen by Colorer.  A strategy for Builder is a function which given any game state
$\gamestate_i$ determines the next edge to be added or terminates the game.  A strategy for Colorer
is a function which given the game state $\gamestate_i$ and an additional edge $e$ assigns a color to $e$ (or leaves $e$ uncolored if no color can be assigned.)

We say that a color $c$ is \emph{free for $v$ at step $i$} if among the first $i-1$
edges, no edge that touches $v$ is colored by $c$.
If the $i$th edge chosen by Builder is $(u,v)$ then the set of colors that can be used
to color $(u,v)$ 
is the intersection of the free set for $v$ at step $i$ and the free set
for $u$ at step $i$. If that set is empty then the edge is necessarily left uncolored. 

We are interested in analyzing the behavior of the  
\emph{randomized greedy strategy} for Colorer, denoted $\mathcal{A}$: for each new edge $e_i=(u,v)$,
if the intersection of the free set for $u$ and the free set for $v$ is nonempty then
choose the color for $e_i$ uniformly at random from that set.

During the game Builder produces a graph $G$ together with an ordering of its edges
which we view as a one-to-one function $\sigma:E(G) \longrightarrow [m]$. Elements in $[m]$ to which no edge is in $E(G)$ is mapped are interpreted as \emph{null edges}.
We refer to $(G,\sigma)$ as an edge-ordered graph.  In general the edge
ordered graph produced by Builder may depend on the coloring of edges chosen
by Colorer.
A strategy of  Builder is \emph{oblivious} if the choice of edge to be added at each step
depends only on the current edge set  and not on the coloring. An oblivous strategy is
fully described by the pair $(G,\sigma)$ where the edge selected at step $i$ is $\sigma^{-1}_i$ (and
is a null edge if that is undefined).
We denote this strategy by $\obl(G,\sigma)$. We can now give a more precise formulation of our first main result.

\begin{restatable}{theorem2}{mainthmrandom}\label{thm:randomordermain}({\bf random order case})
For any constant $\epsilon\in (0,1)$ there are constants $N=N(\epsilon)$ (sufficiently large), $\gamma_1=\gamma_1(\epsilon)$ and $\gamma_2=\gamma_2(\epsilon)$ (sufficiently small) such that the following holds.  Suppose that $n$ is sufficiently large, $\Delta > N\log(n)$ and consider the edge coloring game 
$\game(n,\Delta,\epsilon)$. For any $G$ on $n$ vertices with maximum degree $\Delta$,
for all but at most a $2^{-\gamma_{1} \Delta}$ fraction of mappings $\sigma$ of $E(G)$ to $[m]$, $\mathcal{A}$  will defeat the oblivious strategy $\obl(G,\sigma)$ (i.e, produce an edge coloring for $G$)  with probability at least $1-2^{-\gamma_{2}\Delta}$.
\end{restatable}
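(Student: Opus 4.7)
The plan is to reduce to the main pseudorandomness lemma (\Cref{lem:mainlemma}), which asserts that for any fixed color set $S\subseteq \Gamma$, with high probability over the execution of $\mathcal{A}$, the quantity $|S\cap F_{i-1}(v)|$ stays close to its ``independent'' value $|S|\cdot |F_{i-1}(v)|/((1+\epsilon)\Delta)$. The random-order hypothesis will play two complementary roles: it forces vertex degrees to evolve smoothly so that the free sets always retain $\Omega(\epsilon\Delta)$ slack, and it prevents the adversary's graph from being adversarially aligned with the bounded set of ``exceptional'' vertices for which the lemma's conclusion may fail.

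The first ingredient is a degree-trajectory concentration statement. Under a uniformly random $\sigma$, the number of edges of $v$ that arrive among the first $i-1$ steps is hypergeometrically distributed with mean $(i-1)d(v)/m$. A Hoeffding tail bound for sampling without replacement, combined with a union bound over the $n$ vertices and $m$ time steps, yields: for all but a $2^{-\gamma_{1}\Delta}$ fraction of orderings $\sigma$, every degree trajectory stays within $o(\Delta)$ of its mean. Call such $\sigma$ \emph{regular}. For regular $\sigma$, whenever $e_i=(u,v)$ arrives and $\mathcal{A}$ has not yet failed, one has $|F_{i-1}(u)|,|F_{i-1}(v)|\ge \epsilon\Delta-o(\Delta)$, so each free set has enough slack for the lemma to deliver a useful bound.

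Given a regular $\sigma$, the second ingredient deduces success from the lemma. When $e_i=(u,v)$ arrives, set $S:=F_{i-1}(u)$; provided $v$ is not exceptional for $S$, the lemma gives $|F_{i-1}(u)\cap F_{i-1}(v)|=\Omega(\epsilon^{2}\Delta)$, so $\mathcal{A}$ colors $e_i$. The delicate step is handling the (bounded-size) exceptional set: one exploits that the exceptional set depends only on the history and on $S$, while under random $\sigma$ the next edge is (conditionally) uniform among the remaining edges of $G$, so the per-step probability that an exceptional vertex is an endpoint is small. A union bound over the $m$ steps controls the joint probability of failure over $\sigma$ and $\mathcal{A}$, and Markov's inequality on $\sigma$ then produces the theorem's two-parameter conclusion: all but a $2^{-\gamma_{1}\Delta}$ fraction of $\sigma$ are good, and for each good $\sigma$ the conditional failure probability is at most $2^{-\gamma_{2}\Delta}$.

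The main obstacle is proving \Cref{lem:mainlemma} itself, via the modified algorithm $\mathcal{A}'$, the $v$-phase decomposition, and the martingale control quantified by \Cref{prop:deltatomartingales}. The three-step approximation chain described in the approach section --- from $\mathcal{A}$ to $\mathcal{A}'$ by bounding resamples, from $\mathcal{A}'$'s per-step hit probability to a uniform choice from $F_{j-1}(v)$ via the inductive invariant \eqref{eqn:intersectionchange}, and from the uniform choice to the pseudorandom conclusion \eqref{eqn:independenceS} --- must be carried out with cumulative error kept small through concentration. Once the lemma is in hand, the random-order theorem follows by coupling this pseudorandomness with the degree regularity established above.
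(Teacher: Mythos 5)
There is a genuine gap, and it lies in both of the roles you assign to the random ordering.

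First, the claim that you need a regularity property of $\sigma$ to ensure $|F_{i-1}(u)|, |F_{i-1}(v)| \ge \epsilon\Delta - o(\Delta)$ is addressing a non-issue: since the degree bound is $\Delta$ and $|\Gamma| = (1+\epsilon)\Delta$, we always have $|F_{i-1}(v)| = (1+\epsilon)\Delta - d_{i-1}(v) \ge \epsilon\Delta$ deterministically for every $\sigma$. What the random ordering actually buys in the paper is the \emph{balance} property (\Cref{def:balance}): that in the fixed lockstep phase partition $\phirandom$, each $v$-phase contains at most $2\Delta/b$ edges incident on $v$. This is what feeds into the collision-probability bound (\ref{eqn:boundcollisions}) and, crucially, into the path-counting bound in \Cref{prop:boundonepsilonrandom} that establishes $\zeta$-controlled error. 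A ``degree-trajectory within $o(\Delta)$ of its mean'' statement is in the right spirit but your stated motivation for it is incorrect, and you have not connected it to the phase-partition error recursion that the proof actually relies on.

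Second, and more seriously, the proposed treatment of the bounded-size exceptional set does not work. You propose that under random $\sigma$, the per-step probability of an exceptional endpoint is small and a union bound over steps controls it. But with roughly $C$ exceptional vertices out of $n$, the probability that a uniformly random remaining edge touches one is $\Theta(C/n)$, and over $m = \Theta(n\Delta)$ steps the expected number of such encounters is $\Theta(C\Delta)$ --- not small. There is also a conditioning issue: the exceptional set $B(S)$ (with $S = F_{i-1}(u)$) is defined in terms of full $v$-phase histories, which depend on $\sigma$ beyond step $i$ and on the algorithm's later randomness, so it is not a function of $\gamestate_{i-1}$ alone; the ``next edge is conditionally uniform'' independence you invoke does not apply. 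The paper does not try to avoid exceptional vertices probabilistically at all. Instead, in the Main Lemma it observes that among the edges of any fixed $v$-phase $\ell$, at most $2C$ have an endpoint in $B(A^{\ell-1}(v)) \cup B(A^{\ell-1}(v)\cap S)$, bounds each such edge's contribution crudely by $1$, and absorbs the resulting additive $2bC$ (a constant) into the $\zeta$ error term. Similarly, in \Cref{cor:noedgeuncolored}, the contradiction is obtained by a deterministic dyadic-scale counting argument over the shrinking intersection $F_j(u)\cap F_j(v)$, again with exceptional contributions bounded by $2bC$. Your route would need a fundamentally different mechanism to control the exceptional set, and the one you sketch is quantitatively false.

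You also omit the role of \Cref{lem:coloringwell-behaved}: the Main Lemma is a \emph{deterministic} implication conditioned on the coloring being well-behaved, and the probability content of the theorem comes from bounding the three bad events. Without this intermediate step your reduction ``to the main lemma'' is incomplete, because the lemma by itself carries no probability bound.
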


Our second main result applies to arbitrary adaptive strategies of Builder. (Here we use
the word \emph{adaptive} to emphasize that Builder's choices may depend on the past coloring.)

\begin{restatable}{theorem2}{mainthmadversarial}\label{thm:main}({\bf dense case}) For any constant 
$\epsilon\in (0,1)$ and constant $M>1$, there is a constant $\gamma=\gamma(\epsilon,M)$ so that
the following holds. For sufficiently large $n$, and for $\Delta>n/M$, for any
(possibly adaptive) strategy
for the online coloring game $\game(n,\Delta,\epsilon)$, $\mathcal{A}$ wins (produces an edge
coloring of the resulting graph)
with probability $1-2^{-\gamma\Delta}$.
\end{restatable}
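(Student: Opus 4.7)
The plan is to establish the pseudo-randomness invariant \Cref{eqn:independenceS} in the following quantitative form: with probability $1 - 2^{-\Omega(\Delta)}$, for every color set $S \subseteq \Gamma$ and every vertex $v$ outside a constant-sized exceptional set $\mathcal{E}(S, v)$, the estimate $\bigl||S \cap F_i(v)| - |S|\,|F_i(v)|/|\Gamma|\bigr| \le \eta \epsilon \Delta$ holds at all time steps $i$, for some sufficiently small constant $\eta = \eta(\epsilon, M)$. Given this invariant, applying it with $S = F_{i-1}(u)$ (treating the constantly many exceptions separately, e.g.\ by charging their coloring cost to the slack) yields $|F_{i-1}(u) \cap F_{i-1}(v)| \ge \epsilon^2 \Delta/(1+\epsilon) - \eta \epsilon \Delta > 0$ at every arriving edge $(u,v)$, so $\mathcal{A}$ colors every edge and wins.

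To prove the invariant I would induct over time steps, using the equivalent algorithm $\mathcal{A}'$ described in the approach section: at step $j$, pre-sample uniformly from $A_{j-1}(u) \cap A_{j-1}(v)$; if that color is valid under $F_{j-1}$ use it, else resample from $F_{j-1}(u) \cap F_{j-1}(v)$. This two-stage scheme matches the output law of $\mathcal{A}$ but freezes $A_\cdot(v)$ throughout each $v$-phase, decoupling the per-phase analysis. For each phase of $v$ and each $S$, I would define a martingale tracking $|S \cap F_j(v)|$ minus its target. The conditional expected step-change at edge $(v, w_j)$ equals the probability that the chosen color lies in $S$; by applying the IH to $w_j$ together with the frozen sets $A_{j-1}(v)$ and $S \cap A_{j-1}(v)$, this probability matches the ideal rate $|S \cap F_{j-1}(v)|/|F_{j-1}(v)|$ up to three controlled errors: (i) the pre-sample rejection rate, which stays small because $|F \cap A|/|A|$ stays close to $1$; (ii) the number of neighbors $w_j$ inside the phase that are exceptional for the two frozen sets, which is $O(1)$ since $A_{j-1}(v)$ is frozen and $|\mathcal{E}|$ is constant by IH; and (iii) the $O(\delta \Delta)$ symmetric difference between $F_{j-1}(v)$ and $A_{j-1}(v)$, where $\delta$ is the (small constant) phase length as a fraction of $\Delta$. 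Azuma--Hoeffding on this length-$O(\Delta)$ martingale with $O(1)$ step bounds yields deviation $\eta \epsilon \Delta$ with probability $\exp(-\Omega(\eta^2 \epsilon^2 \Delta))$ for each pair $(S, v)$.

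The density condition $n \le M\Delta$ is exactly what is needed in the final union bound: there are $2^{|\Gamma|} = 2^{O(\Delta)}$ candidate subsets $S$ and $n = O(\Delta)$ vertices, so a per-event failure of $2^{-c\Delta}$ for $c$ sufficiently large (tunable through $\eta$, $\delta$, and $\epsilon$) still leaves overall failure $2^{-\Omega(\Delta)}$. The main obstacle will be closing the induction with self-consistent constants: the bound on $|\mathcal{E}(S, v)|$, the tolerance $\eta$, the phase length $\delta$, and the Azuma slack are interlocked, and the invariant at step $i+1$ must inherit the \emph{same} numerical parameters as at step $i$ rather than degrading. Balancing these quantities --- rather than any single estimate --- is the delicate heart of the argument; once a consistent numerical choice is fixed, the inductive step propagates and the conclusion of the theorem follows from the first paragraph.
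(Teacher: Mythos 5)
Your high-level plan matches the paper's: run the palette-frozen variant $\mathcal{A}'$, track the deviation of $|S\cap F_\cdot(v)|$ from the independent ideal, allow a constant-sized exceptional set per $S$, and close an induction over phases. But there are two concrete gaps that would stop the proof from closing.

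First, the per-event Azuma bound you propose cannot survive the union bound over color sets. You want to show $\Pr[\,\text{more than $C$ vertices are exceptional for $S$}\,]\le 2^{-(1+\epsilon+\gamma)\Delta}$ for every $S$, since you must afford a factor $2^{|\Gamma|}=2^{(1+\epsilon)\Delta}$. Azuma--Hoeffding over the $\le\Delta$ edges incident to one vertex gives a single-vertex failure probability of $\exp(-\Omega(\eta^2\epsilon^2\Delta))$, and because the slack $\eta$ must be a \emph{small} constant (so that the induction can use the conclusion), the exponent constant here is far less than $1+\epsilon$. To beat $2^{(1+\epsilon)\Delta}$ you need the probability that \emph{$C$ vertices fail simultaneously} to be roughly the $C$-th power of the single-vertex probability, i.e.\ $\exp(-\Omega(C\eta^2\epsilon^2\Delta))$, and then choose $C$ large. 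Those $C$ events are not independent (they are driven by the same coloring process and the same color choices), so this does not follow from Azuma plus a union bound; the paper's \Cref{lem:generalconcentration}, part 2, is a bespoke Freedman-type inequality on a single aggregate supermartingale built from all $C$ coefficient sequences at once, and it is exactly the missing engine here. Without it your final union bound simply does not go through.

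Second, the closing step ``apply the invariant with $S = F_{i-1}(u)$ and charge exceptional vertices to the slack'' is a hand-wave that hides a real difficulty. The set $F_{i-1}(u)$ is random and changes with $i$, so when you apply the invariant to $(v, S=F_{i-1}(u))$ you may land in the exceptional set for that specific $S$ at that specific time, and ``constantly many exceptions'' is about vertices per color set, not a budget you can freely pay for out of an $\epsilon\Delta$ slack. The paper handles this with a separate family of bad events $\mathcal{D}(u,v)$ tracking the martingale $\sum_i D_i(S_i)$ with $S_i=F_i(u)\cap F_i(v)$, and then \Cref{cor:noedgeuncolored} uses a doubling argument over levels $R_j$ of $|S_i|$ to rule out $F(u)\cap F(v)$ ever emptying. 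Some argument of that kind is needed; the slack-charging sketch does not substitute for it. Relatedly, the requirement that the invariant ``inherits the \emph{same} parameters'' is stronger than needed and not what happens: the paper tolerates a slowly growing error $\widehat{\epsilon}^{\;r}(v)$, and the density assumption $n\le M\Delta$ enters precisely to bound the path-counting recursion that controls this growth (via $n^t/\lfloor t/2\rfloor!$). That dependence on $M$ is where density is actually used, not merely in the final union bound over vertices.
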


In \Cref{sec:FurtherWork} we show that \Cref{thm:main} and \Cref{thm:randomordermain} can be slightly improved. The formal statements of improved versions are stated in \Cref{lem:newdensecase} and \Cref{lem:randomordernew}, respectively.

\subsection{The Algorithm \(\mathcal{A}'\)}

In order to prove the above theorems we analyze a modified strategy $\mathcal{A}'$ that against
any given Builder strategy produces exactly the same distribution over colorings as $\mathcal{A}$, but will be
easier to analyze.  For each vertex $v$, the modified algorithm will keep track of a partition
of the time steps into at most $b=b(\epsilon)$ contiguous \emph{$v$-phases}. The value of $b(\epsilon)$ is specified in \Cref{section:badevents}.

For each $v$, the $v$-phases  are numbered from 1 to $b$. 
The partition into $v$-phases is represented by a \emph{phase-partition counter} $\{\phi_i(v):i \in \{1,\ldots,t\}\}$ where $\phi_i(v)$ is the number of the $v$-phase that contains time step $i$. 
For each $i>1$ we have either $\phi_i(v)=\phi_{i-1}(v)$ (if $i-1$ and $i$ are in the same phase) or $\phi_i(v)=1+\phi_{i-1}(v)$ (if $i$ starts a new $v$-phase.)
This function is determined online, so that $\phi_i(v)$ is determined after
step $i-1$ of the game.

The phase-partition counters that we use are defined formally in \Cref{def:phasepartitionfunc}.
In the dense case (\Cref{thm:main}) for each vertex $v$, the $v$-phases  are determined
by the number of edges incident to $v$ that have arrived. For $r\geq 2$, the $r^{th}$ $v$-phase starts
with the time step where the number of edges incident
on $v$ first exceeds $(r-1)\Delta/b$. (Thus the number of edges incident on
$v$ in each $v$-phase is within 1 of $\Delta/b$.)  The phase-partition counters in this case
are denoted by $\phidense$.

In the random case (\Cref{thm:randomordermain}), the $v$-phase partition is the
same for every vertex. For $r\geq 2$, the $r^{th}$ $v$-phase starts
with the time step where the total number of edges arrived first exceeds $(r-1)m/b$. The phase-partition counters
in this case are denoted $\phirandom$.

\paragraph{Algorithm Description}In $\mathcal{A}'$, Colorer maintains for each vertex $v$, a color set $A_i(v)$ that approximates
$F_i(v)$ but remains constant during each $v$-phase. We call $A_i(v)$ the {\em palette} of $v$ at the end of time step $i$. For vertex $v$ and time $i$, we define $A_i(v)=\Gamma$ if $i+1$ belongs to the first $v$-phase and otherwise:
\begin{eqnarray*}
A_{i}(v)
& = & F_{i'}(v) \text{ where $i'$ is the final step of the $v$-phase prior to the $v$-phase containing step $i+1$.}\\
&= &\text{the set of available (free) colors $v$ at the completion 
of $v$-phase $\phi_{i+1}(v)-1$.}
\end{eqnarray*}

Note that $A_i(v) \supset F_i(v)$ for all $i$, and we think of $A_i(v)$ as an (over-)approximation
to $F_i(v)$.

\begin{definition2}[Algorithm $\mathcal{A}'$]\label{def:modifiedalg}
    Start with $A_0(v) = F_{0}(v)=\Gamma$ for all $v$. When edge $e_i = (u,v)$ arrives:
    \begin{enumerate}
        \item Choose $c$ uniformly at random from $A_{i-1}(u) \cap A_{i-1}(v)$. This is the \emph{preliminary color for $e_i$}. If $A_{i-1}(u) \cap A_{i-1}(v) = \emptyset$,  $e_i$ is left uncolored.
        \item Next, choose the final color for $e_i$:
        \begin{enumerate}[label={\roman*)}]    
            \item If $c \in F_{i-1}(u) \cap F_{i-1}(v)$, color edge $e_i$ with $c$.
            \item Otherwise, $c \notin F_{i-1}(u) \cap F_{i-1}(v)$. (We refer to this as a \emph{collision} at $e$.) In this case, choose $c'$ uniformly from $F_{i-1}(u) \cap F_{i-1}(v)$ for edge $e_i$. 
            If $F_{i-1}(u) \cap F_{i-1}(v) = \emptyset$, $e_i$ is left uncolored.  \label{alg:resamplestep}
    
        \end{enumerate}
        \item For all vertices $w$, if $\phi_i(w) < \phi_{i+1}(w)$ ($i$ completes the current $w$-phase) then $A_{i}(w)$ is set to $F_i(w)$, otherwise $A_{i}(w)=A_{i-1}(w)$.
    \end{enumerate} 
\end{definition2}

It is obvious from the algorithm that the final color selected for $e$ is uniformly random
over  $F_{i-1}(u) \cap F_{i-1}(v)$, so the distribution over colorings produced by $\mathcal{A}'$
is the same as $\mathcal{A}$. To prove  Theorems~\ref{thm:randomordermain} and~\ref{thm:main} it suffices to prove the corresponding statements with $\mathcal{A}$ replaced by $\mathcal{A}'$ and this is what we'll do. 

\subsection{The probability space}

Fix any (possibly adaptive) strategy for Builder and consider the coloring game
against $\mathcal{A}'$.  Recall that the state of the game $\gamestate_i$ after $i$ steps
consists of the sequence $e_1,\ldots,e_i$ of edges and the coloring of those edges.
For the purpose of analysis we augment $\gamestate_i$ so that
it includes the preliminary color that $\mathcal{A}'$ chooses for each edge. The random choices of $\mathcal{A}'$ determine a discrete
probablity space over the space of final states of the game;
we view this space as a stochastic process that evolves over time.

\paragraph{Notation.}

For a step $1 \leq i \leq m$ we define the following random variables depending only on $\mathcal{\gamestate}_i$, the state after the $i^{th}$ edge is processed: 
\begin{eqnarray*}
e_{i+1} &=& \text{the $(i+1)^{th}$ edge to arrive.}
\end{eqnarray*}
For vertex $v$ and time step $i$,  we define the following random variables that are determined by  $\mathcal{\gamestate}_i$: 
\begin{eqnarray*}
F_i(v) &=& \text{the set of free colors of $v$ at the end of step $i$.}\\
\phi_{i+1}(v) &=& \text{the $v$-phase for step $i+1$.}\\
A_i(v) &=& \text{the palette of free colors of $v$ at the end of step $i$.}
\end{eqnarray*}

We will track the above variables as the algorithm progresses, and they will form the basis for the martingale difference sequences we analyze in \Cref{section:wellbehavedcolorings}. Each of these variables is subscripted by a step $i$, at which time they are fixed. This is what allows us to apply our concentration lemmas and make probabilistic claims about such variables.

In contrast, the variables defined below will be superscripted by a phase number $r$, if at all, and are defined only after the last edge in phase $r$ of $v$ arrives and is colored (or fails to be colored.) However, since the choice of edges is not fixed ahead of time, we do not know a priori the step $i$ at which they will be defined. Thus, we must be careful when dealing with such quantities. In the rest of the paper, outside of \Cref{section:wellbehavedcolorings}, we will assume the algorithm has concluded, and make deterministic statements about the quantities defined below, conditioned on the results from \Cref{section:wellbehavedcolorings}.

For a vertex $v$ the following set depends on the final state $\mathcal{\gamestate}_{\final}$:
\begin{eqnarray*}
\timesteps(v)
&=& \text{ the set of arrival times of edges adjacent to $v$.}
\end{eqnarray*}
Also for vertex $v$ and $v$-phase $r$, we define the following sets depending on $\mathcal{\gamestate}_{\final}$:
\begin{eqnarray*}
A^r(v) &=& \text{the set of free colors of $v$ at the end of the $r^{th}$ $v$-phase.}\\
{U}^r(v) &=& A^{r-1}(v) \setminus A^{r}(v)\\
&=& \text{ the set of colors used to color edges incident on $v$ during the $r^{th}$ $v$-phase.}\\
\timesteps^r(v) &=& \{i \in \timesteps(v): \phi_i(v) = r\}\\
& = & \text{ the set of arrival times of edges incident on $v$ during the $r^{th}$ $v$-phase.}\\
\timesteps^{\leq r}(v) &=& \cup_{r' \leq r} \timesteps^{r'}(v).\\
\last{r}{v} &=& \max\set{i: i\in \timesteps^{r}(v)}\\
&=& \text{ the time at which the last edge of $v$-phase $r$ that is adjacent to $v$ arrives}.
\end{eqnarray*}
We remark that by this notation, for each time $i$, $A_{i-1}(v) = A^{\phi_i(v)-1}(v)$. 

Finally, we define the following quantity, called the \emph{error of vertex $v$ with respect to $S$ after phase $r$}:
\begin{align}\label{def:deltavS}
\delta^r(v,S) :=  \frac{|A^r(v) \cap S|}{|A^r(v)|} - \frac{|S|}{(1+\epsilon)\Delta}.
\end{align}

In contrast, the variables defined below will be superscripted by a phase number $r$, if at all, and are defined only after the last edge in phase $r$ of $v$ arrives and is colored (or fails to be colored). Thus, the times $\last{r}{v}$ are important to note down. More formally, we define the following partial order on vertex-phase pairs $(v,r) \in V \times [b]$.

\begin{definition2}[Vertex-Phase and Edge Arrival Ordering, $\prec$]\label{def:partialorder}
Given pairs $(v,r), (u,s) \in V(G) \times [b]$, we say that $(v,r) \prec (u,s)$ if $\last{r}{v} < \last{s}{u}$.
 \end{definition2}

We make one final note about the variables defined in this section. Above we have defined two families of variables: those indexed by times steps $i$ that are defined at a known time step $i$, and those that are not defined until the algorithms concludes. Note that although the sets in the second family depend on $\mathcal{\gamestate}_{\final}$, the indicator variable for a particular index $i$ belonging to one of these sets only depends on $\mathcal{\gamestate}_{i-1}$, since the identity of $e_i$ and $\phi_i(v)$ are determined at that point. For instance, $T^r(v)$ depends on $\mathcal{\gamestate}_{\final}$, but the indicator $\mathds{1}_{\{i \in T^r(v)\}}$ only depends on $\mathcal{\gamestate}_{i-1}$. Thus, indicator variables of this form will also belong to the first family of variables and will be used in \Cref{section:wellbehavedcolorings}.

\subsection{Proof Idea}\label{sec:proofidea}
As mentioned above, for each set $S \subseteq \Gamma$ and vertex $v \in V$, we would like to track $|F_{i}(v) \cap S|$ as $\mathcal{A}'$ progresses. We will do this indirectly by tracking
$\frac{|A^r(v) \cap S|}{|A^r(v)|}$ with the goal of bounding  $\delta^r(v,S)$. Observe that $\delta^0(v,S)=0$, so our goal will be to bound how much $\delta^r(v,S)$ increases in each phase. Note that if ${U}^r(v)$ was chosen uniformly from $A^{r-1}(v)$, we would expect that
\begin{align}\label{eqn:usedcolors}
    |{U}^r(v) \cap S| \approx |{U}^r(v)|\cdot \frac{|A^{r-1}(v) \cap S|}{|A^{r-1}(v)|},
\end{align}
which would imply that the error with respect to any set $S$ would not increase too much when the phase of $v$ changes:
\begin{align}
    \delta^{r-1}(v,S) - \delta^{r}(v,S) & =  \frac{|A^{r-1}(v) \cap S|}{|A^{r-1}(v)|}- \frac{|A^{r}(v) \cap S|}{|A^{r}(v)|}\\
    &=  \frac{|A^{r-1}(v) \cap S|}{|A^{r-1}(v)|}- \frac{|A^{r-1}(v) \cap S|-|{U}^{r}(v) \cap S|}{|A^{r}(v)|} \nonumber \\ 
    & = \frac{1}{|A^{r}(v)|}\left(|{U}^{r}(v) \cap S| +|A^{r-1}(v) \cap S|\left(\frac{|A^{r}(v)|}{|A^{r-1}(v)|}-1\right)\right)  \nonumber \\
    & = \frac{1}{|A^{r}(v)|}\left(|{U}^{r}(v) \cap S| + |A^{r-1}(v) \cap S|\cdot \frac{|A^{r}(v)|-|A^{r-1}(v)|}{|A^{r-1}(v)|}\right)  \nonumber \\
    & = \frac{1}{|A^{r}(v)|}\left(|{U}^{r}(v) \cap S| - |A^{r-1}(v) \cap S|\cdot \frac{|{U}^{r}(v)|}{|A^{r-1}(v)|}\right) \label{eqn:usedcolorssize}.
\end{align}

Our goal is to show that \Cref{eqn:usedcolors} is never too far from the truth, and therefore $\delta^r(v,S)$ does not grow too large in any given phase. As we previously noted, this cannot hold for every set $S$ and every vertex $v$, but we can show that there exists a constant $C$ depending on $\epsilon$ such that with high probability, for each set $S$, for all but at most $C$ vertices $v$ and all phases $r$, we have
\begin{align}\label{eqn:FSetIntersecIndependent}
    \left|\delta^r(v,S) \right|\leq \frac{\epsilon^3\Delta}{10|A^{r}(v)|}.
\end{align} 
\Cref{prop:deltatomartingales} allows us to bound the amount $\delta^r(v,S)$ grows during a phase of $v$ in terms of three main sources of error: the error from the collisions at each phase, the error inherent to a locally independent algorithm, and the error in the palettes of the neighbors of $v$ for that phase. We will formalize these errors in the subsequent paragraphs. 

To that end, for all $e_i = (u,v)$, we define the indicator variables $X_{i}(S)$ to track whether the preliminary color chosen for $e_i$ from $A_{i-1}(u) \cap A_{i-1}(v)$ hits $S$, $Y_{i}(S)$ to track whether the final color chosen for $e_i$ from $A_{i-1}(u) \cap A_{i-1}(v)$ hits $S$, and the collision indicator variables $Z_i$ to track whether the preliminary color chosen for $e_i$ needs to be resampled. Furthermore, we let \begin{align}\label{eqn:collisionmartingale}
p_{i}(S) := \frac{|A_{i-1}(u) \cap A_{i-1}(v) \cap S|}{|A_{i-1}(u) \cap A_{i-1}(v)|}
\end{align}
be the probability that $X_{i}(S) = 1$, conditioned on the partial coloring of edges before $e_i$ arrives, and let 
\begin{align}
    D_i(S) = X_i(S)-p_i(S).
\end{align}
Note that,
\begin{align}\label{eqn:relXZ}
    \left||U^r(v) \cap S|-\sum\limits_{j \in \timesteps^r(v)} X_{j}(S)\right| = \left|\sum\limits_{j \in \timesteps^r(v)} Y_{j}(S) - X_{j}(S)\right| \leq \sum\limits_{j \in \timesteps^r(v)} Z_{j}
\end{align}
since the colors used in the preliminary coloring and the final coloring differ only in the edges which experience collisions. This produces the first source of error. Similarly,
\begin{align}\label{eqn:relXp}
    \left|\sum\limits_{j \in \timesteps^r(v)} X_{j}(S)-\sum\limits_{j \in \timesteps^r(v)} p_{j}(S)\right| = \left|\sum\limits_{j \in \timesteps^r(v)} D_{j}(S)\right|
\end{align}
models the inherent error of the local algorithm on phase $r$ of $v$. Therefore, if we can show that for most $i \in \timesteps^r(v)$,
\begin{align}\label{eqn:approximatep}
    p_{i}(S) \approx \frac{|A^{r-1}(v) \cap S|}{|A^{r-1}(v)|}
\end{align}
and bound the quantities on the right hand sides of \Cref{eqn:relXZ} and \Cref{eqn:relXp},
we will have
\begin{align}\label{eqn:approximateu}
     |U^{r+1}(v) \cap S| \approx |U^{r+1}(v)|\cdot \frac{|A^{r}(v) \cap S|}{|A^{r}(v)|},
\end{align}
as desired. Here, \Cref{eqn:approximatep} represents the third source of error. Our main tool for bounding the sums above will be martingale concentration inequalities.

\subsection{Technical parameters}

For easy reference we collect the parameters that are used in the analysis.

\begin{definition2}[Technical parameters]~
\label{def:constants}
    \begin{itemize}
    \item $\epsilon \in (0,1)$ is the parameter appearing in the statements of~\Cref{thm:randomordermain} and~\Cref{thm:main}.  (Recall that for $\epsilon \geq 1$ the theorems are trivial, so we may assume $\epsilon<1$.)
\item $n$ always represents the number of vertices in the graph.
\item $m = \lfloor n \Delta /2 \rfloor$ is the total number of time steps.
\end{itemize}
In the dense case, there is a \emph{density parameter} $M>1$ that is an upper bound on $n/\Delta$. For
notational convenience we will say that $M=0$ in the random-order case.

There are several parameters given below that arise in the analysis.  All of the parameters
depend on $\epsilon$ and on $M$.  As indicated above, the value $M=0$ is used to refer to
the random case. 
\begin{itemize}
\item $\zeta=\zeta(\epsilon,M)$ is an \emph{error-control parameter}.  In the random-order case,  $\zeta(\epsilon,0)=e^{-20/\epsilon^2}\frac{\epsilon^3}{10}$.  In the dense case, for $M>1$,
$\zeta(\epsilon,M)=\frac{\epsilon^5}{100M}e^{-(5M/\epsilon^2)^2}$.
        \item $\alpha = \alpha(\epsilon,M)=\zeta\frac{\epsilon^{3}}{5}$.
        In  ~\Cref{section:badevents} we define certain bad events for the run of the algorithm.  These bad events are that certain ``error'' quantities associated with the algorithm grow too large.  The margin of error for these quantities is $\alpha\Delta$.
        \item $b=b(\epsilon,M) = \frac{40}{\alpha \epsilon^2}$.  This is the number of phases in the phase-partition for each vertex.
        \item $C=C(\epsilon,M)  = \frac{2000}{\alpha^4}$.  For each subset $S$ of colors we say that a vertex 
        $v$ is $S$-atypical  (\Cref{def:toomuchdriftvertex}) if (very roughly) at some point in the algorithm the fraction of free colors at $v$ that belong to $S$ differs significantly from, $\frac{|S|}{(1+\epsilon)\Delta}$, the overall fraction of colors that belong to $S$.  One of the bad events is that for some color set $S$, the number of $S$-atypical vertices is at least $C$.
        \end{itemize}
        The final parameter is only relevant for the random case:
        \begin{itemize}
        \item $N=N(\epsilon)=\max(400C(\epsilon,0),50b(\epsilon,0))$. 
        The theorem for the random case requires that $\Delta=\Omega(\log n)$.
        The parameter $N$ is the lower bound on $\Delta/\log(n)$ for which
        the result holds.  
    \end{itemize}
\end{definition2}
\begin{remark2}\label{rem:deltasufflarge}
    Throughout the paper, we will be referring to $\Delta$ ``sufficiently large" in comparison to $\epsilon$. For our purpose, $\Delta \geq \frac{10^{10}}{\alpha^8}$ will suffice. This will become relevant in \Cref{sec:FurtherWork}, when we show that our results extend to certain values of $\epsilon$ that are $o(1)$.
\end{remark2}
\section{Properties of the phase partitions}\label{section:errorbounds}

As described above, the algorithm $\mathcal{A}'$ makes use of the phase-counter sequences $\phi_i(v)$ for each vertex.   We use a different choice of $\phi$  for the dense case
and the random-order case.  The parameters $m$ and $b$ are as in \Cref{def:constants}.

\begin{definition2}[Phase Partition Counters]\label{def:phasepartitionfunc}
    We use the following phase partition counter sequences:
    \begin{enumerate}
        \item \textbf{Phase Partition Counters $\phidense=\set{\phidense_{i}}_{i\in [m]}$ for the dense case}: 
        For each vertex $v$, $\phidense_{0}(v)=0$, and for $i \in [m]$ 
        \begin{align*}
        \phidense_{i}(v)=\left\lceil \frac{\card{\timesteps(v)\cap \set{1,\cdots, i}}\cdot b}{\Delta}\right \rceil
        \end{align*}
        Less formally, the counter is the number of edges so far incident to $v$ times $b/\Delta$ rounded up to the nearest integer.
        \item \textbf{Phase Partition Counters $\phirandom=\set{\phirandom_{i}}_{i\in [m]}$ for the random-order case}: For every vertex $v$, $\phirandom_{i}(v)=\left \lceil\frac{i\cdot b}{m}\right \rceil$ for $i\in [m]$.
    \end{enumerate}
\end{definition2}

Our proof that $\mathcal{A}'$ succeeds relies on two properties of the phase-partition counters,
which we call
\emph{balance} and \emph{controlled error}.  Here we will state these properties
and establish conditions under which $\phirandom$ and $\phidense$ satisfy them.
The properties are determined by the edge-ordered graph $(G,\sigma)$ produced by Builder.

\subsection{The balance property}

\begin{definition2}[Balance]\label{def:balance}
A family $\phi(v)$ of phase counter sequences is said to be \emph{balanced} with respect
to $(G,\sigma)$ provided that
for all $v$ and all phases $r$, every $v$-phase contains at most $2\Delta/b$ edges incident on $v$.
\end{definition2}

The definition of $\phidense$ immediately gives that it is balanced with respect to $(G,\sigma)$ for
any graph $G$ of maximum
 degree $\Delta$ and ordering $\sigma$. On the other hand, $\phirandom$ does not satisfy balance for all $(G,\sigma)$
 but for any fixed graph $G$ it satisfies the property for almost all orderings $\sigma$, as we now show.
We use the following concentration bounds.

\begin{lemma2}\cite[Theorem~2.10]{JLR00}\label{lem:chernoffhyper}
    Let $X$ be a hypergeometric random variable with parameters $m,d$ and $k$, where $\mu=\expect{X}=\frac{kd}{m}$, then, 
    \begin{align*}
        \prob{X\geq \mu+t}\leq \exp\paren{-\frac{t^2}{2(\mu+\frac{t}{3})}}.
    \end{align*}
\end{lemma2}

The following lemma shows that for a random ordering, $\phirandom$ is almost certainly balanced.
\begin{lemma2}[Probability  that $\phirandom$ is unbalanced]\label{lem:goodordering}
     Let $\Delta$ be sufficiently large and suppose $G$ is a graph on $n \leq 2^{\Delta/N}$ vertices, where $N$ is given in \Cref{def:constants}, and $m$ edges, where some of the edges may be null edges. Then the fraction of
     orderings $\sigma$ of edges of $G$ such that $\phirandom$ is not balanced with respect
     to $(G,\sigma)$ is at most $\exp\left(-\frac{\Delta}{20b}\right).$
\end{lemma2}
\begin{proof}
Consider $\sigma$ chosen uniformly at random from all orderings. Let $E^r$ be the (multi)-set of edges 
in phase $r$ (which is the same for all $v$ by the definition of $\phirandom$). We say that $E^r$ is a multi-set because there can be multiple null edges. Let $E^r(v)$ be the set of edges in $E^r$ that are incident to $v$ and suppose $\Delta$ is sufficiently large such that $\Delta \geq \frac{10^5}{\alpha^3} \geq 200 b \log b$. Our goal is to show that for all $v\in V$ and $r\in [b]$, $\card{E^{r}(v)}\leq\frac{2\Delta}{b}$ with high probability. $E^r$  is a uniformly random subset of the edges of size $k_r \in \{\lfloor\frac{m}{b}\rfloor, \lceil\frac{m}{b}\rceil\}$ and therefore $\card{E^{r}(v)}$ is a hypergeometric random variable with expectation $\deg(v) \cdot \frac{k_r}{m} \leq \frac{\Delta}{b}\cdot\left(1+\frac{b}{m}\right) \leq \frac{\Delta}{b}\cdot\frac{3}{2},$ (since $\Delta \geq 2b$). Thus, by \Cref{lem:chernoffhyper}, we have 
\begin{align*}
    \prob{|E^r(v)|\geq \frac{2\Delta}{b}} &\leq \exp\left(-\frac{(\Delta/b)^2/4}{3(\Delta/b)+(\Delta/b)/3}\right)\\
    &\leq \exp\left(-\frac{3\Delta}{40b}\right).
\end{align*}
Taking a union bound over all choices of $v\in V$ and $r\in [b]$ gives us that
the probability that $\phirandom$ is not balanced is at most
    \begin{equation*}
        b \cdot 2^{\frac{\Delta}{N}} \cdot \exp\left(-\frac{3\Delta}{40b}\right) \leq 
        \exp\left(-\frac{3\Delta}{40b}+\frac{\Delta}{50b}+\log(b)\right) \leq \exp\left(-\frac{\Delta}{20b}\right)
    \end{equation*} 
where in the first inequality we use the bound on $N$ given by \Cref{def:constants} and for the second inequality, we use the fact that $\log(b) \leq \frac{\Delta}{200 b}$.
\end{proof}

\subsection{The controlled error property}

The second, and more significant property that we want from our
phase counter sequences is \emph{controlled error}.
As indicated in the overview, a key part of the proof is to establish for each color
set $S$, an upper bound on
$\delta^r(v,S)$ that holds for all but constantly many vertices. 
The upper bound will be expressed in terms of the error function
$\widehat{\epsilon}^{\;r}(v)$, which we now define.

The following notation will be useful. For edge $e$ incident on $v$, $e-v$ denotes the other vertex
of $e$.  Also if $j \in \timesteps(v)$ (so that $v \in e_j$) then $s_j(v)=\phi_j(e_j-v)$, the $(e_j-v)$-phase that contains step $j$. Also it is useful to recall the partial order 
on vertex-phase pairs from \Cref{def:partialorder} : $(u,s)\prec (v,r)$ if the last edge in $u$-phase $s$ comes before the last edge of $v$-phase $r$, or $\last{s}{u}<\last{r}{v}$.
\begin{definition2}[Error Bounds]
Let $\zeta>0$.  Given a phase-partition counter $\phi$ for the 
edge-ordered graph $(G,\sigma)$, 
the error function $\widehat{\epsilon}=\widehat{\epsilon}_{\zeta}$ is defined
on vertex-phase pairs $(v,r)$ inductively as follows:
    \begin{align*}
        &\widehat{\epsilon}^{\;0}(v) = 0\\
        & \widehat{\epsilon}^{\;r}(v) = \zeta + \frac{5}{\Delta \epsilon^2}\sum_{j \in \timesteps^{\leq r}(v)} \widehat{\epsilon}^{\;s_j(v)-1}(e_j-v).
    \end{align*}
\end{definition2} 

The recurrence is well-defined because the value
for $(v,r)$ only depends on the value for vertex-phase pairs $(u,s) \prec (v,r)$.
To see this note that the summation is over timesteps $j \in \timesteps^{\leq r}(v)$.
The $(e_j-v)$-phase of $j$ is $s_j(v)$, so $u$-phase $s_j(v)-1$ ends before $j$ which
is before the end of
the $v$-phase $r$. 

\begin{definition2}\label{def:errorcontrolled}
For $\zeta>0$,
a phase partition counter $\set{\phi_{i}}_{i=1}^m$ has \emph{$\zeta$-controlled error} 
with respect to the ordered graph $(G,\sigma)$ provided that for all vertices $v$ phases $r$:
    \[\widehat{\epsilon}_{\zeta}^{\;r}(v) \leq \frac{\epsilon^3}{10}.\]
\end{definition2}

In what follows we show that for any $(G,\sigma)$, (1) if $n \leq \Delta M$ (dense case), the
phase partition counter $\phidense$ has $\zeta(\epsilon,M)$-controlled error with respect
to $(G,\sigma)$, and (2) if $\Delta \geq N\log n$,
 $\phirandom$ has $\zeta(\epsilon,0)$-controlled error with respect to any $(G,\sigma)$ for which $\phi^R$ is balanced.   We start by formulating a bound on $\widehat{\epsilon}$ by unwinding the recursive
definition.

\begin{definition2}[Valid Paths from $v$]
Let $\phi$ be a phase partition function.
    Let $\mathcal{P}^r(v)$ be the set of paths $(x_{0},x_{1},\cdots,x_{t})=(e_{i_1},e_{i_2},\cdots,e_{i_t})$ such that, $x_0=v$, $i_1 \in  \timesteps^{\leq r}(v)$, and for all $1 \leq k < t$, $\phi_{i_{k+1}}(x_k)< \phi_{i_{k}}(x_k)$. That is, $e_{i_{k+1}}$ arrives in an earlier phase of $x_k$ than edge $e_{i_{k}}$. We also include the empty path of length 0. 
\end{definition2}
\begin{proposition}\label{prop:errorrecurrence} For any $\zeta>0$ and phase-vertex pair $(v,r)$,
    \[\widehat{\epsilon}_{\zeta}^{\;r}(v) \leq \zeta \sum_{P \in \mathcal{P}^r(v)}  \left(\frac{5}{\Delta \epsilon^2}\right)^{l(P)}\]
    where $l(P)$ is the length of the path $P$.
\end{proposition}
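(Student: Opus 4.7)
The plan is to prove this by strong induction on the partial order $\prec$ of vertex-phase pairs, exploiting the fact that the right-hand side (call it $B^r(v) := \zeta \sum_{P \in \mathcal{P}^r(v)}(5/(\Delta\epsilon^2))^{l(P)}$) naturally admits a recursive decomposition that majorizes the one defining $\widehat{\epsilon}^{\;r}(v)$. The base case is any pair $(v,r)$ with $\timesteps^{\leq r}(v) = \emptyset$ (in particular $r=0$): here $\widehat{\epsilon}^{\;r}(v) = 0$ by the base of its recurrence, while $B^r(v) \geq \zeta$ since the empty path (length $0$) lies in $\mathcal{P}^r(v)$, so the bound holds trivially.

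The heart of the argument is the recursive identity for $B^r(v)$. Every path $P \in \mathcal{P}^r(v)$ of length $\geq 1$ can be uniquely decomposed as a first edge $e_j$ with $j \in \timesteps^{\leq r}(v)$ followed by a sub-path $P'$ starting at $u := e_j - v$. The constraint $\phi_{i_2}(u) < \phi_{i_1}(u) = s_j(v)$ in the definition of $\mathcal{P}^r(v)$ is exactly the condition that $P' \in \mathcal{P}^{s_j(v)-1}(u)$, and $l(P) = 1 + l(P')$. Separating the length-$0$ path (contributing $\zeta$) from the rest yields
\begin{align*}
B^r(v) \;=\; \zeta \;+\; \frac{5}{\Delta\epsilon^2}\sum_{j \in \timesteps^{\leq r}(v)} B^{s_j(v)-1}(e_j - v).
\end{align*}
This is precisely the recurrence defining $\widehat{\epsilon}^{\;r}(v)$, with $B$ in place of $\widehat{\epsilon}$.

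To complete the induction, I need only verify that the predecessors appearing in this recurrence lie strictly below $(v,r)$ in $\prec$, so that the inductive hypothesis applies. For each $j \in \timesteps^{\leq r}(v)$, let $u = e_j - v$ and $s = s_j(v)$; by definition $\last{s-1}{u} < j \leq \last{r}{v}$, hence $(u, s-1) \prec (v, r)$. Applying the inductive hypothesis $\widehat{\epsilon}^{\;s-1}(u) \leq B^{s-1}(u)$ termwise in the recurrence for $\widehat{\epsilon}^{\;r}(v)$ and comparing with the recurrence for $B^r(v)$ gives $\widehat{\epsilon}^{\;r}(v) \leq B^r(v)$, as claimed.

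I do not expect any serious obstacle in this proof: the only point requiring a little care is the book-keeping of the bijection between length-$\geq 1$ paths and (edge, sub-path) pairs, together with the verification that the ``earlier phase of $x_k$'' condition in $\mathcal{P}^r(v)$ matches the index shift $s_j(v) - 1$ that appears in the recurrence. Everything else is straightforward recursion unrolling.
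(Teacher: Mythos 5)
Your proof takes essentially the same approach as the paper's: induction on the $\prec$ partial order, decomposing each path of positive length in $\mathcal{P}^r(v)$ into its first edge $e_j$ and a remaining sub-path in $\mathcal{P}^{s_j(v)-1}(e_j - v)$. One small inaccuracy in your base case: for $r \geq 1$ with $\timesteps^{\leq r}(v) = \emptyset$, the recurrence gives $\widehat{\epsilon}^{\;r}(v) = \zeta + 0 = \zeta$, not $0$; since $B^r(v) \geq \zeta$ (from the empty path), the desired inequality still holds, so the argument is unaffected.
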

\begin{proof}
    The proof is by induction with respect to the vertex-phase partial order. The base case where the phase number is 0 is trivial since $\widehat{\epsilon}^{\;0}(v) = 0$ for all $v$. As noted earlier, for any $j \in \timesteps^{\leq r}(v)$, $(e_j-v,s_j(v)-1) \prec (v,r)$.  Furthermore, for $i_0\in \timesteps^{\leq r}(v)$ and for
   any valid path $(e_{i_1},\cdots, e_{i_t}) \in \mathcal{P}^{s_{i_0}(v)-1}(e_{i_0}-v)$ 
with $\phi_{i_0}(x_1)>\phi_{i_1}(x_1)$, the path $(e_{i_0},e_{i_1},\cdots, e_{i_t}) \in \mathcal{P}^r(v)$.
   Thus, we have
   \begin{align*}
   \widehat{\epsilon}^{\;r}(v) & = \zeta + \frac{5}{\Delta \epsilon^2}\sum_{j \in \timesteps^{\leq r}(v)} \widehat{\epsilon}^{\;s_j(v)-1}(e_j-v)\\
       & \leq \zeta + \zeta \cdot \frac{5}{\Delta \epsilon^2}\sum_{j \in \timesteps^{\leq r}(v)} \sum_{P \in \mathcal{P}^{s_j(v)-1}(e_j-v)} \left(\frac{5}{\Delta \epsilon^2}\right)^{l(P)}\\
       & \leq \zeta \sum_{P \in \mathcal{P}^r(v)} \left(\frac{5}{\Delta \epsilon^2}\right)^{l(P)}
   \end{align*}
   where we include the path of length $0$ starting at $v$.
\end{proof}
We now show that the phase counter functions $\phirandom$ and $\phidense$ are $\zeta$-error controlled under suitable conditions.

\begin{proposition}[Bound on $\widehat{\epsilon}$: Dense Case]\label{prop:boundonepsilondense}
For any $\epsilon>0$ and $M >1$, let  $\zeta=\zeta(\epsilon,M)$.  Then
for any graph $G$ with degree $\Delta$ and $n \leq M\Delta$
and any edge ordering $\sigma$ the phase partition function $\phidense$ has $\zeta(\epsilon,M)$-controlled
error.  
\end{proposition}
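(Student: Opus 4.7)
The plan is to apply \Cref{prop:errorrecurrence}, which gives
\[
\widehat{\epsilon}^{\;r}(v)\le \zeta\sum_{P\in\mathcal{P}^r(v)}\left(\frac{5}{\Delta\epsilon^2}\right)^{l(P)},
\]
and to bound the right-hand side uniformly so that multiplication by $\zeta(\epsilon,M)$ yields at most $\epsilon^3/10$. Substituting $\zeta(\epsilon,M)=\frac{\epsilon^5}{100M}e^{-(5M/\epsilon^2)^2}$, this reduces to proving that for every $v$ and $r$,
\[
\sum_{P\in\mathcal{P}^r(v)}\left(\frac{5}{\Delta\epsilon^2}\right)^{l(P)}\le \frac{10M}{\epsilon^2}\cdot e^{25M^2/\epsilon^4}.
\]

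I would then extract two structural facts about $\phidense$. (i) By the definition of $\phidense$, each $v$-phase contains between $\lfloor\Delta/b\rfloor$ and $\lceil\Delta/b\rceil$ edges incident on $v$, so $\phidense$ is balanced on every $(G,\sigma)$ of maximum degree $\Delta$ and $|\timesteps^{\le r}(v)|\le r\lceil\Delta/b\rceil$. (ii) Along any valid path $P\in\mathcal{P}^r(v)$ the arrival times $i_1>i_2>\cdots>i_{l(P)}$ strictly decrease, while $\phi_i(u)$ is non-decreasing in $i$ for every fixed $u$; combining these, the arrival phases at any revisited vertex are strictly decreasing. Therefore every vertex-phase pair $(u,s)\in V\times[b]$ is visited at most once along $P$, and in particular $l(P)\le nb\le M\Delta b$. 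With these in hand I would bound the path sum by induction on $(v,r)$ under the partial order $\prec$ of \Cref{def:partialorder}: pick a weight $W(u,s)$ satisfying
\[
W(v,r)\ge 1+\frac{5}{\Delta\epsilon^2}\sum_{j\in\timesteps^{\le r}(v)}W(e_j-v,\phi_j(e_j-v)-1)
\]
so that the path sum is dominated by $W(v,r)$. The natural candidate is a product over phase layers, $W(v,r)=\prod_{s=1}^{r}(1+\xi_{v,s})$, whose expansion telescopes because property~(ii) prevents any single pair $(u,s)$ from being double-counted across branches of the recursion, and the density bound $n\le M\Delta$ controls the total number of pair factors appearing in the product.

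The main obstacle is the combinatorial path count. Naively bounding each term of the recursion by the maximum of the path sum over its continuations produces a geometric series with ratio at most $(b-1)(\Delta/b)\cdot(5/(\Delta\epsilon^2))\le 5/\epsilon^2$, which exceeds $1$ for $\epsilon<\sqrt 5$ and therefore diverges. Property~(ii) is what saves us: since each vertex-phase pair is visited at most once along any path, the path sum can be re-expressed as a product over the $nb\le M\Delta b$ pairs $(u,s)$ rather than a free branching process, converting a would-be divergent geometric series into a finite product. Matching the constants so that the resulting product bound is $\le \epsilon^3/(10\zeta(\epsilon,M))$, with the $(5M/\epsilon^2)^2$ exponent emerging from the interaction of the density bound $n\le M\Delta$ and the phase depth $b$ fixed in \Cref{def:constants}, is the delicate step where the specific form of $\zeta(\epsilon,M)$ will be used to close the bound.
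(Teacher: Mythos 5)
Your starting point is the right one: apply \Cref{prop:errorrecurrence} and then control the weighted path sum $\sum_{P\in\mathcal{P}^r(v)}\bigl(\tfrac{5}{\Delta\epsilon^2}\bigr)^{l(P)}$. You also correctly diagnose that a naive branching estimate gives a divergent geometric series with ratio $\Theta(1/\epsilon^2)$. Your observation~(ii) is correct as stated --- along a valid path the arrival times strictly decrease while $\phi_i(u)$ is non-decreasing in $i$, so each $(u,s)\in V\times[b]$ is met at most once and $l(P)\le nb$. However, the proposed fix does not close the argument, and the gap is substantive rather than cosmetic.

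First, a length bound $l(P)\le nb$ is nowhere near enough: the number of length-$t$ paths through distinct pairs can still grow like $(nb)(nb-1)\cdots(nb-t+1)$, which is $t!$ times the corresponding binomial and is not tamed by a product of the form $\prod_{(u,s)}(1+\xi_{u,s})$. That identity $\prod (1+\xi)=\sum_{\text{sets}}\prod\xi$ is exactly off by a factor of $t!$ from the ordered count, which is the wrong direction. Second, the per-vertex product $W(v,r)=\prod_{s\le r}(1+\xi_{v,s})$ cannot dominate the recursion because the right-hand side of the recursion sums $W(e_j-v,\cdot)$ at \emph{other} vertices; there is no telescoping that collapses cross-vertex contributions into $v$'s own phase factors. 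Third, the quantitative target does not match: $b$ is enormous (through $\alpha$ it scales like $e^{(5M/\epsilon^2)^2}/\epsilon^{10}$), so a bound in which the phase depth $b$ enters multiplicatively would wildly overshoot the required $\frac{10M}{\epsilon^2}e^{(5M/\epsilon^2)^2}$; indeed, the paper's dense-case bound is independent of $b$ and sums over all path lengths without truncation.

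What is missing is the specific combinatorial saving that makes the dense case work: the set of \emph{even-position} edges $e_{i_2},e_{i_4},\dots$ determines the entire path, because they are forced to appear in reverse arrival order. Hence the number of valid paths of length $t$ is at most $\binom{n^2}{\lfloor t/2\rfloor}\le n^t/\lfloor t/2\rfloor!$. The factorial in the denominator is exactly what converts the would-be divergent series $\sum_t (5M/\epsilon^2)^t$ into the convergent $\sum_{t'}(5M/\epsilon^2)^{2t'+1}/t'!\le \frac{10M}{\epsilon^2}e^{(5M/\epsilon^2)^2}$, and this step has no analogue in your plan. The paper itself flags this as the crux, noting that the trivial $\binom{m}{t}$ bound on path counts does not suffice.
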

\begin{proof}
    We first show that there are at most $\frac{n^{t}}{(\lfloor t/2 \rfloor)!}$ valid paths of length $t$ starting from $v$. Note that it is enough to show this for $t$ even since if there are at most $r$ valid paths of length $t$, then there are at most $rn$ valid paths of length $t+1$. Let $P = (e_{i_1}, ..., e_{i_t})$ be any valid path where $v \in e_{i_1}$.  Note that the edges in even positions $e_{i_2}, e_{i_4},...,e_t$ determine the path. Furthermore, these edges can only be placed in reverse arrival order for the path to be valid, so choosing the set of edges determines the path. Therefore, there can be at most $\binom{n^2}{\lfloor t/2 \rfloor} \leq \frac{n^t}{\lfloor t/2 \rfloor!}$ such paths. Thus, if $n \leq M\Delta$, then, by \Cref{prop:errorrecurrence},
    \begin{align*}
        \widehat{\epsilon}^{\;r}(v) & \leq \zeta \sum_{t = 0}^{\infty} \left(\frac{5}{\Delta \epsilon^2}\right)^t \left(\frac{n^t}{(\lfloor t/2 \rfloor)!}\right) \\
        & \leq \zeta \sum_{t' = 0}^{\infty} \frac{2(5M/\epsilon^2)^{2t'+1}}{t'!} \\
        & \leq \frac{10M}{\epsilon^2}\cdot \zeta\sum_{t' = 0}^{\infty} \frac{((5M/\epsilon^2)^2)^{t'}}{t'!} \\
        & \leq  \frac{10M}{\epsilon^2}\cdot \zeta  e^{(5M/\epsilon^2)^2} \leq \frac{\epsilon^3}{10} 
    \end{align*}
    by our choice of $\zeta=\zeta(\epsilon,M)$ from \Cref{def:constants}.
\end{proof}
We remark that there is a trivial bound of $\binom{m}{t}$ on the number of paths.  The
more careful bound of $\frac{n^t}{\lfloor t/2 \rfloor!}$ is crucial in the above analysis.

\begin{proposition}[Bound on $\widehat{\epsilon}$: Random Order Setting]\label{prop:boundonepsilonrandom}
Suppose $G$ is a graph on $n$ vertices with maximum degree $\Delta$. Let $\epsilon>0$ and suppose that $\phi^R$ is balanced with respect to $(G,\sigma)$. Then the phase counter function $\phirandom$ 
is $\zeta$-error controlled provided that $\zeta \leq \zeta(\epsilon,0)$. 
\end{proposition}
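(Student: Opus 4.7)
The plan is to follow the same template as the dense case (\Cref{prop:boundonepsilondense}): use \Cref{prop:errorrecurrence} to bound $\widehat{\epsilon}^{\;r}(v)$ by $\zeta \sum_{P \in \mathcal{P}^r(v)}(5/(\Delta\epsilon^2))^{l(P)}$ and then count valid paths of each length. The key structural observation is that in the random-order setting $\phirandom$ is \emph{vertex-independent}: $\phirandom_i(v)=\lceil ib/m\rceil$ depends only on $i$. So for a valid path $(e_{i_1},\dots,e_{i_t})$ the condition $\phi_{i_{k+1}}(x_k)<\phi_{i_k}(x_k)$ becomes $\phirandom_{i_{k+1}}<\phirandom_{i_k}$, independent of the vertex $x_k$. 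In other words, the phases of the edges along any valid path must be \emph{strictly decreasing}, so in particular no valid path can have length more than $b$.

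Next I would count valid paths of length $t \leq b$ starting at $v$ by first choosing the phase sequence $r_1 > r_2 > \cdots > r_t$ from $[b]$, giving $\binom{b}{t}$ choices. Given the sequence, the first edge $e_{i_1}$ must be an edge at $v$ with phase $r_1$, and for each $k\geq 2$ the edge $e_{i_k}$ must be incident to $x_{k-1}$ with phase $r_k$. The balance assumption on $\phirandom$ says that for any vertex $u$ and phase $r$, $|\timesteps^r(u)| \le 2\Delta/b$, so at each step there are at most $2\Delta/b$ choices. This gives at most $(2\Delta/b)^t$ paths per phase sequence, and at most $\binom{b}{t}(2\Delta/b)^t$ valid paths of length $t$ in total.

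Substituting into \Cref{prop:errorrecurrence},
\begin{align*}
\widehat{\epsilon}^{\;r}(v) \;\le\; \zeta\sum_{t=0}^{b} \binom{b}{t}\!\left(\tfrac{5}{\Delta\epsilon^2}\right)^{\!t}\!\left(\tfrac{2\Delta}{b}\right)^{\!t}
\;=\; \zeta\sum_{t=0}^{b}\binom{b}{t}\!\left(\tfrac{10}{b\epsilon^2}\right)^{\!t}
\;=\; \zeta\!\left(1+\tfrac{10}{b\epsilon^2}\right)^{\!b}
\;\le\; \zeta\,\exp\!\left(\tfrac{10}{\epsilon^2}\right).
\end{align*}
With $b=40/(\alpha\epsilon^2)$ from \Cref{def:constants}, the exponent simplifies as above, and for $\zeta \le \zeta(\epsilon,0)=e^{-20/\epsilon^2}\epsilon^3/10$ the right-hand side is at most $e^{-10/\epsilon^2}\epsilon^3/10 \le \epsilon^3/10$, as required by \Cref{def:errorcontrolled}.

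The main conceptual obstacle, compared to the dense case, is that we no longer have the $n \le M\Delta$ inequality that drives the $n^t/\lfloor t/2\rfloor!$ bound on paths; the replacement is the vertex-independence of $\phirandom$, which forces phase-strict-decrease and caps path length at $b$, together with the balance hypothesis, which bounds the per-step branching by $2\Delta/b$. Recognizing that these two properties together are exactly what one needs to make the geometric sum over paths converge (and noticing that the convergence is delicate enough to require the exponentially small choice of $\zeta(\epsilon,0)$) is the only nontrivial step; everything else is a routine binomial identity. The hypothesis $n \le 2^{\Delta/N}$ is not directly used here but is carried along because it was needed to obtain balance in \Cref{lem:goodordering}.
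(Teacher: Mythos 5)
Your proof is correct and follows essentially the same route as the paper: invoke \Cref{prop:errorrecurrence}, observe that the vertex-independence of $\phirandom$ forces strictly decreasing phases (hence path length at most $b$), count paths as $\binom{b}{t}(2\Delta/b)^t$ via the balance hypothesis, and bound the resulting binomial sum by $\zeta e^{10/\epsilon^2} \le \epsilon^3/10$. Your concluding remark that $n \le 2^{\Delta/N}$ is carried along only to feed into \Cref{lem:goodordering} is also a fair reading of how the hypotheses are used.
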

\begin{proof}
Recall that under $\phirandom$, the phase counters $\phirandom_{i}(v)$ for all $v\in V$ are updated in lockstep. Consequently, for any edge $e_j=(u,v)$, $\phirandom_{j}(u)=\phirandom_j(v)$.
As before, we bound $\widehat{\epsilon}^{\;r}(v)$  by  bounding $\card{\mathcal{P}^r(v)}$ in \Cref{prop:errorrecurrence}. Consider a valid path $P\in \mathcal{P}^r(v)$, where $P=(e_{i_1},\cdots, e_{i_t})$.  Let $(x_0,x_1,\cdots, x_t)$ be the sequence of vertices such that
$e_{i_j}=(x_{j-1},x_j)$. Recall $P$ has the property that $v=x_0$ and $e_{i_1}\in T^{\leq r}(v)$ and for all $1\leq k<t$, we have, $\phirandom_{i_{k}}(x_{i_k})> \phirandom_{i_{k+1}}(x_{i_k})$. In other words, each $e_{i_{k+1}}$ arrives in an earlier phase of $x_k$ than $e_{i_{k}}$.   Under $\phirandom$ the phase-partition for all vertices is the same so there is an associated unique phase, $r_{k}$ associated to $e_{i_k}$ and $r_{1}>r_{2}>\cdots >r_{t}$. So, we count the number of paths by first picking $r_{i}$'s and then fixing the edges themselves. The number of ways of picking $r_{i}$'s is at most $\binom{b}{t}$. Now we show how to inductively choose $e_{i_k}$'s. The number of ways of choosing $e_{i_1}$ after one has fixed $r_{1}$, is $\frac{2\Delta}{b}$ (by the definition of balance
property in \Cref{def:balance}). Having fixed edge $e_{i_j}$ and $r_{j+1}$ the number of ways of picking $e_{i_{j+1}}$ is at most $\frac{2\Delta}{b}$. Thus, we have, by \Cref{prop:errorrecurrence},
    \begin{align*}
        \widehat{\epsilon}^{\;r}(v) & \leq \zeta \sum_{t = 0}^{b} \left(\frac{5}{\Delta \epsilon^2}\right)^t\binom{b}{t}\left(\frac{2\Delta}{b}\right)^t \\ 
        & \leq \zeta \sum_{t = 0}^{b} \binom{b}{t}\paren{\frac{10}{b\epsilon^2}}^t \\
        & = \zeta \left(1+\frac{10}{b\epsilon^2}\right)^b \\
        & \leq \zeta e^{\left(\frac{10}{\epsilon^2}\right)}\\
        & \leq \frac{\epsilon^3}{10}
    \end{align*}
    where the last inequality follows from our choice of $\zeta(\epsilon,0)$ in \Cref{def:constants} and using the fact that $\zeta\leq \zeta(\epsilon,0)$.
\end{proof}

\section{Background on martingales and supermartingales}
\label{sec:martingales}
In this section we review needed definitions and facts about martingales and supermartingales
including Freedman's concentration inequality. We also prove a novel concentration inequality (\Cref{lem:generalconcentration}) that bounds the probability
that several related supermartingales all deviate significantly from their expectation,
which we prove using Freedman's inequality.

As in the previous section, we consider a random process which determines a sequence
$\{\gamestate_i\}$ where $\gamestate_i$ represents the history of the process through time $i$.
Suppose $\{Y_i\}$ is a sequence of random variables such that $Y_i$ is determined by
$\mathcal{\gamestate}_i$.   
\begin{itemize}
    \item $\{Y_i\}$ is a \emph{martingale} with respect to $\{\mathcal{\gamestate}_i\}$ provided that
    \[\Exp[Y_i \mid \mathcal{\gamestate}_{i-1}] = Y_{i-1}.\]
    \item $\{Y_i\}$ is a \emph{supermartingale} with respect to $\{\mathcal{\gamestate}_i\}$ provided that
    \[\Exp[Y_i \mid \mathcal{\gamestate}_{i-1}] \leq  Y_{i-1}.\]
\end{itemize}

Suppose $\{D_i\}$ is also a sequence of random variables where $D_i$
is determined by $\mathcal{\gamestate}_i$.
\begin{itemize}
    \item $\{D_i\}$ is a \emph{martingale difference sequence} with respect to $\{\mathcal{\gamestate}_i\}$ provided that
    \[\Exp[D_i \mid \mathcal{\gamestate}_{i-1}] = 0.\]
    \item $\{D_i\}$ is a \emph{supermartingale difference sequence} with respect to $\{\mathcal{\gamestate}_i\}$ provided that
    \[\Exp[D_i \mid \mathcal{\gamestate}_{i-1}] \leq  0.\]
\end{itemize}

For any martingale (resp. supermartingale) difference sequence $\{D_i\}$, the sequence
of sums $\{Y_i = \sum_{j = 1}^{i} D_j\}$ form a martingale (resp. supermartingale). Conversely,
for any martingale (resp. supermartingale) $\{Y_i\}$, the differences $\{D_i = Y_i-Y_{i-1}\}$ form a martingale (resp. supermartingale) difference sequence.

We make the following observation about properties of difference sequences that will be useful later on:
\begin{observation2}\label{obs:diffsubsequence}
    Let $\{D_i\}$ be a martingale difference sequence with respect to $\{\mathcal{\gamestate}_i\}$.
    \begin{enumerate}
        \item For any $\ell < i$, conditioning on $\mathcal{S}_{i-1}$ fixes $D_{\ell}$, so we have
    \[\Exp[D_{\ell}D_i \mid \mathcal{\gamestate}_{i-1}] = D_{\ell}(\mathcal{\gamestate}_{i-1})\Exp[D_i \mid \mathcal{\gamestate}_{i-1}] = 0.\]
        \item More generally, let $\{\beta_i\}$ be a sequence of random variables where $\beta_i$ is determined by $\mathcal{\gamestate}_{i-1}$. Then the 
    sequence $\{\beta_iD_{i}\}$ is also a martingale difference sequence with respect to $\{\mathcal{\gamestate}_i\}$:
        \[\Exp[\beta_iD_i \mid \mathcal{\gamestate}_{i-1}] = \beta_i(\mathcal{\gamestate}_{i-1})\Exp[D_i \mid \mathcal{\gamestate}_{i-1}] = 0.\]
    \end{enumerate}
\end{observation2}

We say that $\{\beta_iD_i\}$ is \emph{derived from} the martingale $\{D_i\}$
and refer to $\beta_i$ as the \emph{coefficient sequence} for the derived martingale.
We emphasize that the coefficients here are themselves  random variables.

In this paper, the difference sequences that we consider are indexed by  the time
steps of the process (corresponding to the edges of the graph) and we will 
associate a $\beta$ sequence to each vertex, where
the $\beta$-sequence associated to $v$ is nonzero only on the edges that touch $v$. For our results, we will need to show that certain derived supermartingales associated with particular vertices are unlikely to exceed their expected value by too much. The well-known Azuma-Hoeffding bound provides bounds of this type where the bound (on
the probability of too large a value)
degrades inversely with $\sum_i \lambda_i^2$ where $\lambda_i$ is a bound on the
$i$th difference $D_i$. Unfortunately, in our applications the bounds on the individual $D_i$ are
not sufficient to get a good bound from Azuma-Hoeffding, which does not allow us to
take advantage of the situation we have where most of the $D_i$ are 0, but we don't know
which ones in advance.   Instead we use a variant of Azuma-Hoeffding due to Freedman that is well-suited
for analyzing processes whose evolution is partially controlled by an adversary (which for us is Builder).  In Freedman's theorem one considers the auxiliary sequence $V_i=\text{Var}(D_i \mid \mathcal{\gamestate}_{i-1})$, which is the variance of $D_i$ conditioned on $\gamestate_{i-1}$.
Note that $V_i$ is itself a random variable and Freedman's theorem applies in situations
when $\sum_i V_i$ can be bounded above with probability 1.  (Actually, Freedman's theorem
applies even if the bound on $\sum_i V_i$ fails with small probability, but we only need
a simplified version where $\sum_i V_i$ is always bounded.)

\begin{lemma2}\cite[Theorem~4.1]{Freedman}\label{lem:freedman}
Suppose $\{Y_i\}$ is a supermartingale with respect to $\{\mathcal{\gamestate}_i\}$ and its corresponding difference sequence $\{D_i\}$ satisfies $\card{D_i} \leq D$ for all $i$. Let $V_i = \text{Var}(D_i \mid \mathcal{\gamestate}_{i-1})$ and $W_i = \sum_{j \leq i} V_j$, and suppose $W_{\final} \leq b$ with probability 1. Then

\[\prob{Y_m \geq Y_0+\delta} \leq \exp\left(-\frac{\delta^2}{2(D\cdot\delta/3+b)}\right).\]
\end{lemma2}

In the next section we will use \Cref{lem:freedman} to bound the probability that the derived supermartingale associated to a vertex gets too large. This bound is stated in the first part of the Lemma below. Additionally, there will be times we want to show that for a set of $C$ vertices, $v_1,...,v_C$, the derived supermartingales associated to each of those vertices cannot {\em all} become too large at once. In particular, we would like to show that the probability of this occurring decays exponentially in $C$. Note that if we were guaranteed that the edges adjacent to each of the $v_k$ were disjoint and arrived contiguously - that is, if we could partition the difference sequence $\{D_i\}_{i = 1}^{m}$ into $C$ sequences $\{D_i\}_{i = 1}^{m_1},...,\{D_i\}_{i = m_{C-1}+1}^{m_C}$ such that the derived supermartingale associated to $v_k$ was nonzero only on $\{D_i\}_{i = m_{k-1}+1}^{m_k}$ - then we could get this result by iteratively applying \Cref{lem:freedman}, since conditioned on the value of $\{D_i\}_{i = 1}^{m_{k-1}}$, the sequence $\{D_i\}_{i = m_{k-1}+1}^{m_k}$ is still a difference sequence. However, in our case, the arrival times of the edges for the different vertices can be interleaved, and even intersect. Nevertheless, we manage to provide a general sufficient condition for a similar conclusion to hold.

\begin{lemma2}\label{lem:generalconcentration}
Suppose $\{D_i\}_{i = 1}^{t}$ is a martingale difference sequence with respect to $\{\mathcal{\gamestate}_i\}$ such that $|D_i| \leq 1$ for all $i$ and let $\alpha,a,\Delta$ be
positive real numbers where $\Delta$ is sufficiently large (it will be enough to take $\Delta \geq \frac{2a}{\alpha^2}$). 
\begin{enumerate}
    \item Suppose $\{\beta_i\}_{i=1}^t$ is a coefficient sequence where $\beta_i$ depends only on $\gamestate_{i-1}$ and
such that with probability 1, $\sum_{i=1}^t |\beta_i| \leq \Delta$ and $|\beta_i| \leq a$ for all $i$.
Then:
\[\prob{\sum_i \beta_iD_i \geq \alpha\Delta} \leq \exp\left(-\frac{\alpha^4\Delta}{128a}\right).\]
\item Suppose that for each $k\in \{1,\ldots,C\}$, 
$\{\beta^k_i\}_{i=1}^t$ is a coefficient sequence where $\beta^k_i$ depends only on $\gamestate_{i-1}$ and
such that with probability 1, $\sum_{i=1}^t |\beta^k_i| \leq \Delta$ for all $k$. Suppose further that with probability 1,
for all $i \in \{1,\ldots,t\}$, $\sum_{k=1}^C|\beta^k_i| \leq a$.  Then:
\[\prob{\forall k, \sum_i \beta^k_iD_i \geq \alpha\Delta} \leq \exp\left(-\frac{C\alpha^4\Delta}{128a}\right).\]
\end{enumerate}
\end{lemma2}

The key thing to note about the conclusion is that for $\alpha$, $\Delta$ and $a$ fixed, the probability upper bound shrinks exponentially with $C$.  The first part of the Lemma
is just the case $C=1$ of the second part; we stated it separately to help the reader
to digest the lemma statement, and also because the special case $C=1$ will be applied twice in what follows.

\begin{proof}[Proof of \Cref{lem:generalconcentration}]
The proof is obtained by applying \Cref{lem:freedman} to a single random sequence $\{Y_j\}$ that is constructed from $\{D_i\}$ and all $C$ coefficient sequences.
Let $Y_0=0$, and $j \in \{1,\ldots,t\}$ let:
    \[Y_j = \sum\limits_{k = 1}^{C} \left[\left(\sum_{i \leq j} \beta^k_i D_i \right)^2  - \sum_{i \leq j}(\beta^k_i)^2\right].\]
    If it is the case that for all $k$, $|\sum_{i=1}^t \beta^k_iD_i| \geq \alpha\Delta$ then:
 \[Y_t = \sum\limits_{k = 1}^{C} \left[ \left(\sum_{i \leq t} \beta^k_i D_i \right)^2 - \sum_{i \leq t}(\beta^k_i)^2 \right] \geq C\alpha^2\Delta^2-C\Delta a \geq \frac{C\alpha^2\Delta^2}{2},\]
    since $\Delta \geq \frac{2a}{\alpha^2}$, and therefore:
    \[\prob{\forall k, \sum_i \beta^k_iD_i \geq \alpha\Delta} \leq \prob{Y_t \geq  \frac{C\alpha^2\Delta^2}{2}},
    \]
    so it suffices to bound the probability on the right.  We first show that $\{Y_j\}$
    is a supermartingale.
    Defining $\{Z_j\}$ to be the difference sequence associated to $\{Y_j\}$, we have
    \[Z_j = \sum\limits_{k = 1}^{C} \left[(\beta^k_j)^2D_j^2+2(\beta^k_j)D_j\sum_{i < j} \beta^k_iD_i  - (\beta^k_j)^2\right].\]
    To see that $\{Y_j\}$ is a supermartingale, note that by \Cref{obs:diffsubsequence}, for any $\ell < i$,
    \[\Exp[D_{\ell}D_i \mid \mathcal{\gamestate}_{i-1}] = D_{\ell}\Exp[D_i \mid \mathcal{\gamestate}_{i-1}] = 0.\]
    Furthermore, since $|D_i| \leq 1$ for all $i$, we have $\Exp[D_i^2 \mid \mathcal{\gamestate}_{i-1}] \leq 1$.
    Therefore, for any $1 \leq j \leq t$, we have
    \begin{align*}
        \Exp[Y_j-Y_{j-1} \mid \mathcal{\gamestate}_{j-1}] =\Exp[Z_j \mid \mathcal{\gamestate}_{j-1}] & =  \sum\limits_{k = 1}^{C} \left[(\beta^k_j)^2\Exp[D_j^2 \mid \mathcal{\gamestate}_{j-1}] +2\beta^k_{j}\sum_{i < j} \beta^k_i\Exp[D_jD_i \mid \mathcal{\gamestate}_{j-1}]   -(\beta^k_{j})^2\right]\\
        & =  \sum\limits_{k = 1}^{C}(\beta^k_j)^2\left(\Exp[D_j^2 \mid \mathcal{\gamestate}_{j-1}]-1\right)\\
        & \leq 0,
    \end{align*}
    which shows that $\{Y_j\}$ is indeed a supermartingale. We now
    will use \Cref{lem:freedman} to upper bound the indicated probability.  For this,
    we must bound the variance sums $\{W_j\}$ of $\{Y_j\}$ and the absolute values of the associated difference sequences $\set{Z_i}$. Note:
    \begin{align}
        |Z_j|& \leq \sum\limits_{k = 1}^{C} \left|(\beta^k_j)^2D_j^2+2\beta^k_jD_j\sum_{i < j} \beta^k_iD_i  - (\beta^k_{j})^2\right| \nonumber \\
        & \leq \sum\limits_{k = 1}^{C}\left( 2|\beta^k_j|^2+2|\beta^k_j|\sum_{i < j} |\beta^k_i|\right) & \text{(since $|D_i|\leq 1$)} \nonumber\\
       & = \sum\limits_{k = 1}^{C}2|\beta^k_j| \left(\sum_{i \leq j} |\beta^k_i|\right) \nonumber\\
       &\leq 2\Delta \sum\limits_{k = 1}^{C}|\beta^k_j| 
        & \text{(since $\sum_i |\beta^k_i| \leq \Delta$)} \label{eqn:superimp}\\
        & \leq 2\Delta a & \text{(since $\sum_k |\beta^k_i| \leq a$) \nonumber}
    \end{align}
Thus, 
\[V_j = \text{Var}(Z_j \mid \mathcal{\gamestate}_{j-1}) \leq \Exp[Z_j^2 \mid \mathcal{\gamestate}_{j-1}] \overset{(\ref{eqn:superimp})}{\leq}  2\Delta a \sum\limits_{k = 1}^{C}2\Delta|\beta^k_j|  = 4\Delta^2 a  \sum\limits_{k = 1}^{C}|\beta^k_j|  \]
which tells us
\[W_t \leq \sum_{j} V_j \leq \sum_{j} 4\Delta^2 a  \sum\limits_{k = 1}^{C}|\beta^k_j| = 4\Delta^2 a  \sum\limits_{k = 1}^{C}\sum_{j}|\beta^k_j| \leq 4\Delta^2 a  \sum\limits_{k = 1}^{C} \Delta \leq 4aC\Delta^3 \]
Then \Cref{lem:freedman} tells us that,
    \begin{align*}
        \prob{Y_t \geq  \frac{C\alpha^2\Delta^2}{2}} &\leq \exp\left(-\frac{C^2\alpha^4\Delta^4}{4\cdot(C\alpha^2a\Delta^3+4aC\Delta^3)}\right)\\
        &\leq \exp\left(-\frac{C\alpha^4\Delta}{128a}\right).
    \end{align*}

\end{proof}

\section{Well-Behaved Colorings}\label{section:wellbehavedcolorings}

\subsection{Some Martingales Difference Sequences}\label{section:differencesequences}
Our main goal in this section will be to define the martingale difference sequences we will be considering. Recall that we are viewing the progression of the algorithm as a filtered probability space with $\mathcal{\gamestate}_i$ representing the space of partial colorings of the first $i$ edges to arrive. We first introduce the random variables which will form the basis for the difference sequences we track throughout the course of the algorithm. All of the quantities defined below for an edge $e_i$ will be set to 0 by default if $e_i$ is null or we are unable to color $e_i$.

\begin{definition2}[Collision Variables]
    The following random variables relate to the collisions experienced by algorithm $\mathcal{A}'$.  For a non-null edge $e_i=(u,v)$
    \begin{itemize}
        \item $Z_{i}$ is defined to be 1 if $e_i$ is in a collision  and then successfully colored, and is 0 otherwise. Thus $Z_{i}=1$ provided that there is a collision at $e_i$ (i.e., the preliminary color is not valid) and $F_{i-1}(u) \cap F_{i-1}(v) \neq \emptyset$.
        \item $q_{i}= \Exp[Z_{i} \mid \mathcal{\gamestate}_{i-1}]$.
        If $F_{i-1}(u) \cap F_{i-1}(v) = \emptyset$ this is 0. Otherwise:  
        \[q_{i}: = 1- \frac{|F_{i-1}(u) \cap F_{i-1}(v)|}{|A_{i-1}(u) \cap A_{i-1}(v)|} = \frac{|(A_{i-1}(u) \cap A_{i-1}(v)) \setminus (F_{i-1}(u) \cap F_{i-1}(v))|}{|A_{i-1}(u) \cap A_{i-1}(v)|}.\]
        Since the phase counter functions we use are balanced, at most  $\frac{4\Delta}{b}$ colors from $A_{i-1}(u) \cap A_{i-1}(v)$ could have been used by the time we color $(u,v)$, which gives us:
            \begin{align}\label{eqn:boundcollisions}
                q_{i} \leq \frac{|(A_{i-1}(u)\setminus F_{i-1}(u)) \cup (A_{i-1}(v) \setminus F_{i-1}(v))|}{|A_{i-1}(u) \cap A_{i-1}(v)|} \leq \frac{4\Delta}{b \cdot |A_{i-1}(u) \cap A_{i-1}(v)|}. 
            \end{align}
        \item $\{Z_{i}-q_{i}\}_{i = 1}^{m}$ is a martingale difference sequence with respect to $\{\mathcal{\gamestate}_i\}$, since
            $\Exp[Z_{i}-q_{i} \mid \mathcal{\gamestate}_{i-1}] = 0$. Additionally, since $Z_i,q_i \in [0,1]$, we have for all $i$,
            \[|Z_{i}-q_{i}| \leq 1.\]
    \end{itemize}
\end{definition2}

 The significance of the next set of variables is little more subtle. Recall that our goal is to bound
 \[\delta^r(v,S) := \frac{|A^r(v) \cap S|}{|A^r(v)|} - \frac{|S|}{(1+\epsilon)\Delta},\]
 the error of vertex $v$ with respect to color set $S$ after its $r^{th}$ phase. A natural way to do this would be to track how often the colors chosen for edges incident to  $v$ hit $S$. However, the probability that the color of an edge $(u,v)$ hits $S$ is highly dependent on the palette of $u$, which makes it difficult to control $\delta^r(v,S)$ on its own. Instead we consider  a related, and easier to control, quantity: the difference between the probability of the color of an edge $(u,v)$ hitting $S$ and the indicator for the event. This doesn't directly bound $\delta^r(v,S)$, but it does allow us to approximate $|A^r(v) \cap S|$ in terms of intersections of the form $|A^{r'}(v') \cap S'|$ for neighbors $v'$ of $v$ which - crucially - complete their phase $r'$ before $v$ completes its phase $r$. This will allow us to use an inductive argument to bound $\delta^r(v,S)$ in terms of such $\delta^{r'}(v',S')$. This induction argument is detailed in \Cref{section:mainlemma}.
 
\begin{definition2}[Difference Variables]
Let $e_i = (u,v)$ be a non-null edge and $S \subseteq \Gamma$.
    \begin{itemize}
        \item $X_{i}(S)$ is 1 if the preliminary color for $e_i$ belongs to $S$ and 0 otherwise.
        \item $Y_{i}(S)$ is 1 if the final color chosen for $e_i$ belongs to $S$ and is 0 otherwise.  
        Note that
        \[|X_{i}(S)-Y_{i}(S)| \leq Z_{i},\]
        since the final color chosen for $e_i$ differs from the preliminary color only if there is a collision.
        \item  $p_{i}(S)$ is the probability that the preliminary color chosen for edge $e_i$ is in $S$, conditioned on the coloring of all previous edges:
        \[p_{i}(S) := \Exp[X_{i}(S) \mid \mathcal{\gamestate}_{i-1}] = \frac{|A_{i-1}(u) \cap A_{i-1}(v) \cap S|}{|A_{i-1}(u) \cap A_{i-1}(v)|}.\]
        \item  $D_{i}(S) = X_{i}(S) - p_{i}(S).$
        This is a martingale difference sequence with respect to ${\mathcal{\gamestate}_i}$ since:
        \[\Exp[D_{i}(S) \mid \mathcal{\gamestate}_{i-1}] = \Exp[X_{i}(S) -p_{i}(S) \mid \mathcal{\gamestate}_{i-1}] = 0,\]
        \item Furthermore if $S_1,S_2,\ldots$ is a sequence of color sets
        where $S_i$ is determined by $\mathcal{\gamestate}_{i-1}$ then 
        $\{D_{i}(S_i)\}$ is also a martingale difference sequence, satisfying
        \[|D_{i}(S_i)| = |X_{i}(S_i)-p_i(S_i)| \leq 1\]
        for all $i$. This is important to note down since, often $S_i$ will correspond to palettes of a vertex or edge at a time $i$. These are determined by $\mathcal{S}_{i-1}$.
    \end{itemize}

\end{definition2}

The following proposition relates the variables above to the error $\delta^r(v,S)$ and motivates the difference sequences we will define. For a vertex $v$ and phase $\ell$, let 
\[\widetilde{\timesteps}^{\ell}(v) = \{j : j \in \timesteps^{\ell}(v) \text{ and $e_j$ is successfully colored}\}.\]
Note that $|\widetilde{\timesteps}^{\ell}(v)| = |U^{\ell}(v)|$ and recall that if $e_i$ was not colored, then by definition $Z_i = Y_i(S) = X_i(S) = D_i(S) = p_i(S) = 0$ for all $S$.

\begin{proposition}\label{prop:deltatomartingales}
    For any vertex $v$, subset of colors $S$, and phase $r$ of $v$, $|\delta^r(v,S)|$ is upper bounded by
     \[ \frac{1}{|A^{r}(v)|} \sum_{i \in \timesteps^{\leq r}(v)}Z_{i} ~ + ~ \left|\sum_{\ell = 1}^{r} \frac{1}{|A^{\ell}(v)|}\sum_{i \in \timesteps^{\ell}(v)}D_{i}(S) \right| ~ + ~ \sum_{\ell = 1}^{r} \frac{1}{|A^{\ell}(v)|} \sum_{i \in \widetilde{\timesteps}^{\ell}(v)} \left|p_{i}(S) - \frac{|A^{\ell-1}(v) \cap S|}{|A^{\ell-1}(v)|} \right|.\]
\end{proposition}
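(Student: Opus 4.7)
The plan is to derive the bound by telescoping $\delta^r(v,S)$ over the phases $\ell = 1,\dots,r$. Starting from $\delta^0(v,S) = 0$ and invoking the per-phase identity~\eqref{eqn:usedcolorssize} already established in the proof-idea section, telescoping yields
\begin{align*}
\delta^r(v,S) \;=\; -\sum_{\ell=1}^{r}\frac{1}{|A^{\ell}(v)|}\left(|U^{\ell}(v) \cap S| \;-\; |A^{\ell-1}(v) \cap S|\cdot\frac{|U^{\ell}(v)|}{|A^{\ell-1}(v)|}\right).
\end{align*}

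Next I reinterpret each summand at the edge level. The colors in $U^{\ell}(v)$ are exactly the final colors of the successfully colored edges in $\widetilde{\timesteps}^{\ell}(v)$, so $|U^{\ell}(v) \cap S| = \sum_{i \in \widetilde{\timesteps}^{\ell}(v)} Y_i(S)$ and $|U^{\ell}(v)| = |\widetilde{\timesteps}^{\ell}(v)|$. Writing $q^{\ell} := |A^{\ell-1}(v) \cap S|/|A^{\ell-1}(v)|$, the $\ell$-th summand becomes $\sum_{i \in \widetilde{\timesteps}^{\ell}(v)}(Y_i(S)-q^{\ell})$. The crucial algebraic step is the three-way decomposition
\begin{align*}
Y_i(S) - q^{\ell} \;=\; \underbrace{\bigl(Y_i(S) - X_i(S)\bigr)}_{\text{collision error}} \;+\; \underbrace{D_i(S)}_{\text{martingale error}} \;+\; \underbrace{\bigl(p_i(S) - q^{\ell}\bigr)}_{\text{palette approximation error}},
\end{align*}
which produces exactly the three terms appearing in the claimed bound.

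Finally I apply the triangle inequality term by term. For the collision piece, $|Y_i(S)-X_i(S)| \leq Z_i$ by definition, and since $A^{r}(v) \subseteq A^{\ell}(v)$ for all $\ell \leq r$ (the palette is monotonically nonincreasing across phases), I may replace each factor $1/|A^{\ell}(v)|$ by the larger $1/|A^{r}(v)|$ and pull it outside the sum; extending the summation from $\widetilde{\timesteps}^{\leq r}(v)$ to all of $\timesteps^{\leq r}(v)$ is harmless because $Z_i = 0$ on null or uncolored edges. For the martingale piece, I keep the outer absolute value in place so that cancellations across edges and phases are preserved, and replace $\widetilde{\timesteps}^{\ell}(v)$ by $\timesteps^{\ell}(v)$ using that $D_i(S) = 0$ on null or uncolored edges. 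For the approximation piece, I push the absolute value inside to bound $|p_i(S)-q^{\ell}|$ pointwise.

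There is no substantial obstacle: the proof is essentially a clean algebraic rearrangement. The only subtleties are the palette-monotonicity observation needed for the collision term and careful bookkeeping around the convention that $X_i(S), Y_i(S), Z_i, p_i(S), D_i(S)$ all vanish on null or uncolored edges, which legitimizes each of the index-set substitutions. The substantive work in the paper then lies in showing that each of the three sums on the right-hand side is small, which is handled in subsequent sections via the Freedman-type concentration bound of \Cref{lem:generalconcentration}.
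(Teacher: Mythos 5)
Your proof is correct and takes essentially the same route as the paper: both telescope $\delta^r(v,S)$ using the per-phase identity~\eqref{eqn:usedcolorssize}, decompose $Y_i(S)-|A^{\ell-1}(v)\cap S|/|A^{\ell-1}(v)|$ into the collision term $Y_i(S)-X_i(S)$, the martingale term $D_i(S)$, and the palette-approximation term $p_i(S)-|A^{\ell-1}(v)\cap S|/|A^{\ell-1}(v)|$, and then apply the triangle inequality, using palette monotonicity $A^r(v)\subseteq A^{\ell}(v)$ to consolidate the $Z_i$ sum with a single factor of $1/|A^r(v)|$ and the vanishing convention on null/uncolored edges to freely swap $\timesteps^{\ell}(v)$ and $\widetilde{\timesteps}^{\ell}(v)$. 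The only cosmetic difference is that you perform the three-way decomposition once at the edge level, whereas the paper introduces the three splits sequentially across successive displayed lines.
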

\begin{proof}
    From \Cref{eqn:usedcolorssize},
    \[\delta^{\ell}(v,S)-\delta^{\ell+1}(v,S) = \frac{1}{|A^{\ell+1}(v)|}\left(|U^{\ell+1}(v) \cap S| - |A^{\ell}(v) \cap S|\cdot \frac{|U^{\ell+1}(v)|}{|A^{\ell}(v)|}\right).\]
    By definition $\delta^0(v,S) = 0$ and so:
    \begin{align*}
|\delta^r(v,S)| &= \underbrace{\left| \sum_{\ell = 1}^{r} \delta^{\ell-1}(v,S)-\delta^{\ell}(v,S)  \right| }_{(i)}
\end{align*}
By the discussion above, we have,
\begin{align*}
 (i) &= \left| \sum_{\ell = 1}^{r}  \frac{1}{|A^{\ell}(v)|} \left(|U^{\ell}(v) \cap S| - |U^{\ell}(v)|\cdot \frac{|A^{\ell-1}(v) \cap S|}{|A^{\ell-1}(v)|}\right) \right|\\
 &= \left| \sum_{\ell = 1}^{r}  \frac{1}{|A^{\ell}(v)|} \left(\sum_{i \in \timesteps^{\ell}(v)}Y_{i}(S) -|U^{\ell}(v)|\cdot \frac{|A^{\ell-1}(v) \cap S|}{|A^{\ell-1}(v)|}\right) \right|\\
 &= \left| \sum_{\ell = 1}^{r}  \frac{1}{|A^{\ell}(v)|} \left(\sum_{i \in \timesteps^{\ell}(v)}(Y_{i}(S)-X_{i}(S))+\sum_{i \in \timesteps^{\ell}(v)}X_{i}(S) - |U^{\ell}(v)|\cdot \frac{|A^{\ell-1}(v) \cap S|}{|A^{\ell-1}(v)|}\right) \right|\\
 &\leq \sum_{\ell = 1}^{r}  \frac{1}{|A^{\ell}(v)|} \sum_{i \in \timesteps^{\ell}(v)}\left| Y_{i}(S)-X_{i}(S)\right|+\underbrace{\left| \sum_{\ell = 1}^{r}  \frac{1}{|A^{\ell}(v)|} \left(\sum_{i \in \timesteps^{\ell}(v)}X_{i}(S) - |U^{\ell}(v)|\cdot \frac{|A^{\ell-1}(v) \cap S|}{|A^{\ell-1}(v)|}\right)\right|.}_{(ii)} 
\end{align*}
The term $(ii)$ can be rewritten as follows,
\begin{align*}
    (ii)&=\left| \sum_{\ell = 1}^{r}  \frac{1}{|A^{\ell}(v)|} \left(\sum_{i \in \timesteps^{\ell}(v)}(X_{i}(S)-p_{i}(S))+\sum_{i \in \timesteps^{\ell}(v)}p_{i}(S) -|U^{\ell}(v)|\cdot \frac{|A^{\ell-1}(v) \cap S|}{|A^{\ell-1}(v)|}\right) \right|\\
    &=\left| \sum_{\ell = 1}^{r}  \frac{1}{|A^{\ell}(v)|} \left(\sum_{i \in \timesteps^{\ell}(v)}D_{i}(S)+\sum_{i \in \widetilde{\timesteps}^{\ell}(v)} \left(p_{i}(S) - \frac{|A^{\ell-1}(v) \cap S|}{|A^{\ell-1}(v)|}\right)\right) \right|\\
    &\leq \left| \sum_{\ell = 1}^{r}  \frac{1}{|A^{\ell}(v)|} \sum_{i \in \timesteps^{\ell}(v)}D_{i}(S)\right|+\sum_{\ell = 1}^{r}  \frac{1}{|A^{\ell}(v)|} \sum_{i \in \widetilde{\timesteps}^{\ell}(v)} \left|p_{i}(S) - \frac{|A^{\ell-1}(v) \cap S|}{|A^{\ell-1}(v)|}\right|.
\end{align*}
Thus, since we have $\card{Y_i(S)-X_i(S)}\leq Z_i$, we can conclude,
\begin{align*}
    (i)\leq \frac{1}{|A^{r}(v)|} \sum_{i \in \timesteps^{\leq r}(v)}Z_{i} +\left| \sum_{\ell = 1}^{r}  \frac{1}{|A^{\ell}(v)|} \sum_{i \in \timesteps^{\ell}(v)}D_{i}(S)\right|+\sum_{\ell = 1}^{r}  \frac{1}{|A^{\ell}(v)|} \sum_{i \in \widetilde{\timesteps}^{\ell}(v)} \left|p_{i}(S) - \frac{|A^{\ell-1}(v) \cap S|}{|A^{\ell-1}(v)|}\right|.
\end{align*}
\end{proof}

\subsection{Bad Events}\label{section:badevents}
In this section, we will define certain bad events for the run of the algorithm. These bad events are that some ``error quantities" associated with the algorithm grow too large. These bad events, and their likelihood of occurring will be defined in terms of parameters in \Cref{def:constants}. Note that our parameters vary depending on the random-order or dense case. In particular, in the dense case they depend on $M$. 

We now identify three bad events, each associated with one of the three summands
in \Cref{prop:deltatomartingales}.  If none of them occur, we say that the resulting
coloring is \emph{well-behaved}. In this section we show that the coloring is very likely
to be well-behaved.  In the next section we show that in the two situations (an oblivious
strategy that uses an arbitrary graph and random order, or an adaptively chosen dense graph)
a well-behaved coloring will not have any uncolored edges.

The reader is reminded that various technical parameters are collected in~\Cref{def:constants}.
The key parameter in this section is $\alpha$.

The first type of bad event will occur if there are too many collisions at a particular vertex. This event corresponds directly to the first summand in \Cref{prop:deltatomartingales}.
\begin{definition2}[Too Many Collisions, $\mathcal{W}(v)$]\label{def:eventmanycollisions}
   Given a vertex $v$, the bad event $\mathcal{W}(v)$ occurs if there exists a $j \in \{1,\ldots,m\}$ such that:
   \[\sum_{i \in \timesteps(v), i \leq j} (Z_i-q_i) > \alpha \Delta.\]
\end{definition2}

The second type of bad event  relates to the  second summation in \Cref{prop:deltatomartingales}. 
As mentioned earlier, we can't hope to say that the summation is suitably small for all choices of $S$ and $v$ but it will be enough that for all $S$ it is small for all but constantly many $v$.

\begin{definition2}[Too many $S$-atypical vertices, $\mathcal{B}(S)$]\label{def:toomuchdriftvertex}
For a  vertex $v$ and color set $S$, we say that $v$ is \emph{$S$-atypical} if there is a $v$-phase $1 \leq r \leq b$ such that: 
\[\left| \sum_{\ell = 1}^{r}  \frac{1}{|A^{\ell}(v)|} \sum_{i \in \timesteps^{\ell}(v)}D_{i}(S)\right| > \frac{\alpha}{\epsilon}\] 

Let $B(S)$ be the set of $S$-atypical vertices.  We say that the bad event $\mathcal{B}(S)$ occurs if $|B(S)|\geq C$.  (Here $\alpha$ and $C$ are as given in Definition ~\ref{def:constants}.)
\end{definition2}

The final family of bad events helps track ${|F_{i}(u) \cap F_{i}(v)|}$ for an edge $e= (u,v)$. This will ultimately be used to show that no edge runs out of colors.
\begin{definition2}[Too Much Drift at a pair of vertices, $\mathcal{D}(u,v)$]\label{def:toomuchdriftedge}
    Given a pair of vertices $u,v$, let $S_i = F_{i}(u) \cap F_{i}(v)$ be the set of colors free at both $u$ and $v$ at time $i$. Then, the bad event $\mathcal{D}(u,v)$ occurs if, 
    \[\left|\sum_{\substack{i \in \timesteps(u)\cup\timesteps(v) \\ j_1 \leq i \leq j_2}} D_{i}(S_i) \right| > \alpha \Delta\]
    for any $1 \leq j_1 \leq j_2 \leq m$.
\end{definition2}

Next we will bound the probability of too many bad events occurring to show that the algorithm succeeds with high probability.

\begin{definition2}[Well-behaved Coloring]\label{def:well-behavedcoloring}
We say that a \textbf{coloring is well-behaved} if:
\begin{enumerate}
    \item There are no vertices $v$ such that $\mathcal{W}(v)$ occurs. 
    \label{item:badeventW}
    \item There are no sets $S$ such that $\mathcal{B}(S)$ occurs. 
    \label{item:badeventDSv}
    \item There are no pairs of vertices $u,v$ such that $\mathcal{D}(u,v)$ occurs. 
    \label{item:badeventDe}
\end{enumerate}
\end{definition2}

Next, we state the following lemma, which bounds the probability of a coloring induced by $\mathcal{A}'$ not being well-behaved. We emphasize that this lemma applies even for adaptive adversaries, and to sparse graphs,
provided that $n \leq 2^{\frac{\Delta}{N}}$.

\begin{lemma2}\label{lem:coloringwell-behaved}
If $n \leq 2^{\frac{\Delta}{N}}$ and $\Delta$ is sufficiently large, then with probability at least $1-\exp\paren{-\frac{\alpha^4\Delta}{1000}}$, the events \Cref{def:well-behavedcoloring}\ref{item:badeventW}-\ref{item:badeventDe} do not occur, and consequently, the coloring is well-behaved.
\end{lemma2}
For this lemma, $\Delta \geq \frac{10^{10}}{\alpha^8} \geq \max(N^2,(3bC)^{3/2})$ will be sufficiently large.
\begin{proof}
 We show that the coloring is well-behaved by enumerating over each of the conditions \ref{item:badeventW}-\ref{item:badeventDe}, and bounding the probability they fail.
 \begin{enumerate}
     \item Fix a vertex $v$ and time $1 \leq j \leq m$.  Apply the first part of Lemma~\ref{lem:generalconcentration} with
     \[\beta_{i} = \begin{cases} 1 & i \in \timesteps(v), i \leq j\\ 0 & \text{otherwise } \\ \end{cases}. \]
    
    Since the event that $i \in \timesteps(v)$ depends only on $\mathcal{\gamestate}_{i-1}$, the same holds for $\beta_i$. We have  $|\beta_{i}| \leq 1$ and $\sum |\beta_{i}| \leq |\timesteps(v)| \leq \Delta$. Applying the first part of \Cref{lem:generalconcentration} with $a = 1$
    we obtain:
     
     \[\prob{\sum_{i \in \timesteps(v),i \leq j} (Z_i-q_i) > \alpha \Delta } = \prob{\sum_{i = 1}^{m} \beta_i(Z_i-q_i) > \alpha \Delta } \leq \exp\left(-\frac{\alpha^4\Delta}{128}\right).\]
     
        Taking a union bound over at most $m \leq n\Delta \leq \Delta \cdot 2^{\frac{\Delta}{N}}$ choices for $j$, we see that
        \begin{align*}
        \prob{\mathcal{W}(v) \text{ occurs}} \leq \Delta \cdot 2^{\frac{\Delta}{N}} \cdot \exp\left(-\frac{\alpha^4\Delta}{128}\right)
        \end{align*}
        for any vertex $v$. Then taking a union bound over at most $2^{\frac{\Delta}{N}}$ vertices, we get
        \begin{align*}
            \prob{ \exists v ~ \text{s.t.} ~ \mathcal{W}(v) \text{ occurs}} &\leq \Delta \cdot 2^{\frac{2\Delta}{N}}\cdot\exp\left(-\frac{\alpha^4\Delta}{128}\right)\\
            &\leq \exp\paren{-\frac{\alpha^4\Delta}{128}+\frac{2\Delta}{N}+\ln \Delta}\\
            &\leq \exp\paren{-\frac{\alpha^4\Delta}{500}}.
        \end{align*}
Here we use the fact that $\Delta\geq N^2$ to conclude that $\ln \Delta \leq \frac{\Delta}{N}$ and we use the fact that $N \geq C = \frac{2000}{\alpha^4}$.
        
     \item Fix a set $S$ of colors and set $v_1,...,v_C$ of $C$ vertices. By definition
     if $v_1,...,v_C$ are all $S$-atypical, then for each $k \in \{1,\ldots,C\}$ there is a $v_k$-phase $r_k \in \{1,\ldots, b\}$ such that: 
\[\left| \sum_{\ell = 1}^{r_k}  \frac{1}{|A^{\ell}(v_k)|} \sum_{i \in \timesteps^{\ell}(v_k)}D_{i}(S)\right| > \frac{\alpha}{\epsilon}\] 
We can think of this sum as having the form $\sum_{i\geq 1} \beta^k_i D_i(S)$ where
\[
\beta_i^k = \begin{cases} \frac{1}{|A^{\phi_i(v_k)}(v_k)|} & \text{if $i \in \timesteps^{\leq r_k}(v_k)$}\\
0 & \text{otherwise},
\end{cases}
\]
and then we might hope to apply~\Cref{lem:generalconcentration}.  However, the lemma
requires that $\beta_i^k$ be determined by $\gamestate_{i-1}$ and that is not the case here
because $|A^{\phi_i(v)}(v)|$ depends on the number of  edges incident on $v$ through the end of
$v$-phase $\phi_i(v)$ (and whether they are colored or not) and this is not determined by $\gamestate_{i-1}$.

We address this by constructing a family of \emph{fixed coefficient sequences} which is large enough
that one of them agrees with the above coefficient sequence. Now for each choice of $C$ fixed coefficient sequences (one for each vertex) we will apply ~\Cref{lem:generalconcentration}, and
then take a union bound over all such choices.

We note that all of the above coefficients are of the form $1/|A^{\ell}(v)|$ where
$|A^{\ell}(v)|$ is an integer between $\epsilon \Delta$ and $(1+\epsilon)\Delta$.  
Thus  if $v_1,...,v_C$ are all $S$-atypical, then for each $k \in \{1,\ldots,C\}$ there is a $v_k$-phase $1 \leq r_k \leq b$ and for each $\ell$ between 1 and $b$ there is an integer
$s^{\ell}_k\in [\epsilon\Delta,(1+\epsilon)\Delta]$  such
 that:
\[\left| \sum_{\ell = 1}^{r_k}  \frac{\Delta \epsilon}{s^{\ell}_k} \sum_{i \in \timesteps^{\ell}(v_k)}D_{i}(S)\right| > \alpha \Delta\] 
Consider a fixed choice of $r_k$ and $s^{\ell}_k:1 \leq \ell \leq b, k \in \{1,\ldots, C\}$. 

    For each $k \in \{1,\ldots,C\}$, define the coefficient sequence $\beta^k$ by
\[\beta^k_i = \begin{cases} \frac{\epsilon\Delta}{s^{\ell}_k} & i \in \timesteps^{\ell}(v_k) \text{ with $\ell \leq r_k$} \\ 0 & \text{otherwise } \\ \end{cases}. \]
    
    As before, $\{i \in \timesteps^{\ell}(v_k)\}$ is determined by $\phi_i(v)$, which is determined by $\mathcal{\gamestate}_{i-1}$, so the same holds for $\beta^k_i$. Then, since for all $\ell, k$, $\left|\frac{\epsilon\Delta}{s^{\ell}_k}\right| \leq 1$, for all $i$, $\sum_k |\beta^k_i| \leq |\{v_k : i \in \timesteps(v_k)\}| \leq 2$, and for all $k$, $\sum_i |\beta^k_i| \leq |\timesteps(v_k)| \leq \Delta$, taking $a = 2$ in \Cref{lem:generalconcentration} gives us
    \[\prob{\forall k \in \{1,\ldots, C\} \sum_i\beta^k_iD_i > \alpha\Delta} \leq \exp\left(-\frac{C\alpha^4\Delta}{256}\right).\]

    This time we take a union bound over at most $b^C$ choices of $r_k$ for each vertex and at most $(\Delta+1)^{bC}\leq 2^{bC}\Delta^{bC}$ choices of $\{s^{\ell}_k\}$ to get 
    \[\prob{\text{For all } k, ~ v_k \text{ is $S$-atypical}} \leq b^C \cdot \Delta^{bC} \cdot  2^{bC} \cdot \exp\left(-\frac{C\alpha^4\Delta}{256}\right).\]
    Taking another union bound over at most $\binom{n}{C} \leq 2^{\frac{C\Delta}{N}}$ sets of $C$ vertices and $2^{(1+\epsilon)\Delta}$ sets $S$ gives us
    \begin{align*}
        \prob{\exists S, v_1,...,v_C \text{ s.t. } v_k \text{ is $S$-atypical } \forall k} & \leq 2^{(1+\epsilon)\Delta}\cdot 2^{\frac{C\Delta}{N}+bC}\cdot b^C \cdot \Delta^{bC} \cdot \exp\left(-\frac{C\alpha^4\Delta}{256}\right)\\
        & \leq  \exp\left(-\frac{C\alpha^4\Delta}{128}+2\Delta+\frac{C\Delta}{N}+bC+bC\ln \Delta +C \ln b\right)\\
        &\leq \exp\left(-\Delta \right),
    \end{align*}
    since $bC\ln \Delta+bC+C\ln b\leq 3bC\ln \Delta \leq \Delta$ and from \Cref{def:constants} we have $C = \frac{2000}{\alpha^4}$ and $N \geq 400 C$. 
     \item  Fix $u,v\in V$ and $1 \leq j_1 \leq j_2 \leq m$. 
     Define the sequence $\beta$ by:
    \[\beta_i = \begin{cases} 1 & i \in \timesteps(u)\cup\timesteps(v) , j_1 \leq i \leq j_2\\ 0 & \text{otherwise } \\ \end{cases}. \]
    Then, since the event $\{i \in \timesteps(u)\cup\timesteps(v) \}$ depends only on $\mathcal{\gamestate}_{i-1}$, $|\beta_{i}| \leq 1$, and $\sum |\beta_{i}| \leq |\timesteps(u)\cup\timesteps(v)| \leq 2\Delta$, the first part of  \Cref{lem:generalconcentration} with $a = 1$ gives us:
    \[\prob{\left|\sum_{\substack{i \in \timesteps(u)\cup\timesteps(v) \\ j_1 \leq i \leq j_2}} D_{i}(S_i) \right| > \alpha \Delta} \leq \exp\left(-\frac{(\alpha/2)^4
    \Delta}{128}\right).\]
        Taking a union bound over at most $(n\Delta)^2 \leq n^4 \leq  2^{\frac{4\Delta}{N}}$ choices for $j_1,j_2$ gives us
        \begin{align*}
        \prob{\mathcal{D}(u,v) \text{ occurs}} \leq 2^{\frac{4\Delta}{N}} \cdot \exp\left(-\frac{\alpha^4\Delta}{512}\right).
        \end{align*}
     Taking another union bound over at most $n^2 \leq 2^{\frac{2\Delta}{N}}$ vertex pairs gives us
    \begin{align*}
         \prob{ \exists u,v ~ \text{s.t.} ~ \mathcal{D}(u,v) \text{ occurs}} &\leq 2^{\frac{6\Delta}{N}}\cdot \exp\left(-\frac{\alpha^4\Delta}{512}\right)\\
         & \leq \exp\left(-\frac{\alpha^4\Delta}{512}+\frac{6\Delta}{N}\right)\\
        & \leq \exp\left(-\frac{\alpha^4\Delta}{800}\right),
        \end{align*}
         where in the last line we used $N \geq 400C = \frac{800000}{\alpha^4}$.
 \end{enumerate}
 
 Thus, for $\Delta$ sufficiently large, our total probability of a bad event occurring is at most
    \begin{align*}
         \exp\paren{-\frac{\alpha^4\Delta}{500}}+\exp\left(-\Delta\right)+\exp\left(-\frac{\alpha^4\Delta}{800}\right) &\leq \exp\left(-\frac{\alpha^4\Delta}{1000}\right).
    \end{align*}
 
\end{proof}

\section{Main Lemma}\label{section:mainlemma}

We begin with a brief summary of what we've shown so far and what remains to be shown.
For the dense case with a fixed adaptive adversary, we are given a constant $M$ such that 
$n \leq \frac{\Delta}{M}$. We run the game $\mathcal{A'}$  using the phase-counter $\phidense$, which, by \Cref{prop:boundonepsilondense} is $\zeta(\epsilon,M)$-error controlled. Since $n \leq M\Delta \leq 2^{\Delta/N}$ (for $\Delta$ sufficiently large), \Cref{lem:coloringwell-behaved} implies that the coloring produced by $\mathcal{A}'$ is well-behaved with high probability. 

Similarly, if $n \leq 2^{\frac{\Delta}{N}}$, then by \Cref{lem:goodordering}, with high probability, 
for uniformly chosen ordering $\sigma$, $\phirandom$ is balanced with respect to $(G,\sigma)$. Conditioned on $\phi^R$ being balanced with respect to $(G,\sigma)$, guarantees us that by \Cref{prop:boundonepsilonrandom}, $\phirandom$ is $\zeta(\epsilon,0)$-controlled. Finally, in this case also \Cref{lem:coloringwell-behaved} tells us that the coloring produced by $\mathcal{A}'$ is well-behaved with high probability.

In this final section, we will show that if the coloring produced by $\mathcal{A}'$ is well-behaved and $\phi$ is a phase partition counter with $\zeta$-controlled error, then $\mathcal{A}'$ must have successfully produced a proper coloring of $G$. We do this by inductively showing that vertices are \emph{good} according to the following definition. In this definition, and all following definitions in this section, we will assume that we are given $(G,\sigma)$ with its corresponding phase counter $\phi$ and parameter $\zeta$ as defined above, we will denote $\widehat{\epsilon}_\zeta$ simply as $\widehat{\epsilon}$.
\begin{definition2}[Good Vertices]\label{def:goodvertices}
    A vertex $v$ is \emph{good for $S$ during its $r^{th}$ phase} if \label{eqn:vgoodwrtS}
    \begin{align*}
        |\delta^{r-1}(v,S)| \leq \frac{\widehat{\epsilon}^{\;r-1}(v)\cdot \Delta}{|A^{r-1}(v)|}.
    \end{align*}
\end{definition2}
Note that in this definition we say $v$ is good with respect to $S$ during phase $r$ rather than $r-1$, because the palette for $v$ used during phase $r$ is $A^{r-1}(v)$.

\begin{lemma2}[Main Lemma]\label{lem:mainlemma}
Let $\zeta>0$, let phase partition counter $\set{\phi_i}_{i=1}^m$ be balanced and have $\zeta$-controlled error with respect to the ordered graph $(G,\sigma)$. Further suppose that $\Delta$ is sufficiently large. If the coloring is well-behaved, then for all vertex phase pairs $(v,r)$, for all color sets $S$, if $v$ is an $S$-typical vertex, then $v$ is good for $S$ during its $r^{th}$ phase.
\end{lemma2}

We will prove this lemma by induction on the pairs $(v,r)$ according to the order $\prec$ and by bounding each of the three summands in \Cref{prop:deltatomartingales}. The next two propositions relate
these terms to the error terms.

\begin{proposition}\label{lem:errortermsinduction}
For any set $S\subseteq \Gamma$, and any $i$ with $e_i = (u,v)$, $\phi_i(u) = s$, and $\phi_i(v) = r$, the preliminary color set $A_{i-1}(u)\cap A_{i-1}(v) = A^{s-1}(u)\cap A^{r-1}(v)$, satisfies
        \[\card{A^{s-1}(u)\cap A^{r-1}(v)} \geq \frac{\epsilon^2\Delta}{1+\epsilon} - \card{\delta^{s-1}(u,A^{r-1}(v))}|A^{s-1}(u)|.\]    
\end{proposition}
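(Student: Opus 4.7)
The plan is straightforward: unpack the definition of $\delta^{s-1}(u, A^{r-1}(v))$ to obtain an exact formula for $|A^{s-1}(u)\cap A^{r-1}(v)|$, and then use the trivial lower bound on palette sizes that is forced by the degree constraint.

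First, I would invoke the remark from the preliminaries that for every vertex $w$ and time $i$, $A_{i-1}(w) = A^{\phi_i(w)-1}(w)$; specialized to $w \in \{u,v\}$ with $\phi_i(u) = s$ and $\phi_i(v) = r$, this gives $A_{i-1}(u)\cap A_{i-1}(v) = A^{s-1}(u)\cap A^{r-1}(v)$, so it suffices to bound the right-hand intersection. Next, rearranging the definition
\[
\delta^{s-1}(u, S) \;=\; \frac{|A^{s-1}(u)\cap S|}{|A^{s-1}(u)|} \;-\; \frac{|S|}{(1+\epsilon)\Delta}
\]
with $S := A^{r-1}(v)$ produces the identity
\[
|A^{s-1}(u)\cap A^{r-1}(v)| \;=\; |A^{s-1}(u)|\cdot \delta^{s-1}(u, A^{r-1}(v)) \;+\; \frac{|A^{s-1}(u)|\cdot|A^{r-1}(v)|}{(1+\epsilon)\Delta}.
\]

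The second term is what has to be bounded from below. Here the key observation is the elementary fact that the free set (hence also the palette $A$, which contains it) of any vertex always has size at least $\epsilon\Delta$: the color set has size $\lceil(1+\epsilon)\Delta\rceil$ and at most $\Delta$ colors can have been removed because the maximum degree is $\Delta$. Applied to $v$ at the end of its $(r-1)$-th phase this gives $|A^{r-1}(v)| \geq \epsilon\Delta$, and applied to $u$ at the end of its $(s-1)$-th phase it gives $|A^{s-1}(u)| \geq \epsilon\Delta$. Substituting these into the identity,
\[
\frac{|A^{s-1}(u)|\cdot|A^{r-1}(v)|}{(1+\epsilon)\Delta} \;\geq\; \frac{|A^{s-1}(u)|\cdot \epsilon\Delta}{(1+\epsilon)\Delta} \;\geq\; \frac{\epsilon^2 \Delta}{1+\epsilon},
\]
and bounding the $\delta$-term below by $-|\delta^{s-1}(u,A^{r-1}(v))|\cdot |A^{s-1}(u)|$ yields exactly the claimed inequality.

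There is no substantive obstacle in this proposition; the content is really just matching up definitions. The only thing that must be noticed is that the ``expected'' intersection size $|A^{s-1}(u)||A^{r-1}(v)|/((1+\epsilon)\Delta)$ is already $\Omega(\epsilon^2\Delta)$ purely from the degree bound, so that the proposition reduces to the observation that $\delta^{s-1}(u, A^{r-1}(v))$ measures precisely the signed deviation from this expected size (after scaling by $|A^{s-1}(u)|$).
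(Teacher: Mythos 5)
Your proof is correct and takes essentially the same approach as the paper: expand the definition of $\delta^{s-1}(u,A^{r-1}(v))$ to get the exact identity for $|A^{s-1}(u)\cap A^{r-1}(v)|$, then bound the main term below using the fact that every palette has size at least $\epsilon\Delta$. The only difference is that you state explicitly that the lower bound $\epsilon\Delta$ is needed for both $|A^{r-1}(v)|$ and $|A^{s-1}(u)|$, which the paper leaves implicit.
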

\begin{proof}
By the definition of $\delta^{s-1}(u)$ in (\ref{def:deltavS}) and the fact that $|A^{r-1}(v)|$ is always at least $\epsilon\Delta$,
    \begin{align*}
        |A^{s-1}(u) \cap A^{r-1}(v)| &=  \frac{|A^{s-1}(u)||A^{r-1}(v)|}{(1+\epsilon)\Delta} + \delta^{s-1}(u,A^{r-1}(v))\cdot |A^{s-1}(u)| \\
        &\geq \frac{\epsilon^2\Delta}{1+\epsilon} - |\delta^{s-1}(u,A^{r-1}(v))| |A^{s-1}(u)|,
    \end{align*}
\end{proof}
Note that this bounds the preliminary colors available to $e_i=(u,v)$ in terms of $\delta^{s-1}(u,S)$. Next we establish bounds on $q_{i}$ and $p_{i}(S)$ if we know that $u$ is good with respect to $A^{r-1}(v)$ and $A^{r-1}(v) \cap S$ during its phase $s$. 

\begin{proposition}\label{prop:propertiesofgoodneighbours} 
For $\zeta>0$,
let phase partition counter $\set{\phi_{i}}_{i=1}^m$ be balanced and have \emph{$\zeta$-controlled error} 
with respect to the ordered graph $(G,\sigma)$. Let $e_i = (u,v)$, $\phi_i(u) = s$, and $\phi_i(v) = r$.
    \begin{enumerate}
\item\label{item:propertiesofgoodneighboursa} If $u$ is good for $A^{r-1}(v)$ during its $s^{th}$ phase then, the number of colors available to edge $e_i$ is not too low, that is, 
        \[|A^{s-1}(u) \cap A^{r-1}(v)| \geq \frac{2\epsilon^2\Delta}{5}\]
        \item\label{item:propertiesofgoodneighboursb} If $u$ is good for $A^{r-1}(v)$ during its $s^{th}$ phase then the probability of a collision is low:
        \[q_i \leq \frac{\alpha}{4}\]
        \item\label{item:propertiesofgoodneighboursc} If for a set $S$, $u$ is good for both $A^{r-1}(v)$ and $A^{r-1}(v)\cap S$ then the probability that the preliminary color chosen for edge $e_i$ hits $S$ is close to what we would expect if the color was chosen randomly from $A^{r-1}(v)$:
        \[\left|p_i(S)-\frac{|A^{r-1}(v) \cap S|}{|A^{r-1}(v)|}\right| \leq \frac{5\widehat{\epsilon}^{\;s-1}(u)}{\epsilon^2}.\]
    \end{enumerate}
\end{proposition}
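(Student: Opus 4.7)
The three conclusions are coupled, so I would prove them in the order (a), (b), (c), since (a) feeds directly into both of the others. Throughout, the two facts I rely on are the goodness hypothesis (which bounds both $|\delta^{\;s-1}(u,A^{r-1}(v))|$ and $|\delta^{\;s-1}(u,A^{r-1}(v)\cap S)|$ by $\widehat{\epsilon}^{\;s-1}(u)\Delta/|A^{s-1}(u)|$) and the $\zeta$-controlled error property (which gives $\widehat{\epsilon}^{\;s-1}(u)\leq \epsilon^{3}/10$).

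For part (a), plug the goodness of $u$ with respect to $A^{r-1}(v)$ into \Cref{lem:errortermsinduction} to get $|A^{s-1}(u)\cap A^{r-1}(v)|\geq \frac{\epsilon^{2}\Delta}{1+\epsilon} - \widehat{\epsilon}^{\;s-1}(u)\Delta$; using $\widehat{\epsilon}^{\;s-1}(u)\leq \epsilon^{3}/10$ together with $\frac{\epsilon^{2}}{1+\epsilon}\geq \frac{\epsilon^{2}}{2}$ for $\epsilon<1$, a short arithmetic check gives the claimed $\frac{2\epsilon^{2}\Delta}{5}$. Part (b) is then immediate: substituting (a) into the collision bound \eqref{eqn:boundcollisions} gives $q_{i}\leq \frac{4\Delta}{b\cdot(2\epsilon^{2}\Delta/5)} = \frac{10}{b\epsilon^{2}}$, and plugging in $b=40/(\alpha\epsilon^{2})$ from \Cref{def:constants} yields exactly $\alpha/4$.

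Part (c) is the real content. Apply the identity from \eqref{def:deltavS}, namely $|A^{s-1}(u)\cap T| = \frac{|A^{s-1}(u)||T|}{(1+\epsilon)\Delta} + \delta^{\;s-1}(u,T)|A^{s-1}(u)|$, separately to $T = A^{r-1}(v)\cap S$ (the numerator of $p_{i}(S)$) and $T = A^{r-1}(v)$ (the denominator). Factoring $|A^{s-1}(u)|$ out of both and cancelling puts $p_{i}(S)$ in the form $(\alpha_{1}+\eta_{1})/(\alpha_{2}+\eta_{2})$, where $\alpha_{1} = |A^{r-1}(v)\cap S|/((1+\epsilon)\Delta)$ and $\alpha_{2} = |A^{r-1}(v)|/((1+\epsilon)\Delta)$, so that $\alpha_{1}/\alpha_{2}$ is exactly the target, and the two $\eta$'s are the $\delta$'s that the goodness hypothesis bounds in magnitude by $\widehat{\epsilon}^{\;s-1}(u)\Delta/|A^{s-1}(u)|$. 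A direct computation gives
\[
p_{i}(S) - \frac{\alpha_{1}}{\alpha_{2}} \;=\; \frac{\alpha_{2}\eta_{1} - \alpha_{1}\eta_{2}}{\alpha_{2}(\alpha_{2}+\eta_{2})}.
\]

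The main obstacle, and the reason the bound is as tight as claimed, lies in bounding this fraction without picking up a spurious factor of $1/\epsilon$ from dividing by $\alpha_{2}$. The key observation is that $\alpha_{2}(\alpha_{2}+\eta_{2}) = \frac{|A^{r-1}(v)|}{(1+\epsilon)\Delta}\cdot \frac{|A^{s-1}(u)\cap A^{r-1}(v)|}{|A^{s-1}(u)|}$, while the numerator, bounded via $|\alpha_{1}|\leq \alpha_{2}$ and the bounds on $|\eta_{i}|$, is at most $2\alpha_{2}\widehat{\epsilon}^{\;s-1}(u)\Delta/|A^{s-1}(u)|$. The two $\alpha_{2}$'s cancel cleanly, leaving $\frac{2\widehat{\epsilon}^{\;s-1}(u)\Delta}{|A^{s-1}(u)\cap A^{r-1}(v)|}$, and applying (a) once more produces $5\widehat{\epsilon}^{\;s-1}(u)/\epsilon^{2}$, as required.
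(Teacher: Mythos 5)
Your proof is correct and follows essentially the same route as the paper: part (a) via \Cref{lem:errortermsinduction} plus the goodness and $\zeta$-controlled-error bounds, part (b) by plugging (a) into \eqref{eqn:boundcollisions} with $b=40/(\alpha\epsilon^2)$, and part (c) by applying the identity from \eqref{def:deltavS} to numerator and denominator of $p_i(S)$, subtracting, using the triangle inequality together with $|A^{r-1}(v)\cap S|\leq|A^{r-1}(v)|$, and applying (a). The $\alpha_i,\eta_i$ bookkeeping you introduce is just a notational repackaging of the paper's computation, and your remark about the clean cancellation of $\alpha_2$ (avoiding a spurious $1/\epsilon$) corresponds precisely to the paper cancelling the $|A^{r-1}(v)|$ factor.
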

\begin{proof}
    For the first part:
    \begin{align*}
            |A^{s-1}(u) \cap A^{r-1}(v)| &\overset{(\ref{item:prop34})}{\geq}\frac{\epsilon^2\Delta}{1+\epsilon} - |\delta^{s-1}(u,A^{r-1}(v))||A^{s-1}(u)| \\
            &\overset{(\ref{item:uisgood})}{\geq}\frac{\epsilon^2\Delta}{1+\epsilon} - \widehat{\epsilon}^{\;s-1}(u) \Delta \\
            & \overset{(\ref{item:errorcontrol})}{\geq}\frac{\epsilon^2\Delta}{1+\epsilon} - \frac{\epsilon^3\Delta}{10} \\ 
            & \overset{(\ref{item:eps})}{\geq} \frac{2\epsilon^2\Delta}{5} 
        \end{align*}
        Here, \begin{inparaenum}[$(i)$] 
            \item \label{item:prop34}follows from \Cref{lem:errortermsinduction}, \item \label{item:uisgood} is using the fact that $u$ is good for $A^{r-1}(v)$ during phase $s$, \item\label{item:errorcontrol} is by \Cref{def:errorcontrolled} since $\set{\phi_{i}}_{i=1}^m$ is error-controlled, and \item\label{item:eps} is because $\epsilon<1$. 
        \end{inparaenum}
        
        Using this, (\ref{eqn:boundcollisions}), and the definition of $b$ from  \Cref{def:constants}: 
        \[q_i \leq \frac{4\Delta}{ b \cdot |A^{s-1}(u) \cap A^{r-1}(v)| } \leq \frac{10}{b\epsilon^2} \leq \frac{\alpha}{4},\]   
For the last part, again by (\ref{def:deltavS}), for any color set $T$:
    \begin{align}\label{eqn:err2}
    |A^{s-1}(u) \cap T|=
        \delta^{s-1}(u,T)\cdot |A^{s-1}(u)| + \frac{|A^{s-1}(u)||T|}{(1+\epsilon)\Delta},
    \end{align}
    so 
    \begin{align*}
        p_i(S)-  \frac{|A^{r-1}(v) \cap S|}{|A^{r-1}(v)|}    & =\frac{|A^{s-1}(u) \cap A^{r-1}(v) \cap S|}{|A^{s-1}(u) \cap A^{r-1}(v)|} -  \frac{|A^{r-1}(v) \cap S|}{|A^{r-1}(v)|}\\
        &= \frac{\overbrace{|A^{s-1}(u) \cap A^{r-1}(v) \cap S||A^{r-1}(v)|}^{(a)} - \overbrace{|A^{r-1}(v) \cap S||A^{s-1}(u) \cap A^{r-1}(v)|}^{(b)}}{|A^{s-1}(u) \cap A^{r-1}(v)||A^{r-1}(v)|}
    \end{align*}   
We first expand $(a)$, by letting $T=A^{r-1}(v)\cap S$ in \Cref{eqn:err2},
\begin{align*}
    |A^{s-1}(u) \cap A^{r-1}(v) \cap S||A^{r-1}(v)|=|A^{s-1}(u)|\left(\frac{|A^{r-1}(v) \cap S|}{(1+\epsilon)\Delta}+ \delta^{s-1}(u,A^{r-1}(v) \cap S)\right)|A^{r-1}(v)|
\end{align*}
Similarly, expanding $(b)$ by letting $T=A^{r-1}(v)$ in \Cref{eqn:err2}, we have,
\begin{align*}
    |A^{r-1}(v) \cap S||A^{s-1}(u) \cap A^{r-1}(v)|=|A^{s-1}(u)|\left(\frac{|A^{r-1}(v)|}{(1+\epsilon)\Delta}+ \delta^{s-1}(u,A^{r-1}(v))
    \right)|A^{r-1}(v) \cap S|
\end{align*}
Substituting these terms back, and taking absolute value, we can upper bound $\left|p_i(S)-  \frac{|A^{r-1}(v) \cap S|}{|A^{r-1}(v)|}\right|$ as follows:
    \begin{align*}
            & = \left|\frac{|A^{s-1}(u)|\left[\delta^{s-1}(u,A^{r-1}(v) \cap S)\cdot |A^{r-1}(v)| - |A^{r-1}(v) \cap S|\cdot \delta^{s-1}(u,A^{r-1}(v))\right]}{|A^{s-1}(u) \cap A^{r-1}(v)||A^{r-1}(v)|}\right|\\
            & \leq \frac{|A^{s-1}(u)|\left[|\delta^{s-1}(u,A^{r-1}(v) \cap S)|\cdot |A^{r-1}(v)| + |A^{r-1}(v) \cap S|\cdot |\delta^{s-1}(u,A^{r-1}(v))|\right]}{|A^{s-1}(u) \cap A^{r-1}(v)|\cdot |A^{r-1}(v)|}\\
            & \leq \frac{|A^{s-1}(u)|\left[|\delta^{s-1}(u,A^{r-1}(v) \cap S)| + |\delta^{s-1}(u,A^{r-1}(v))|\right]}{|A^{s-1}(u) \cap A^{r-1}(v)|}\\
            & \leq \frac{2\widehat{\epsilon}^{\;s-1}(u)\Delta}{2\epsilon^2\Delta/5} = \frac{5\widehat{\epsilon}^{\;s-1}(u)}{\epsilon^2},
    \end{align*}
    where the final inequality uses the first part and the assumption that $u$ is good for $A^{r-1}(v)$ and $A^{r-1}(v) \cap S$ during phase $r$. 
\end{proof}

We now turn to the proof of the main lemma.

\begin{proof}[Proof of Main Lemma]
Assume the coloring is well-behaved. We proceed by induction on vertex-phase pairs. First note that since $\delta^0(v,S) = 0$ by definition, we know that for all vertices $v$ and sets $S$, $v$ is good for $S$ during its $1^{st}$ phase. Now, for any pair $(v,r)$ with $1 \leq r < b$, we would like to show that for any set $S$ such that $v$ is $S$-typical, $v$ is good for $S$ during its $(r+1)^{th}$ phase.

Since we proceed to by strong induction on the vertex-phase pair, we fix $(v,r)$. Our induction hypothesis states the following.
Suppose that for any ${(u,s) \prec (v,r)}$ and any set $S'$, if $u$ is $S'$-typical, then $u$ is good for $S'$ during its $(s+1)^{th}$ phase. Consider any $S$ such that  $v$ is $S$-typical. Since we know the coloring is well-behaved, we know that for any phase $\ell \leq r$ of $v$,
$|B(A^{\ell-1}(v))|,|B(A^{\ell-1}(v) \cap S)| \leq C$ (recall \Cref{def:toomuchdriftvertex}). Let \[\timesteps_B^{\ell}(v) = \{i \in \timesteps^{\ell}(v) : e_i-v \in B(A^{\ell-1}(v)) \cup B(A^{\ell-1}(v) \cap S)\},\] so that $|\timesteps_B^{\ell}(v)| \leq 2C$. Then, for any $i \in \timesteps^{\ell}(v) \setminus \timesteps_B^{\ell}(v)$, with $u = e_i-v$ and $\phi_i(u) = s+1$, $\last{s}{u} < i \leq \last{r}{v}$, so $(u,s) \prec (v,r)$. Thus, by the inductive hypothesis, since $u$ is both $A^{\ell-1}(v)$-typical and $(A^{\ell-1}(v) \cap S)$-typical, $u$ must be good for both $A^{\ell-1}(v)$ and $A^{\ell-1}(v) \cap S$ during its $(s+1)^{th}$ phase.

Recall that by \Cref{prop:deltatomartingales}, $|\delta^r(v,S)| $ is upper bound by
\[\frac{1}{|A^{r}(v)|} \sum_{i \in \timesteps^{\leq r}(v)} Z_{i} ~ + ~ \left|\sum_{\ell = 1}^{r} \frac{1}{|A^{\ell}(v)|}\sum_{i \in \timesteps^{\ell}(v)}D_{i}(S) \right| ~ + ~ \sum_{\ell = 1}^{r} \frac{1}{|A^{\ell}(v)|} \sum_{i \in \widetilde{\timesteps}^{\ell}(v)} \left|p_{i}(S) - \frac{|A^{\ell-1}(v) \cap S|}{|A^{\ell-1}(v)|}\right|.\]
     Since the coloring is well-behaved, we know that $\mathcal{W}(v)$ does not occur, and therefore,
     \[\sum_{i \in \timesteps^{\leq r}(v)} Z_{i}-q_{i} \leq \alpha\Delta.\]
     By \Cref{prop:propertiesofgoodneighbours}\ref{item:propertiesofgoodneighboursb}, 
     \begin{align} \label{eqn:boundedcollisions2}
         \sum_{i \in \timesteps^{\leq r}(v)} Z_{i} & \leq \alpha\Delta + \sum_{\ell \leq r} \sum_{i \in \timesteps^{\ell}(v) \setminus \timesteps_{B}^{\ell}(v)} q_{i} + \sum_{\ell \leq r} \sum_{i \in \timesteps_{B}^{\ell}(v)} q_{i} \nonumber \\
         & \leq \alpha\Delta + \sum_{\ell \leq r} \sum_{i \in \timesteps^{\ell}(v) \setminus \timesteps_{B}^{\ell}(v)} \frac{\alpha}{4} + \sum_{\ell \leq r} \sum_{i \in \timesteps_{B}^{\ell}(v)} 1 \nonumber \\
         &\leq \alpha\Delta+\frac{\alpha\Delta}{4}+2bC  \nonumber \\
         &\leq 2\alpha\Delta. 
         \end{align}
     Furthermore, since $v$ is $S$-typical (see \Cref{def:toomuchdriftvertex}), we are guaranteed that, 
     \[\left|\sum_{\ell = 1}^{r} \frac{1}{|A^{\ell}(v)|}\sum_{i \in \timesteps^{\ell}(v)}D_{i}(S) \right| \leq \frac{\alpha}{\epsilon} \leq \frac{\alpha\Delta(1+\epsilon)}{\epsilon\cdot|A^r(v)|}.\]
     Finally, again by \Cref{prop:propertiesofgoodneighbours}\ref{item:propertiesofgoodneighboursc},
     \begin{align*}
         &\sum_{\ell = 1}^{r} \frac{1}{|A^{\ell}(v)|} \sum_{i \in \widetilde{\timesteps}^{\ell}(v)} \left|p_{i}(S) - \frac{|A^{\ell-1}(v) \cap S|}{|A^{\ell-1}(v)|} \right| \\
         = &\sum_{\ell = 1}^{r} \frac{1}{|A^{\ell}(v)|} \sum_{i \in \widetilde{\timesteps}^{\ell}(v) \setminus \timesteps^{\ell}_B(v)} \left|p_{i}(S) - \frac{|A^{\ell-1}(v) \cap S|}{|A^{\ell-1}(v)|} \right| + \sum_{\ell = 1}^{r} \frac{1}{|A^{\ell}(v)|} \sum_{i \in \widetilde{\timesteps}^{\ell}(v) \cap \timesteps^{\ell}_B(v)} \left|p_{i}(S) - \frac{|A^{\ell-1}(v) \cap S|}{|A^{\ell-1}(v)|} \right|\\
         = &\sum_{\ell = 1}^{r} \frac{1}{|A^{\ell}(v)|} \sum_{\substack{i \in \widetilde{\timesteps}^{\ell}(v) \setminus \timesteps^{\ell}_B(v)\\ s = \phi(e_i-v)-1}} \frac{5\widehat{\epsilon}^{\;s}(e_i-v)}{\epsilon^2}  + \sum_{\ell = 1}^{r} \frac{1}{|A^{\ell}(v)|} \sum_{i \in \widetilde{\timesteps}^{\ell}(v) \cap \timesteps^{\ell}_B(v)} 1\\
         \leq & \frac{1}{|A^r(v)|} \left(2bC + \sum_{\substack{i \in \timesteps^{\leq r}(v)\\ s = \phi(e_i-v)-1}} \frac{5\widehat{\epsilon}^{\;s}(e_i-v)}{\epsilon^2} \right)
     \end{align*}
     Combining this gives us
     \begin{align*}
         |\delta^r(v,S)| &\leq \frac{1}{|A^r(v)|} \left(2\alpha\Delta+ \frac{1+\epsilon}{\epsilon}\alpha\Delta + 2bC + \frac{5}{\epsilon^2} \sum_{\substack{i \in \timesteps^{\leq r}(v)\\ s = \phi(e_i-v)-1}} \widehat{\epsilon}^{\;s}(e_i-v) \right)\\
         &\leq \frac{\Delta}{|A^r(v)|} \left(\zeta + \frac{5}{\epsilon^2\Delta} \sum_{\substack{i \in \timesteps^{\leq r}(v)\\ s = \phi(e_i-v)-1}} \widehat{\epsilon}^{\;s}(e_i-v) \right) = \frac{\widehat{\epsilon}^{\;r}(v)\Delta}{|A^r(v)|},
     \end{align*}
     where the second inequality holds because $2\alpha+\frac{1+\epsilon}{\epsilon}\alpha+\frac{2bC}{\Delta} \leq \frac{5}{\epsilon}\alpha \leq \zeta$, for $\Delta \geq \frac{10^6}{\alpha^7} \geq \frac{2bC}{\alpha}$ (see \Cref{rem:deltasufflarge}).
\end{proof}

Finally, we use the main lemma to show that in a well-behaved coloring, no edge could have been left uncolored. The intuition behind this proof is as follows: for any pair of vertices $u, v$ as the size of $F_i(u) \cap F_i(v)$ decreases, it becomes less likely for any edge of $u$ or $v$ to choose a color from this set. In particular, in a well-behaved coloring, all but constantly many neighbors in each phase $r$ of $u$ will be $A^{r-1}(u)$-typical, so by the Main Lemma and \cref{prop:propertiesofgoodneighbours}, the number of colors available to most edges adjacent to $u$ will be at least $\frac{2\epsilon^2\Delta}{5}$. Thus, once $|F_i(u) \cap F_i(v)|$ falls below this threshold, the probability of hitting a color in the intersection will be proportional to its size. The number of steps needed to use up all of the colors of the intersection now resembles that of the coupon-collector problem, and we can use a similar argument to show that this number must be greater than $2\Delta$ if the coloring is well-behaved. In the proof below, we partition the edges adjacent to $u$ and $v$ into sets $R_j$ which contain the edges colored until the size of the intersection drops by a factor of two. Roughly, we expect the size of each such set to be proportional to $\Delta$, and this will show that the intersection cannot become empty after only $2\Delta$ edges are colored.

\begin{corollary2}\label{cor:noedgeuncolored}
For $\zeta>0$,
let phase partition counter $\phi = \set{\phi_{i}}_{i=1}^m$ have \emph{$\zeta$-controlled error} 
with respect to the ordered graph $(G,\sigma)$. If $\phi$ is balanced with respect to $(G,\sigma)$ and the coloring produced by the $\mathcal{A}'$ is well-behaved, then for every edge $e=(u,v)$, at all times $i\in [m]$, we have $\card{F_{i-1}(u)\cap F_{i-1}(v)}\geq \frac{\epsilon^2\Delta}{2^{h+1}}$, where $h=\lceil 8/\epsilon^2\rceil$. Thus, no edge is left uncolored. 
\end{corollary2}
\begin{proof}
    Suppose for contradiction  that there is an edge $e = (u,v)$ and time $i$ such that $\card{F_{i}(u) \cap F_{i}(v)} < \frac{\epsilon^2\Delta}{2^{h+1}}$. Let $S_j = F_{j-1}(u) \cap F_{j-1}(v)$ and $e_t$ be the edge adjacent to $u$ or $v$ that uses up the last color so that $\card{S_t}\geq\frac{\epsilon^2\Delta}{2^{h+1}}$ and $\card{S_{t+1}}<\frac{\epsilon^2\Delta}{2^{h+1}}$. This gives us $|S_1| \geq |S_2| \geq \cdots \geq |S_{t}| \geq \frac{\epsilon^2\Delta}{2^{h+1}}$ and $|S_{t+1}| < \frac{\epsilon^2\Delta}{2^{h+1}}$.        
    Let $R = \{k \in \timesteps(u)\cup\timesteps(v) : k \leq t\}$ be the set of arrival times of edges adjacent to either $u$ or $v$ and define the partition $\{R_j\}$ of $R$ to be
$R_0 = \{k \in R : |S_k| \geq \frac{\epsilon^2\Delta}{2}\}$ and
    for $h\geq j \geq 1$, 
    \[R_j = \left\{k \in R : \frac{\epsilon^2\Delta}{2^{j+1}} \leq |S_k| < \frac{\epsilon^2\Delta}{2^{j}}\right\}\]

    Note that $\set{R_j}_{j\geq 0}^h$ are disjoint intervals, and therefore, $|\bigcup_j R_j| \leq |\timesteps(u) \cup \timesteps(v)| \leq 2\Delta$.  We will derive
    a contradiction by establishing lower bounds on the size of each $R_j$, and show their sum exceeds $2\Delta$.
    Since $S_1=(1+\epsilon)\Delta$, $|S_i|$ can drop below $\epsilon^2\Delta/2$ only after
    at least $(1+\epsilon-\epsilon^2/2)\Delta$ edges incident on $u$ or $v$ have arrived, we have
    \begin{equation}
        \label{R0}
        |R_0|\geq (1+\epsilon-\epsilon^2/2)\Delta.
    \end{equation}

    \begin{claim2}
    \label{claim:Rj}
        For each $h \geq j \geq 1$, $|R_j| \geq \frac{\epsilon^2 \Delta}{8}-7\alpha \Delta 2^{j-2}$.
    \end{claim2}
    \begin{proof}
    Since the coloring is well-behaved, for each $1 \leq r \leq b$, $|B(A^r(u))|,|B(A^r(v))| \leq C$, so if we let
    \[R^B = \left\{k \in R : k \in \bigcup_{r = 1}^{b} (B(A^r(u)) \cup B(A^r(v)))\right\},\] 
    we have $|R^B| \leq 2bC$. Furthermore, for any $i \in R \setminus R^B$, we know that $e_i = (u,w)$ or $e_i = (v,w)$ for some vertex $w$. Since $w$ is $A^r(u)$-typical and $A^r(v)$-typical for all $1 \leq r \leq b$, by \Cref{lem:mainlemma}, we know that $w$ is good for $A^r(u)$ and $A^r(v)$ for all $1 \leq r \leq b$, during all of its phases. Then, since $0 \leq Z_i, q_i \leq 1$ for all $i$ and $\mathcal{W}(v), \mathcal{W}(u)$ didn't occur (because the coloring is well-behaved), we have by \Cref{prop:propertiesofgoodneighbours}
    \begin{align*}
        \sum_{i \in R} Z_{i} & \leq \sum_{i \in \timesteps(u),i \leq t} Z_{i} + \sum_{i \in \timesteps(v), i \leq t} Z_{i} & \\
        &\leq 2\alpha\Delta + \sum_{i \in \timesteps(u),i \leq t} q_{i} + \sum_{i \in \timesteps(v), i \leq t} q_{i} & \paren{\text{From \Cref{def:well-behavedcoloring}\ref{item:badeventW} and \Cref{lem:coloringwell-behaved}}}\\
        &\leq 2\alpha\Delta+ 2 \sum_{i \in R}q_i & \paren{\text{From definition of }R}\\
        &\leq 2\alpha\Delta+ 2 \sum_{i \in R \setminus R^B}q_i+ 2 \sum_{i \in R^B}q_i & \\
        &\leq 2 \alpha\Delta+4\Delta \cdot \frac{\alpha}{4}+4bC & \text{(From \Cref{prop:propertiesofgoodneighbours}\ref{item:propertiesofgoodneighboursb})}\\
        &\leq  4\alpha\Delta.
    \end{align*}
    Note that since $Z_i \in [0,1]$ for all $i$, this means that for any $R'\subseteq R$, $\sum_{i \in R'} Z_i \leq 4 \alpha\Delta$.
    
    Similarly, since $\mathcal{D}(u,v)$ did not occur, and the sets $R_j$ are intersections of $\timesteps(u)\cup\timesteps(v)$ with intervals we have for all $j$,
    \begin{align*}
        \sum_{i \in R_j} X_{i}(S_i) \leq \left|\sum_{i \in R_j} p_{i}(S_i)\right| + \left|\sum_{i \in R_j} D_{i}(S_i)\right|  &\leq \sum_{i \in R_j} p_{i}(S_i) + 2\alpha\Delta \\
        &\leq \sum_{i \in R_j\setminus R^B} p_{{i}}(S_i)+2\alpha\Delta+2bC\\
        &\leq \sum_{i \in R_j\setminus R^B} p_{{i}}(S_i)+3\alpha\Delta
    \end{align*}
    for all $j$. Finally, for $i \in R_j \setminus R^B$, suppose without loss of generality that $i \in \timesteps^{r}(v)$, with $x = e_i-v$ and $\phi_i(x) = s$. Then by \Cref{prop:propertiesofgoodneighbours}\ref{item:propertiesofgoodneighboursa},
    \[p_{i}(S_i) = \frac{|S_i|}{|A^{r-1}(v) \cap A^{s-1}(x)|} < \frac{\epsilon^2\Delta/2^{j}}{2\epsilon^2\Delta/5} \leq 5 \cdot 2^{-j-1} \leq 2^{-j+2},\]
    so
    \[  \sum_{i \in R_j} X_{i}(S_i) \leq |R_j\setminus R_B|2^{-j+2}+3\alpha\Delta.\]
    On the other hand, since the colors of at least $\frac{\epsilon^2\Delta}{2^{j+1}}$ edges $e_i$ with $i \in R_j$ must hit $S_i$ for $i\leq h$, we have
    \begin{align*}
        \frac{\epsilon^2\Delta}{2^{j+1}} \leq \sum_{i \in R_j} Y_{i}(S_i) \leq \sum_{i \in R_j} X_{i}(S_i)+\sum_{i \in R_j} Z_{i} \leq \sum_{i \in R_j} X_{i}(S_i)+4 \alpha\Delta.
    \end{align*}

   Therefore,
    \[\frac{\epsilon^2\Delta}{2^{j+1}}-4\alpha\Delta \leq \sum_{i \in R_j}X_{i}(S_i)  \leq |R_j\setminus R_B|\cdot 2^{-j+2}+3\alpha\Delta\]
   so
    \[|R_j| \geq |R_j\setminus R_B| \geq\frac{\epsilon^2\Delta}{8}-7\alpha\Delta\cdot 2^{j-2},\]
    to complete the proof of the claim.
    \end{proof}
Summing
    the lower bounds on $R_j$ for $1 \leq j \leq h$ yields:
    \[\left|\bigcup_{j = 1}^{h} R_j\right| \geq  \Delta-7\alpha\Delta\cdot2^h \geq \left(1-\frac{\epsilon^3}{10}\right)\Delta\]
     since $7\alpha 2^{8/\epsilon^2+1} < 14 \alpha e^{8/\epsilon^2} \leq 14 \zeta e^{8/\epsilon^2} \leq \frac{\epsilon^3}{10}$ by our choice of $\zeta$ from \Cref{def:constants}.
     Combining with (\Cref{R0}) and the fact that $\epsilon\leq 1$, yields the desired contradiction:
     \[\left|\bigcup_{j = 0}^{h} R_j\right| > 2\Delta.\] 
\end{proof}

We can now complete the proofs of \Cref{thm:randomordermain} and \Cref{thm:main}.
According to \Cref{cor:noedgeuncolored}, $\mathcal{A}'$ succeeds provided the
partition function is $\zeta$-controlled with respect to  $(G,\sigma)$ and the coloring is well-behaved.

In the dense case of \Cref{thm:main}, \Cref{prop:boundonepsilondense} ensures
that there is a partition function $\phidense$ that is balanced and $\zeta$-error controlled for $(G,\sigma)$. Note that for any $M$, if $\Delta$ is sufficiently large, then $n \leq M\Delta$ implies $n \leq 2^{\frac{\Delta}{N}}$. Thus, in this case \Cref{lem:coloringwell-behaved} implies that the coloring is well-behaved with probability at least $1-2^{-\alpha^4\Delta/1000}$, so this upper bounds the probability that  $\mathcal{A}'$ (and also $\mathcal{A}$) fails to color $(G,\sigma)$. Thus, letting $\gamma=\alpha^4$, we have our proof for \Cref{thm:main}. 

In the random case of \Cref{thm:randomordermain}, from \Cref{lem:goodordering}, the phase counter $\phi^R$ is balanced with respect to $(G,\sigma)$ with probability at least $1-2^{-\Delta/(20b)}$, and combined with
\Cref{prop:boundonepsilonrandom} this ensures that the partition function
is $\zeta$-controlled with respect to $(G,\sigma)$.  As in the dense case the probability that the
coloring is not well behaved is at most $2^{-\alpha^4\Delta/1000}$.  Therefore for all but at most a
$2^{-\frac{\Delta}{20b}}$ fraction of edge orderings, the probability that  $\mathcal{A}$
succeeds against $\obl(G,\sigma)$ is at least $1-
2^{-\alpha^4\Delta/1000}$. Thus, letting $\gamma_1=\frac{1}{20b}$ and $\gamma_2=\frac{\alpha^4}{1000}$, we have our claim. 

\paragraph{Proof of \Cref{cor:deterministic}.}
\Cref{cor:deterministic} now follows from \cite{Ben-DavidBKTW94}. We give a brief sketch of the argument here. As mentioned above, the outcome of $\mathcal{A}$ against an adaptive adversary can be seen as a two player game between a Builder and Colorer. Note that this is a finite two player game with perfect information and no draws, so either Builder or Colorer must have a deterministic winning strategy. Our result above gives a random strategy for Colorer that ensures a win with high probability, which implies that Colorer must be the one with the deterministic winning strategy. This corresponds to a deterministic online coloring algorithm that succeeds on all graphs.

\section{Further Work}
\label{sec:FurtherWork}
\subsection{A Precise Analysis of Competitive Ratio and Degree Bounds}
\label{subsection:improvement}

Our theorem corresponding to the random order case (\Cref{thm:randomordermain}) states that if $\epsilon\in (0,1)$ is a constant, and $\Delta\geq N\log n$ for some $N=N(\epsilon)$, then we can $(1+\epsilon)\Delta$ online edge color $G$ provided its edges arrive in a random-order. In this section, we will show that if $\Delta=\omega(\log n)$, then we can $(1+o(1))\Delta$ color $G$ under the same conditions. This result is formally stated in \Cref{lem:randomordernew}. 

Similarly, our theorem corresponding to the dense case states that if $M>1$ is any constant, and $\epsilon > 0$ is any constant, then for any $G$ with $\Delta>n/M$ that is adaptively chosen, we can $(1+\epsilon)\Delta$ edge color $G$ with high probability. In this section, we show a stronger statement is possible. That is, there exist slowly growing functions $f(n)$ and $g(n)$ such that even with $M\geq f(n)$ and $\epsilon < 1/g(n)$, the above statement is still true. This result is formally stated in \Cref{lem:newdensecase}.

\erase{In order to show these theorems, we need to change some of our probabilistic claims. We state these claims (\Cref{lem:newgoodordering} and \Cref{lem:newwellbehaved}), and give a proof sketch of each of these. Additionally, we note that relationships between the parameters used in these proof sketches is given in \Cref{def:constants}.

\begin{lemma2}[Analogue of \Cref{lem:goodordering}]\label{lem:newgoodordering}
    Let $G$ be a graph on $n$ vertices, and let $\Delta$ be its maximum degree. Then the fraction of orderings $\sigma$ of edges of $G$ such that $\phirandom$ is not balanced with respect to $(G,\sigma)$ is at most $\exp\paren{-\frac{\Delta}{20b}+\log n+\log b}$.
\end{lemma2}
\begin{proof}[Proof Sketch]
    We only focus on the parts that have changed. The first part of the proof carries over as is. That is, for a fixed $v$ and $r\in [b]$, we have,
        \begin{align*}
            \prob{\card{E^r(v)}\geq \frac{2\Delta}{b}}\leq \exp\paren{-\frac{\Delta}{20b}}.
        \end{align*}
Taking a union bound over all $v\in V$ and $r\in [b]$, we have,
\begin{align}
    \prob{\phirandom\text{ is not balanced}}\leq \exp\paren{-\frac{\Delta}{20b}+\log b+\log n}. 
\end{align}
\end{proof}

\begin{claim2}[Analogue of \Cref{lem:coloringwell-behaved}]\label{lem:newwellbehaved}
Let $G$ be a simple graph on $n$ vertices, and let $\Delta\geq 4$ be its maximum degree which is sufficiently large, then the coloring induced by $\mathcal{A}'$ is well behaved with probability at least, $1-\exp\paren{-\frac{\alpha^4\Delta}{800}+8\ln n}-\exp\paren{-8\Delta+4bC\ln n}$.
\end{claim2}
\begin{proof}[Proof Sketch]
    The proof proceeds by showing that the events in \Cref{def:well-behavedcoloring} do not occur. For \Cref{def:well-behavedcoloring}\ref{item:badeventW}, the first part of the proof follows as is. That is, we have for a fixed $j\in [m]$ and $v\in V$, 
    \[\prob{\sum_{i \in \timesteps(v),i \leq j} (Z_i-q_i) > \alpha \Delta } = \prob{\sum_{i = 1}^{m} \beta_i(Z_i-q_i) > \alpha \Delta } \leq \exp\left(-\frac{\alpha^4\Delta}{128}\right).\]
To bound the probability that for no vertex $v\in V$, $\mathcal{W}(v)$ occurs, 
    \begin{align*}
            \prob{ \exists v ~ \text{s.t.} ~ \mathcal{W}(v) \text{ occurs}} &\leq \Delta \cdot n\cdot\exp\left(-\frac{\alpha^4\Delta}{128}\right)\\
            &\leq \exp\paren{-\frac{\alpha^4\Delta}{128}+\ln n+\ln \Delta}\\
            &\leq \exp\paren{-\frac{\alpha^4\Delta}{128}+2\ln n},
        \end{align*}
where the last inequality follows from the fact that the graph is simple. 
For the event in \Cref{def:well-behavedcoloring}\ref{item:badeventDSv}, we can conclude without any changes to the proof that,
\[\prob{\text{For all } k, ~ v_k \text{ is $S$-atypical}} \leq b^C \cdot \Delta^{bC} \cdot  2^{bC} \cdot \exp\left(-\frac{C\alpha^4\Delta}{256}\right).\]
    Taking another union bound over  $\binom{n}{C} \leq 2^{C\log n}$ sets of $C$ vertices and $2^{(1+\epsilon)\Delta}$ sets $S$ gives us
    \begin{align*}
        \prob{\exists S, v_1,...,v_C \text{ s.t. } v_k \text{ is $S$-atypical } \forall k} & \leq 2^{(1+\epsilon)\Delta}\cdot 2^{C\log n+bC}\cdot b^C \cdot \Delta^{bC} \cdot \exp\left(-\frac{C\alpha^4\Delta}{256}\right)\\
        & \leq  \exp\left(-\frac{C\alpha^4\Delta}{128}+2\Delta+C\ln n+bC+bC\ln \Delta +C \ln b\right)\\
        &\leq \exp\paren{-\frac{C\alpha^4\Delta}{256}+C\ln n+3bC\ln \Delta}\\
        &\leq \exp\paren{-\frac{C\alpha^4\Delta}{256}+4bC\ln n}\\
        &\leq \exp\paren{-8\Delta+4bC\ln n}.
    \end{align*}
where we use the fact that $C=\frac{2000}{\alpha^4}$ and the fact that $G$ is simple to simplify the above bound. for the event in \Cref{def:well-behavedcoloring}\ref{item:badeventDe}, we can conclude that for a fixed $u,v\in V$ and $1\leq j_1\leq j_2\leq m$, we can conclude without any change that
\[\prob{\left|\sum_{\substack{i \in \timesteps(u)\cup\timesteps(v) \\ j_1 \leq i \leq j_2}} D_{i}(S_i) \right| > \alpha \Delta} \leq \exp\left(-\frac{(\alpha/2)^4
    \Delta}{128}\right).\]
        Taking a union bound over at most $(n\Delta)^2$ choices for $j_1,j_2$ gives us
        \begin{align*}
        \prob{\mathcal{D}(u,v) \text{ occurs}} \leq  \exp\left(-\frac{\alpha^4\Delta}{512}+2\ln n+2\ln \Delta\right).
        \end{align*}
     Taking another union bound over at most $n^2$ vertex pairs, and using the fact that $G$ is simple,
    \begin{align*}
         \prob{ \exists u,v ~ \text{s.t.} ~ \mathcal{D}(u,v) \text{ occurs}} \leq \exp\left(-\frac{\alpha^4\Delta}{512}+4\ln n+2\ln \Delta\right)\leq \exp\left(-\frac{\alpha^4\Delta}{512}+6\ln n\right).
        \end{align*}
Union bounding the probabilities of the three bad events, we have our bound. 
\end{proof}}

For the first result, we  assume $\Delta=\omega(\log n)$ and take  $\epsilon=\frac{100}{\sqrt{\ln \Delta -\ln \ln n}}$. Recall that our results in previous sections hold for any $\Delta \geq \frac{10^{10}}{\alpha^8}$ and $n \leq 2^{\frac{\Delta}{N}}$. We will show that even in this new regime, these bounds for $\Delta$ and $n$ hold, and therefore our previous results can be applied here as well. 

\begin{lemma2}[Analogue of \Cref{thm:randomordermain}]\label{lem:randomordernew}
    Let $\Delta=\omega(\log n)$ and $\epsilon=\frac{100}{\sqrt{\ln \Delta -\ln \ln n}}$, and $n$ be sufficiently large, then consider the edge coloring game $\Phi(n,\Delta,\epsilon)$. For any $G$ on $n$ vertices with maximum degree $\Delta$, for all but at most $1-o(1)$ fraction of mappings $\sigma$ of $E(G)$ to $[m]$, $\mathcal{A}$ will defeat the oblivious strategy $\obl(G,\sigma)$ (that is produce a proper edge coloring of $G$) with probability at least $1-o(1)$. 
\end{lemma2}

\begin{proof}[Proof Sketch]
By assumption, $\Delta=\omega(\log n)$, and $\epsilon=\frac{100}{\sqrt{\ln \Delta-\ln\ln n}}$. From this, we can conclude the following values of $\alpha$, $b$, and $C$, 
\begin{align*}
    \alpha(\epsilon,0)=\exp\paren{-\frac{20}{\epsilon^2}}\cdot \frac{\epsilon^3}{10}\cdot \frac{\epsilon^3}{5}\geq \exp\paren{-\frac{40}{\epsilon^2}},
\end{align*}
for all $\epsilon<1$. Substituting the value of $\epsilon$ given in the hypothesis, we have, 
\begin{align*}
    \alpha(\epsilon,0)\geq \paren{\frac{\ln n}{\Delta}}^{\frac{1}{250}}. 
\end{align*}
Similarly, we have, 
\begin{align*}
    b=\frac{40}{\alpha\epsilon^2}\leq \frac{40}{\alpha^2}\leq 40\cdot\paren{\frac{\Delta}{\ln n}}^{\frac{1}{125}}
\end{align*}
and
\begin{align*}
    C=\frac{2000}{\alpha^4}\leq 2000\cdot\paren{\frac{\Delta}{\ln n}}^{\frac{1}{60}}
\end{align*}
Finally, we see that for $n$ sufficiently large,
\[\frac{10^{10}}{\alpha^8} \leq 10^{10} \cdot \paren{\frac{\Delta}{\ln n}}^{\frac{8}{250}} \leq \Delta.\]
Using these bounds and \Cref{lem:goodordering}, we can now conclude 
\begin{align*}
    \prob{\phirandom\text{ is not balanced}}\leq \exp\paren{-\frac{\Delta^{124/125}\cdot (\log n)^{1/125}}{800}+\log n}.
\end{align*}
Moreover, note that
\[N = \max(50b, 400 C) \leq 10^6 \cdot \paren{\frac{\Delta}{\ln n}}^{\frac{1}{60}}\]
so we still have
\[\frac{\Delta}{N} \geq \frac{\Delta^{59/60} \cdot \ln n^{1/60}}{10^6} \geq \ln n \]
for $\Delta = \omega(\log n)$ and $n$ sufficiently large.
Thus, we can upper bound the probability that the coloring is not well behaved by \Cref{lem:coloringwell-behaved}, 
\begin{align*}
     \exp\paren{-\frac{\alpha^4\Delta}{1000}}.
\end{align*}
Since $\Delta=\omega(\log n)$, we can assume that with high probability, $\phirandom$ is balanced with respect to $\sigma$ and the coloring induced by $\mathcal{A}'$ is well-behaved. Conditioned on these events, we know that $\phirandom$ is $\zeta$-controlled provided that $\zeta\leq \zeta(\epsilon,0)$ (by \Cref{prop:boundonepsilonrandom}). Thus, the coloring induced by $\mathcal{A}'$, $\phirandom$ and $\sigma$ satisfy the premise of \Cref{cor:noedgeuncolored}. So, we can conclude that no edge remains uncolored.
\end{proof}

We now proceed to the theorem about dense case. In this case, it is possible to improve our result so that both $\epsilon = o(1)$ and $M = \omega(1)$, as long as they satisfy $\frac{\epsilon^2}{M}\geq \frac{30}{\sqrt{\log \Delta-\log \log n}}$.
\begin{lemma2}[Analogue of \Cref{thm:main}]\label{lem:newdensecase}
Let $\Delta=\omega(\log n)$. Let $\epsilon\in (0,1)$ and $M>1$ be such that $\frac{\epsilon^2}{M}\geq \frac{30}{\sqrt{\log \Delta-\log \log n}}$. Then, for $\Delta>n/M$, for any (possibly adaptive) strategy for online coloring game $\Phi(n,\Delta,\epsilon)$, $\mathcal{A}$ wins (produces an edge coloring of the resulting graph) with probability  $1-o(1)$.
\end{lemma2}
\begin{proof}
    The proof is exactly along the lines of the random order case. We first compute the value of $\alpha$ for the dense case.
    \begin{align*}
        \alpha(\epsilon, M)=\frac{\epsilon^5}{100 M}\cdot \exp\paren{-\paren{\frac{5M}{\epsilon^2}}^2}\cdot \frac{\epsilon^3}{5}\geq \exp\paren{-3\cdot \paren{\frac{5M}{\epsilon^2}}^2}\geq \paren{\frac{\log n}{\Delta}}^{1/12}.
    \end{align*}
Similarly, we have, 
\begin{align*}
    b&=\frac{40}{\alpha\epsilon^2}\leq \frac{40}{\alpha^2}\leq 40\cdot \paren{\frac{\Delta}{\log n}}^{1/6}\\
    C&\leq 2000\cdot \paren{\frac{\Delta}{\log n}}^{1/3}\\
    N &\leq 10^6\cdot \paren{\frac{\Delta}{\log n}}^{1/3}.
\end{align*}
As before, this gives us 
\[\frac{10^{10}}{\alpha^8} \leq 10^{10} \cdot \paren{\frac{\Delta}{\log n}}^{2/3}  \leq \Delta\]
and 
\[\frac{\Delta}{N} \geq \frac{\Delta^{2/3}\log n ^{1/3}}{10^6} \geq \log n\]
for $\Delta \geq 10^9\cdot \log n$.
Thus, we have the probability that the coloring induced by $\mathcal{A}'$ is not well behaved with probability at most,
\begin{align*}
  \exp\paren{-\frac{\alpha^4\Delta}{1000}}.
\end{align*}
Following the outline for the random order case, we can see show that no edge is left uncolored in this case as well, and the algorithm succeeds with high probability.
\end{proof}

\subsection{A Simple Static Algorithm for \((1+\epsilon)\Delta\) Edge Coloring}
\label{subsection:StaticAlgo}

In this subsection, we show that our analysis of the randomized greedy algorithm implies a very simple static algorithm for $(1+\epsilon)\Delta$ edge coloring with a runtime of $O(m\cdot 2^{O(1/\epsilon^2)})$. While there are several recent algorithms which achieve $(1+\epsilon)\Delta$ coloring with runtimes that have a better dependence on $\epsilon$ (\cite{Assadi24,BhattacharyaCPS24,ElkinK24}), we nevertheless state our algorithm because it is very easy to implement. Our presentation is borrowed from \cite{BhattacharyaCPS24}, who gave an $O(m/\epsilon)$ implementation of a folklore static algorithm that $(2+\epsilon)$ edge colors any graph. We use the same parameter dependencies as in \Cref{def:constants}. Additionally, we use the following concentration bounds for sums of geometric random variables. 

\begin{lemma2}[Corollary of Problem 2.4 in \cite{DubhashiP09}]\label{lem:concgeometric}
Let $X_1,X_2,\cdots, X_n$ be independent geometric random variables with probability of success $p$. Let $\mu=n/p$, then we have,
\begin{align*}
    \prob{\sum_{i=1}^n X_i\geq 2n/p}\leq \exp\paren{-2n}.
\end{align*}
\end{lemma2}

\begin{lemma2}
    Let $\epsilon>0$ be a constant, and suppose $N=N(\epsilon)$ be sufficiently large. Suppose $n$ is sufficiently large, $\Delta>N\log n$. Then there is a randomized algorithm, which on input a simple graph $G$ with $n$ vertices and maximum degree $\Delta$, outputs a $(1+\epsilon)\Delta$ coloring of $G$ in time $O(m\cdot 2^{O(\nicefrac{1}{\epsilon^2})})$ with high probability.
\end{lemma2}
\begin{proof}[Proof Sketch]
The algorithm that \cite{BhattacharyaCPS24} propose is as follows. We are given a palette $\mathcal{C}$ of $(2+\epsilon)\Delta$ colors. Every vertex $u\in V$ maintains the palette of used colors $\overline{P(u)}$ as a hash table. This is updated as the algorithm proceeds and an edge incident on $u$ is colored. We color edges by considering them one at a time. When an edge $(u,v)$ is considered, it samples colors from $\mathcal{C}$ until it samples a color $c$ that is unused, that is $c\in \mathcal{C}\setminus \overline{P(u)}\cup \overline{P(v)}$. Observe that at any time, for any edge, $\card{C\setminus \overline{P(u)}\cup \overline{P(v)}}\geq \epsilon\Delta$. Thus, in expectation, every edge $(u,v)$ samples $O(1/\epsilon)$ colors from $\mathcal{C}$ till it finds a color that is available. Finally, if $(u,v)$ samples a potential color $c'$, then in $O(1)$ expected time, one can determine if it is used or not by checking hash tables $\overline{P(u)}$ and $\overline{P(v)}$. Thus, the expected runtime of this algorithm is $O(m/\epsilon)$. 

Note that the above algorithm is precisely the randomized greedy algorithm, but with a larger palette. Every edge $(u,v)$ is colored by a color which is sampled uniformly from its set of available colors. We now adapt this to our setting. In what follows, all the parameters we use correspond to the parameters of the random-order online setting in \Cref{def:constants}. We will first give a bound on the expected runtime, and finally show that a $O_{\eps}(m)$ bound can also be attained with high probability. 

For our purpose, we need to show that if
$\card{\mathcal{C}}=(1+\eps)\Delta$, Then, for any edge $(u,v)$, when it is considered, we are choosing its color from a palette which is large. In order to do this, we will simulate the random-order online setting as follows. We introduce an initialization step. In this, every edge $e\in E$ decides uniformly and independently which one of the $b$ phases it will participate in. This takes time $O(m)$. Then we proceed to color the edges according to the implementation of \cite{BhattacharyaCPS24}. However, we consider the edges in the order of 1) their phase, and 2) within each phase, in the increasing order of their indices. We call this ordering $\sigma$. Observe that $\sigma$ together with choice of phases, defines phase partition counters $\set{\phi_i}_{i\in [m]}$. Our goal is to show that every edge is choosing a color from a palette of size $2^{-O(1/\epsilon^2)}\Delta$. Note that similar to \Cref{lem:goodordering}, one can prove that $\set{\phi_i}_{i\in [m]}$ balanced with respect to $\sigma$. This implies by \Cref{prop:boundonepsilonrandom}, that $\phi$ is $\zeta$-error controlled. Since the graph $G$ satisfies the premise of \Cref{lem:coloringwell-behaved}, the coloring induced by the algorithm is well-behaved. Consequently, the premise of \Cref{lem:mainlemma} is satisfied and therefore it holds and we can argue that \Cref{cor:noedgeuncolored} holds. Thus, we have for every edge $(u,v)$, $\card{\mathcal{C}\setminus \overline{P(u)}\cup \overline{P(v)}}\geq 2^{-O(1/\epsilon^2)}\Delta$ when $(u,v)$ is considered. Consequently, we can now argue that the randomized greedy algorithm with a palette $\mathcal{C}$ such that $\card{\mathcal{C}}=(1+\epsilon)\Delta$ is able to properly edge color a graph with high probability, and with an expected runtime of $O(m\cdot 2^{O(1/\epsilon^2)})$. This argument is exactly as outlined above.

To give a high probability bound, observe that the 
runtime of the algorithm is dominated by the total number of ``attempted" samplings by each edge.
The sampling process of every edge $e_i=(u,v)$ until it hits a color in $\mathcal{C}\setminus \overline{P(u)}\cup \overline{P(v)}$ can be modelled as a geometric random variable $X_i$ with mean $2^{-8/\epsilon^2}$ (by \Cref{cor:noedgeuncolored}). That is, $X_i$ is the number of colors sampled from $\mathcal{C}$ until a color from $\mathcal{C}\setminus \overline{P(u)}\cup \overline{P(v)}$ is sampled. Thus, we have,
\begin{align*}
    \expect{\sum_{i=1}^m X_i}=m\cdot 2^{8/\epsilon^2}.
\end{align*}
Thus, we have, by \Cref{lem:concgeometric},
\begin{align*}
    \prob{\sum_{i=1}^m X_i\geq 2\cdot m\cdot 2^{8/\eps^2}}\leq \exp\paren{-2\cdot m}\leq \exp(-2\Delta)\leq \frac{1}{n^2}.
\end{align*}
This shows our claim.
\end{proof}

\subsection{Generalizing the Dense Case}
\label{subsection:densecase}
Here we take a moment to address the restriction that $n = O(\Delta)$ in \Cref{thm:main}. Although this restriction may seem severe, it appears to be more an artefact of our proof than a natural barrier to the result. We briefly highlight a few specific instances of this below. Note that \Cref{lem:coloringwell-behaved} only requires $n \leq 2^{\frac{\Delta}{N}}$, and in fact the only place density is used is in \Cref{prop:boundonepsilondense}. Since no other part of the proof requires the density restriction, improving the error bound would directly improve the result.
Recall that \Cref{prop:deltatomartingales} upper bounded $|\delta^r(v,S)|$ by
\[ \frac{1}{|A^{r}(v)|} \sum_{i \in \timesteps^{\leq r}(v)}Z_{i} ~ + ~ \left|\sum_{\ell = 1}^{r} \frac{1}{|A^{\ell}(v)|}\sum_{i \in \timesteps^{\ell}(v)}D_{i}(S) \right| ~ + ~ \sum_{\ell = 1}^{r} \frac{1}{|A^{\ell}(v)|} \sum_{i \in \widetilde{\timesteps}^{\ell}(v)} \left|p_{i}(S) - \frac{|A^{\ell-1}(v) \cap S|}{|A^{\ell-1}(v)|} \right|.\]
We can observe the following areas for improvement in this bound:
\begin{itemize}
    \item The error arising from the collisions in the first summation is not strictly necessary in the dense case. By updating the palette after every step and directly considering sets of $C$ neighbors of $v$ and all sets of possible sets of free sets $\{F_i(v)\}$ corresponding to those neighbors, we can eliminate this uncontrolled  source of error.
    \item It would suffice to show the bounds under the assumption that no edge has been left uncolored so far, since once an edge has been skipped by the algorithm, we have already failed. This assumption requires \Cref{cor:noedgeuncolored} to be proven by induction in parallel with the main lemma, but allows us to fix both the size of the free set after any given step and the set of vertices being summed over.
\end{itemize}

When we eliminate the errors from collisions and fix the free set sizes, we are left with an expression that is essentially a weighted sum of difference variables and neighbor error terms. We point out one final, key inefficiency in this accounting of our error:
\begin{itemize}
    \item Applying the triangle inequality to the third summation appears to needlessly inflate the total error. This models the scenario in which the error of every neighbor of $v$ with respect to $S$ is in the same direction, which seems unlikely to be the case in reality.
\end{itemize}

 This gives us hope that we might be able to leverage the observation above to improve the bound on $\delta(v,S)$.

\section*{Acknowledgements} 
We would like to thank Jeff Kahn for helpful discussions.
\bibliographystyle{alpha}
\bibliography{references}
\end{document}